\newcommand{\supp}{Supplementary Material}
\newcommand{\blank}[1]{}
\newtheorem{theorem}{Theorem}[section]
\newtheorem{corollary}[theorem]{Corollary}
\newtheorem{definition}[theorem]{Definition}
\newtheorem{lemma}[theorem]{Lemma}
\title{Quantum Generative Training Using R\'enyi Divergences}
\author{
  M\'aria Kieferov\'a \\
  Centre for Quantum Computation and Communication Technology,\\
Centre for Quantum Software and Information,\\
University of Technology Sydney,\\
  \texttt{maria.kieferova@uts.edu.au} \\
\And
  Carlos Ortiz Marrero \\
  Data Sciences \& Analytics Group/Department of Electrical \& Computer Engineering \\
  Pacific Northwest National Laboratory/North Carolina State University\\
  \texttt{carlos.ortizmarrero@pnnl.gov} \\
  %% examples of more authors

   \And
  Nathan Wiebe \\
  Department of Computer Science/Department of Physics/High Performance Computing Group\\
  University of Toronto/University of Washington/Pacific Northwest National Laboratory\\
  \texttt{nawiebe@cs.toronto.edu} \\
}
\begin{document}

\maketitle

\begin{abstract}
Quantum neural networks (QNNs) are a framework for creating quantum algorithms that promises to combine the speedups of quantum computation with the widespread successes of machine learning. A major challenge in QNN development is a  concentration of measure phenomenon known as a barren plateau that leads to exponentially small gradients for a range of QNNs models. In this work, we examine the assumptions that give rise to barren plateaus and show that an unbounded loss function can circumvent the existing no-go results. We propose a training algorithm that minimizes the maximal R\'enyi divergence of order two and present techniques for gradient computation. We compute the closed form of the gradients for Unitary QNNs and Quantum Boltzmann Machines and provide sufficient conditions for the absence of barren plateaus in these models. We demonstrate our approach in two use cases: thermal state learning and Hamiltonian learning. In our numerical experiments, we observed rapid convergence of our training loss function and frequently archived a $99\%$ average fidelity in fewer than $100$ epochs.

\end{abstract}

\section*{Introduction}
The belief that quantum neural networks (QNN) can provide richer models for datasets than classical models has led to a lot of optimism about quantum computing transforming machine learning and artificial intelligence. Besides a range of advances, research over the last decade uncovered a major challenge to QNNs in form of gradient decay. This phenomenon, commonly referred to as "barren plateau", emerges at the very beginning of learning for the vast majority of models.
Specifically, QNNs exhibit a concentration of measure leading to exponentially small gradients (in the number of neurons) of the most common training loss functions. Barren plateaus are know to emerge in deep networks~\cite{mcclean2018barren}, shallow networks with global loss functions~\cite{cerezo2020cost}, as well as noisy quantum circuits~\cite{wang2020noise}. Further, recent work has shown that entanglement between hidden and visible layers within a quantum deep neural network can cause exponentially small gradients~\cite{marrero2020entanglement}.  Thus, the performance of QNNs becomes no better than random guessing with probability that is exponentially close to one over the number of hidden neurons. Even more, a recent result showed that under common assumptions barren plateaus arise whenever the QNN is highly expressible~\cite{holmes2021connecting} and they cannot be alleviated using gradient-free methods~\cite{arrasmith2020effect} or by using higher-derivatives~\cite{cerezo2021higher}. These results paint a bleak picture of the future of quantum machine learning and finding an approach for scalable training of generic QNNs is a central problem in quantum machine learning.

Here we give a novel approach to QNN training that circumvents known barren plateau results by violating one of the key assumptions. We empirically show that our algorithm does not suffer from gradient decay in the form of either entanglement induced barren plateaus, or their more conventional brethren, and we give sufficient criteria for avoiding barren plateaus. 
%Our approach to solve the problem involves introducing a new step in the process which we call quantum generative pre-training.  
%The motivation for this approach comes from training Boltzmann Machines by minimizing the KL-divergence.
In training, we learn the weights of QNNs by minimizing a loss function that qualifies how the quantum states generated by the neural network differ from the data. Existing algorithms almost exclusively utilize a linear bounded operator as a loss function. This condition is quite reasonable because the loss function is typically estimated by measuring the expectation values of Hermitian operators.
Our training algorithm minimizes a maximal R\'enyi divergence (akin to the KL-divergence) which upper bounds the quantum analog of the KL-divergence between two quantum states. The choice of this loss function implies that the standard arguments for barren plateau theorems do not apply because the R\'enyi-divergences  experience a logarithmic divergence when the two states are nearly orthogonal.   This causes the gradients of the divergence between nearly orthogonal quantum states to be large and thereby provides a workaround for all known barren plateau results. 

In our work, we derive a closed-form expression for the gradients and gave sufficient conditions for avoiding barren plateaus. Moreover, we observed an {\bf absence of gradient decay or barren plateaus} on all the learning tasks we performed. We implemented all our tools and experiments in Python, relying heavily on the QuTiP library \cite{johansson2012qutip}. Our code is available at \texttt{https://github.com/pnnl/renyiqnets}.

%In \cite{nadal2011statistical} it was shown that

\section{Preliminaries}
QNNs are natural generalizations of classical neural networks.  While there are many ways of thinking about the correspondence between ordinary (classical) neural networks and QNNs, we take the following correspondence in this work.  For simplicity, let us assume that we have a machine learning task where the training data is represented by $n_v$ binary features.  The training set can then be thought of as a probability distribution $P_{\rm train} \in \mathbb{R}^{2^{n_v}}$ such that $\sum_{\vec{x}} P(\vec{x}) =1$.  Here we generalize this problem to consider a training set that is a quantum distribution over the $n_v$ quantum bits, $\rho_{\rm train} \in \mathbb{C}^{2^{n_v} \times 2^{n_v}}$ such that ${\rm Tr}(\rho_{\rm train}) =1$ and $\rho_{\rm train}$ is positive semi-definite.  Using this formulation, classical data is a special case of quantum data where the state operator $\rho_{\rm train}$, also called a density matrix, is diagonal.

From this perspective, we can  think of a neural network as a parameterizable stochastic map such that if $n_h$ represents a number of hidden bits for a binary neural network and $w$ is a matrix of weights then $f: (w) \mapsto P_{v,h} \in \mathbb{C}^{2^{n_v+n_h}}$ such that $P_{v,h}(w)$ is a joint probability distribution on the visible and hidden units of the neural network.  Correspondingly, we take a QNN to be a quantum channel $\Lambda(w)$ such that $\Lambda: (w,\ket{0}\!\bra{0}) \mapsto \sigma \in \mathbb{C}^{2^{n_v+n_h} \times 2^{n_v + n_h}}$ such that ${\rm Tr}(\sigma)=1$ and $\sigma \succeq 0$. 
The QNN again reduces to the case of a classical neural network in the case where  $\sigma$ is diagonal for all $w$, and hence can be interpreted as a probability distribution for all weights $w$.  %Note here that we can take both $f$ and $\Lambda$ to be linear functions of their respective distributions $Q_v$ and $\sigma_v$ since linearity on distributions does not imply either operation invokes a linear transformation on their underlying quantum or classical bits.

Lastly we consider optimizing the training loss function, which we can imagine to be a functional of the density operator $\sigma_v$. 
Specifically, if we denote the loss function to be $\mathcal{L}: \mathbb{C}^{2^{n_v + n_h}\times 2^{n_v + n_h}} \mapsto \mathbb{R}$ then the goal of training a QNN is to find
\begin{equation}
    w^* = {\rm argmin}_w\mathcal{L}(\Lambda(w,\rho_0)).
\end{equation}
A common loss function for supervised training is the prediction loss given by $\mathcal{L}(\sigma_v ) =  {\rm Tr} (\sigma_v \sum_{x \in S_{\rm train}} \ket{x}\!\bra{x} \otimes \ket{\ell(x)}\!{\bra{\ell(x)}} \otimes I)$, where $S_{\rm train}$ is the training dataset and $\ell(x)$ is the label function for the training data.  For generative tasks, the quantum loss function between the training distribution $\rho_{\rm train}$ and the generated distribution on the visible units $\sigma_v$ is often chosen as the quantum relative entropy $\mathcal{L}(\rho) = S(\rho_{\rm train}|\sigma_v) = {\rm Tr}(\rho_{\rm train} \log (\rho_{\rm train})) - {\rm Tr}(\rho_{\rm train} \log (\sigma_v)$, which is the quantum analogue of the KL divergence between two distributions and in turn is a central concept for our notion about how to generatively train QNNs. Occasionally, only a subset of the qubits called "visible" qubits would serve as an output of the QNN and the remaining, "hidden" qubits would be treated as ancillary. That is, the output would be $\sigma_v = {\rm Tr}_h[\sigma]$ where we traced out over hidden qubits.

\subsection{Unitary Quantum Neural Networks}
The first model we consider is the unitary QNN, which consists of a sequence of parameterized operator exponentials.  This approach is arguably the most popular for implementing feed-forward neural networks; however, it is known to suffer problems from barren plateaus.  Our aim is to show that we can switch to a loss function that is appropriate for generative training.  Specifically, we will assume, without loss of generality, a set of operators $\{H_1,\ldots,H_N\}$ such that each $H_j$ is both Unitary and Hermitian and that the state yielded by the QNN is in $\mathbb{C}^{2^{n_v+n_h} \times 2^{n_v+n_h}}$ where $n_v$ and $n_h$ are the number of visible and hidden neurons, respectively, and
\begin{equation}
    \sigma(\theta):= \prod_{j=1}^N e^{-iH_j \theta_j} \ket{0}\!\bra{0} \prod_{j=N}^1 e^{iH_j \theta_j} \label{state_prep}
\end{equation}
as depicted in Fig.~\ref{fig:uqnn}.  Here the trace over the hidden subsystem corresponds to disregarding the values of the neurons in direct analogy to the way hidden layers are disregarded in classical neural networks~\cite{bishop2006pattern}. This approach has been used successfully in a host of quantum machine learning results~\cite{farhi2018classification,schuld2020circuit} and similar ansatzes have been used in quantum chemistry simulation~\cite{evangelista2019exact}.

\begin{figure}
    \centering
    \begin{subfigure}[b]{0.6\textwidth}
    \tiny{
    $$\Qcircuit @C=0.8em @R=0.8em {
\lstick{\ket{0}_v}&\qw&\multigate{1}{U(\theta_1)}&\qw&\multigate{1}{U(\theta_8)}&\qw&\dots&&\qw                &\multigate{1}{U(\theta_{n-3})}&\qw&\meter\\
\lstick{\ket{0}_v}&\qw&\ghost{U(\theta_1)}&\multigate{1}{U(\theta_5)}&\ghost{U(\theta_8)}&\qw&\dots&&\qw       &\ghost{U(\theta_{n-3})}&\qw&\meter\\
\lstick{\ket{0}_v}&\qw&\multigate{1}{U(\theta_2)}&\ghost{U(\theta_5)}&\multigate{1}{U(\theta_9)}&\qw&\dots&&\qw&\multigate{1}{U(\theta_{n-2})}&\qw&\meter\\
\lstick{\ket{0}_v}&\qw&\ghost{U(\theta_2)} &\multigate{1}{U(\theta_6)}&\ghost{U(\theta_9)} &\qw &\dots&&\qw    &\ghost{U(\theta_{n-2})}&\qw&\meter\\
\lstick{\ket{0}_v}&\qw&\multigate{1}{U(\theta_3)}&\ghost{U(\theta_6)}&\multigate{1}{U(\theta_{10})}&\qw&\dots&&\qw&\multigate{1}{U(\theta_{n-1})}&\qw&\meter\\
\lstick{\ket{0}_h}&\qw&\ghost{U(\theta_3)}&\multigate{1}{U(\theta_7)}&\ghost{U(\theta_{10})}&\qw&\dots&&\qw       &\ghost{U(\theta_{n-1})}&\qw\\
\lstick{\ket{0}_h}&\qw&\multigate{1}{U(\theta_4)}&\ghost{U(\theta_7)}&\multigate{1}{U(\theta_{11})}&\qw&\dots&&\qw&\multigate{1}{U(\theta_n)}&\qw\\
\lstick{\ket{0}_h}&\qw&\ghost{U(\theta_4)}&\qw&\ghost{U(\theta_{11})} &\qw&\dots&&\qw                         &\ghost{U(\theta_{n})}&\qw
}$$
}
        \caption{}
        \label{fig:uqnn}
    \end{subfigure} \hspace{10mm}
    \begin{subfigure}[b]{0.3\textwidth}
        \includegraphics[width=0.85\textwidth]{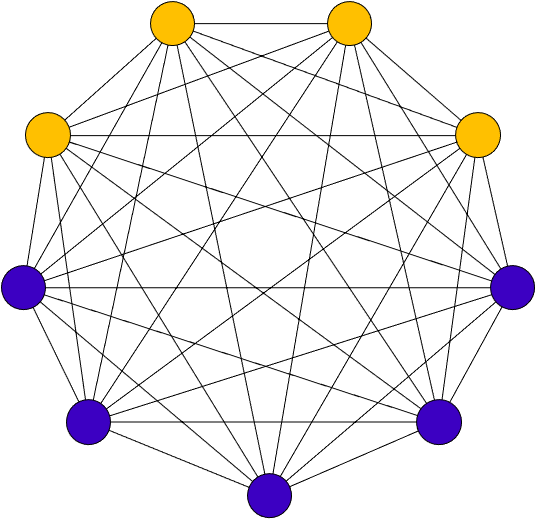}
        \caption{}
        \label{fig:qbm}
    \end{subfigure}
    \caption{Two types of QNNs. a) A Unitary QNN with $5$ visible units and $3$ hidden units. We take, for simplicity, unitaries of the form $U(\theta_j)=e^{-iH_j\theta_j}$. 
    b) A QBM with $6$ visible units and $4$ hidden units on a complete interaction graph.}\label{fig:architectures}
\end{figure}

\subsection{Quantum Boltzmann Machines}
Quantum Boltzman Machines (QBMs) are themselves a family of generative physics inspired QNNs.  They are specified by a Hamiltonian $H(\theta)\in \mathbb{C}^{2^{n_h +n_v}\times 2^{n_h+n_v}}$ such that 
\begin{equation}
\sigma(\theta) = \frac{{\rm Tr}_h \left(e^{-H(\theta)}\right)}{{\rm Tr}\left( e^{-H(\theta)}\right)}.
\end{equation}
This quantum state $e^{-H(\theta)}$ is chosen to be a Gibbs state because such states are the states of maximum von Neumann entropy constrained such that ${\rm Tr}(H(\theta) e^{-H(\theta)}) / {\rm Tr}(e^{-H(\theta)})$ is constant, which is desirable because doing so reduces the risk of overfitting. The Hamiltonian matrix $H$, in the most general case, can have any form.  
The QBM has the advantage of being simple to interpret in quantum settings, natural to implement in quantum annealers~\cite{amin2018quantum}, and easy to compute gradients for~\cite{kieferova2017tomography}. Moreover, it is particularly well suited for Hamiltonian learning problems that utilize thermal input states.

\subsection{Barren Plateaus}
A major challenge facing training QNNs was revealed in the seminal result of~\cite{mcclean2018barren}, wherein it was proven that with high probability the training objective functions most commonly used in quantum machine learning will be exponentially small for all but an exponentially small fraction of the values of $\theta$.  This problem is known as the barren plateau problem in the optimization landscape and we summarize it below.

\begin{definition}[Barren Plateau~\cite{volkoff2021large}]\label{thm:barren}
The cost function $M(\theta)$ exhibits a barren plateau with
respect to $\theta_j$ if it is continuously differentiable on a compact subset $A \subset \Omega $ and if for every $\epsilon > 0$, there exists $0 < b < 1$ such that $P_A(\left|\frac{\partial M}{\partial \theta_j}\right|) \geq \epsilon \in O\left( b^{n_v + n_h}  \right)$, where $P_A$ is the probability measure on $A$ induced by $P$, $\Omega$ is a compact set of parameters equipped with a probability density $P$, and $n_v$, $n_h$ are the numbers of visible and hidden units respectively.
\end{definition}

Barren plateaus are exhibited by a range of QNNs such as randomly initialized poly-depth circuits~\cite{mcclean2018barren}, shallow networks with a global cost function~\cite{cerezo2020cost}, and noisy quantum circuits~\cite{wang2020noise}.
As practical QNNs will likely need to contain thousands of neurons, this result shows that almost all such networks will have gradients that are negligibly small and cannot be trained efficiently even on a quantum computer. Moreover, recent results by Ortiz et al~\cite{marrero2020entanglement} have shown that the barren plateau problem can be even more significant for deep QNNs.  Specifically, entanglement between the visible and hidden layers in a QNN can cause the gradient to become exponentially small.  Specifically, they show that with probability at least $1-2^{-\Omega(n_h-n_v)}$ that the gradient will be in $O(\|\widehat{M}\| \sqrt{2^{n_v-n_h}})$ if the states satisfy a volume law for entanglement (which is typical for random quantum states) where $M(\theta)={\rm Tr}\left(\widehat{M}\sigma(\theta)\right)$ and $\widehat{M}$ is a Hermitian operator.  This shows that even if the typicality arguments of the McClean et al.~\cite{mcclean2018barren} result do not hold, then entanglement between the hidden and visible layers can still destroy the ability to train the network with respect to the bounded loss function.  Additional results show that even noise can induce a similar form of a barren plateau and thus any local optimization cannot be expected to be efficient in general for QNNs that satisfy these requirements. There are several approaches to overcoming barren plateaus that are either based on empirical evidence~\cite{grant2019initialization, skolik2021layerwise} or constrained to a specific architecture~\cite{pesah2020absence, cerezo2020cost, sharma2020trainability}  but there is no generic way of training that is guaranteed to avoid these no-go results.

The central point of this work is to show that such considerations do not appear when one transitions to unbounded loss functions, such as the quantum relative entropy, which are used in generative training.  We demonstrate how to perform this form of generative training for  Unitary QNNs and QBM and demonstrate the absence of barren plateaus.  Even if one is only interested in the performance of the quantum algorithm for discriminative tasks, our work suggests that there might be benefits in the form of initializing a QNN with a round of generative pre-training to ensure that the system does not get lost in a barren plateau.

\section{Generative training using R\'eyni entropy}
Perhaps the earliest work on the use of unitary QNNs is given in~\cite{romero2017quantum}.  The idea of this work was to construct a unitary QNN and then variationally optimize the overlap between the output state and a target state through a method such as the swap test.  The objective function for such a learning task is~$\rm{Tr}(\rho \sigma)$, where $\sigma$ is the output of the QNN and $\rho$ is the quantum training distribution (which is, without loss of generality, a mixed state).  This, unfortunately, runs afoul of the barren plateau problem raised in Definition~\ref{thm:barren}.

In our work, we suggest the use of a generalization of quantum relative entropy known as quantum R\'enyi divergence or "sandwiched" R\'enyi relative entropy ~\cite{wilde2014strong, muller2013quantum}. 
For two quantum states $\rho$ and $\sigma$, the quantum R\'enyi divergence $D_\alpha$ takes the form
$D_\alpha(\rho\|\sigma) = \frac{1}{\alpha-1} \log{\left[{\rm Tr} \left( \sigma^{\frac{1-\alpha}{2\alpha}}\rho \sigma^{\frac{1-\alpha}{2\alpha}} \right)^{\alpha} \right] }$ for $\alpha\in \left[0,\infty\right) \backslash \{1\}$. Quantum R\'enyi divergence inherits many of the mathematical properties of the R\'enyi divergence and in the case where $\alpha\rightarrow 1$ it reduces to the quantum relative entropy.
One can additionally define an upper-bound on the quantum R\'enyi divergence
\begin{equation}
  D_{\alpha}(\rho\|\sigma) \leq \widetilde{D}_\alpha (\rho\|\sigma) = \frac{1}{\alpha-1} \log{{\rm Tr} \left( \sigma^{\frac{1}{2}} \left( \sigma^{\frac{-1}{2}} \rho \sigma^{\frac{-1}{2}} \right)^{\alpha}\sigma^{\frac{1}{2}} \right)}
\end{equation}
where $\widetilde{D}_\alpha$ defines the maximal  R\'enyi divergence. For the purposes of training QNNs, we focus on $\alpha=2$, proposed by Petzl~\cite{petz1986quasi},
\begin{equation}
    \widetilde{D}_2(\rho\|\sigma(\theta)) = \log\left({\rm Tr}\left(\rho^2 \sigma^{-1}\right) \right).
\end{equation}
Here $\rho$ is the training data state and $\sigma(\theta)$  corresponds to the output of the QNN  as a function of the parameters $\theta$. 
The main argument for using $\widetilde{D}_2(\rho\|\sigma)$ as a loss function is that it upper-bounds the quantum relative entropy and its gradients are considerably simpler than that of the ordinary R\'enyi divergence and quantum relative entropy.  

We also considered the divergence with reversed arguments,
\begin{equation}
    \widetilde{D}_{2}(\sigma(\theta)|\rho) = \log\left({\rm Tr}\left(\sigma^2 \rho^{-1}\right) \right).
\end{equation}
Note that in general, $\widetilde{D}_{2}(\rho|\sigma(\theta)) \neq \widetilde{D}_{2}(\sigma(\theta)|\rho)$. However, if both $\rho$ and $\sigma(\theta)$ are full rank, $\widetilde{D}_{2}(\rho|\sigma(\theta)) = \widetilde{D}_{2}(\sigma(\theta)|\rho) = 0$ if and only if $\rho=\sigma$~\cite{renyi}.

Furthermore, these divergences have gradients that are straight forward to compute.  We give these gradients below and provide proof of their validity in the supplemental material.

\begin{restatable}[Unitary quantum neural network $\tilde{D}_2$  gradient, proof in \supp~\ref{sec:proof_qnn}]
{thm}{unitarygradient} \label{thm:QNN} 
 Let $\sigma_v = Tr_h\left[ \Pi_{j=1}^N e^{-iH_j\theta_j} \ket{0}\!\!\bra{0} \Pi_{j=N}^1 e^{iH_j\theta_j} \right]$ be the reduced state on the visible subsystem $\mathcal{H}_v$. Then the gradients of the maximal R\'enyi divergence between $\rho$ and $\sigma_v$ takes the form
\begin{equation}
    \partial_{\theta} \widetilde{D}_2(\rho\|\sigma_v(\theta)) = \frac{i {\rm Tr}\left(\rho^2 \sigma_v^{-1} {\rm Tr}_h(
    [\widetilde{H}_k, \sigma]
    )\sigma_v^{-1}\right)}{{\rm Tr}\left(\rho^2 \sigma_v^{-1}\right)}. \label{gradient_uni_rho}
\end{equation}
where $\widetilde{H}_k = \prod_{j=1}^{k-1} e^{-iH_j \theta_j} H_k \prod_{j=k-1}^1 e^{iH_j \theta_j}$. Similarly, the gradient of the reverse divergence $ \widetilde{D}_2(\sigma_v|\rho)$ can be expressed as
\begin{equation}
    \partial_{\theta_k} \widetilde{D}_2(\sigma_v\|\rho) = \frac{-i{\rm Tr}\left(\left\{{\rm Tr_h}({[\widetilde{H}_k, \sigma]}),\sigma_v\right\} \rho^{-1}\right)}{{\rm Tr}\left(\sigma_v^2 \rho^{-1}\right)}. \label{gradient_uni_sigma}
\end{equation} 
\end{restatable}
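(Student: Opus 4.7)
The plan is to reduce both identities to a single derivative computation, namely $\partial_{\theta_k}\sigma$, and then apply standard matrix-calculus identities for the logarithm and the inverse.

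First I would compute $\partial_{\theta_k}\sigma$ directly from the product representation in Eq.~\eqref{state_prep}. Writing $V_k = \prod_{j=1}^{k-1} e^{-iH_j\theta_j}$ and letting $\widetilde H_k = V_k H_k V_k^{-1}$, the product rule gives
\begin{equation}
\partial_{\theta_k}\!\left(\prod_{j=1}^N e^{-iH_j\theta_j}\right) = -i\,\widetilde H_k \prod_{j=1}^N e^{-iH_j\theta_j},
\end{equation}
and the analogous computation on the right factor contributes a $+i\widetilde H_k$ on the right. Hence
\begin{equation}
\partial_{\theta_k}\sigma = -i[\widetilde H_k,\sigma],
\qquad
\partial_{\theta_k}\sigma_v = -i\,\mathrm{Tr}_h\!\left([\widetilde H_k,\sigma]\right),
\end{equation}
using that the partial trace commutes with the (parameter-independent) derivative. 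This is the one non-routine step but it is just a bookkeeping exercise since all the $e^{-iH_j\theta_j}$ for $j\neq k$ commute with $\partial_{\theta_k}$ and cancel between $V_k$ and $V_k^{-1}$ on the inside.

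For the first identity, I would apply the chain rule to $\widetilde D_2(\rho\|\sigma_v)=\log\mathrm{Tr}(\rho^2\sigma_v^{-1})$, obtaining
\begin{equation}
\partial_{\theta_k}\widetilde D_2(\rho\|\sigma_v) = \frac{\mathrm{Tr}\!\left(\rho^2\,\partial_{\theta_k}\sigma_v^{-1}\right)}{\mathrm{Tr}(\rho^2\sigma_v^{-1})},
\end{equation}
and then use the standard identity $\partial_{\theta_k}\sigma_v^{-1} = -\sigma_v^{-1}(\partial_{\theta_k}\sigma_v)\sigma_v^{-1}$ together with the expression for $\partial_{\theta_k}\sigma_v$ derived above. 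The two minus signs combine with the $-i$ to yield exactly the $+i$ in Eq.~\eqref{gradient_uni_rho}.

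For the reverse divergence $\widetilde D_2(\sigma_v\|\rho)=\log\mathrm{Tr}(\sigma_v^2\rho^{-1})$, the dependence on $\theta$ is now in the numerator $\sigma_v^2$. Here I would apply the product rule $\partial_{\theta_k}\sigma_v^2 = \{\partial_{\theta_k}\sigma_v,\sigma_v\}$ (where $\{\cdot,\cdot\}$ is the anticommutator), and again substitute the expression for $\partial_{\theta_k}\sigma_v$; the cyclicity of the trace lets us factor out $\rho^{-1}$ on the right and gives Eq.~\eqref{gradient_uni_sigma}. The only subtlety worth flagging is that $\rho$ must be full rank for $\rho^{-1}$ to be well-defined, which matches the hypothesis stated just before the theorem; I do not expect any genuine obstacle beyond tracking signs and the conjugation $V_k H_k V_k^{-1}$ that produces $\widetilde H_k$.
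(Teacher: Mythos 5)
Your proposal is correct and follows essentially the same route as the paper's proof: compute $\partial_{\theta_k}\sigma=-i[\widetilde H_k,\sigma]$ by conjugating $H_k$ through the preceding exponentials, push the derivative through the partial trace, and then apply the chain rule with $\partial_{\theta_k}\sigma_v^{-1}=-\sigma_v^{-1}(\partial_{\theta_k}\sigma_v)\sigma_v^{-1}$ for the forward divergence and the anticommutator product rule for the reverse. The sign bookkeeping you describe matches the stated formulas, so there is nothing to add.
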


\begin{restatable}[Quantum Boltzmann machine $\tilde{D}_2$ gradient, proof in \supp~\ref{sec:proof_qbm} ]{thm}{qbmgradient} \label{thm:boltz} 
\ \\ Let $\sigma_v(\theta)=Tr_h[e^{-H}/Tr[e^{-H}]]$ be a state on the visible units of a QBM and let ${\rm Ad}_H(A)= [H,A]$ be the adjoint endomorphism for some matrix $H$. The gradient of the objective function $\widetilde{D}_2(\rho\|\sigma_v(\theta)) $ are,

\begin{equation}
    \partial_{\theta} \widetilde{D}_2(\rho\|\sigma_v(\theta)) 
     =\sum_{p=0}^\infty\frac{ {\rm Tr}\left(\rho^2 \sigma_v^{-1}(\theta) ({\rm Tr_h}\left({\rm Ad}^p_{-H}(\partial_\theta H)e^{-H}\right)\sigma_v^{-1}(\theta)\right)}{{\rm Tr}\left(\rho^2 \sigma_v^{-1}\right){\rm Tr}\left( e^{-H(\theta)}\right)(p+1)!}-\frac{{\rm Tr}\left(\left(\partial_{\theta} H(\theta)\right)e^{-H(\theta)} \right)}{{\rm Tr}\left( e^{-H(\theta)}\right)}
\end{equation}

and for the reverse divergence,

\begin{equation}
    \partial_{\theta} \widetilde{D}_2(\sigma_v(\theta)\|\rho) 
    = -\sum_{p=0}^\infty\frac{{\rm Tr}\left({\rm Ad}_{-H}^{p}(\partial_\theta H) e^{-H}(\{\sigma_v(\theta),\rho^{-1}\})\otimes I_h\right)}{{\rm Tr}\left(\sigma_v^2 \rho^{-1} \right){\rm Tr}\left( e^{-H(\theta)}\right)(p+1)!}+2\frac{{\rm Tr}\left(\left(\partial_{\theta} H(\theta)\right)e^{-H(\theta)} \right)}{{\rm Tr}\left( e^{-H(\theta)}\right)}
    \label{gradient_bm}
\end{equation}
\end{restatable}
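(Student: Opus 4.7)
The plan is to differentiate $\widetilde{D}_2$ directly through the logarithm and reduce everything to computing $\partial_\theta \sigma_v$, which in turn reduces to computing $\partial_\theta e^{-H(\theta)}$ when $H$ and $\partial_\theta H$ do not commute. For the forward divergence I start from $\widetilde{D}_2(\rho\|\sigma_v)=\log\mathrm{Tr}(\rho^2\sigma_v^{-1})$, use $\partial_\theta\sigma_v^{-1}=-\sigma_v^{-1}(\partial_\theta\sigma_v)\sigma_v^{-1}$, and write $\sigma_v=\mathrm{Tr}_h(e^{-H})/Z$ with $Z=\mathrm{Tr}(e^{-H})$, so that
\begin{equation}
\partial_\theta\sigma_v = \frac{\mathrm{Tr}_h(\partial_\theta e^{-H})}{Z} - \sigma_v\,\frac{\partial_\theta Z}{Z}.
\end{equation}
The gradient of $\widetilde{D}_2(\rho\|\sigma_v)$ then splits into two pieces: one proportional to $\mathrm{Tr}(\rho^2\sigma_v^{-1}\mathrm{Tr}_h(\partial_\theta e^{-H})\sigma_v^{-1})/\mathrm{Tr}(\rho^2\sigma_v^{-1})$ and a subtractive piece $\partial_\theta Z/Z$ that comes from the partition function.

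The main technical step is handling $\partial_\theta e^{-H(\theta)}$. I would use the Duhamel identity
\begin{equation}
\partial_\theta e^{-H} = -\int_0^1 e^{-sH}(\partial_\theta H)e^{-(1-s)H}\,ds = -\int_0^1 \Bigl(e^{-sH}(\partial_\theta H)e^{sH}\Bigr)e^{-H}\,ds,
\end{equation}
then expand the inner conjugation as the convergent series $e^{-sH}(\partial_\theta H)e^{sH}=\sum_{p\ge 0}\frac{s^p}{p!}\mathrm{Ad}_{-H}^p(\partial_\theta H)$ using $\mathrm{Ad}_{-H}(X)=[-H,X]$. Performing the elementary integral $\int_0^1 s^p\,ds=1/(p+1)$ collapses this into
\begin{equation}
\partial_\theta e^{-H} = -\sum_{p=0}^\infty \frac{1}{(p+1)!}\,\mathrm{Ad}_{-H}^p(\partial_\theta H)\,e^{-H},
\end{equation}
which produces exactly the factor $1/(p+1)!$ appearing in the claim. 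For $\partial_\theta Z$ all $p\ge 1$ terms vanish because the trace of any commutator is zero, leaving $\partial_\theta Z = -\mathrm{Tr}(\partial_\theta H\,e^{-H})$. Substituting both expressions into the formula for $\partial_\theta\widetilde{D}_2(\rho\|\sigma_v)$ and simplifying the $\sigma_v\cdot\sigma_v^{-1}$ cancellation in the $Z$-derivative term yields the stated identity.

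For the reverse divergence I proceed analogously, starting from $\partial_\theta\widetilde{D}_2(\sigma_v\|\rho)=\mathrm{Tr}(\partial_\theta(\sigma_v^2)\rho^{-1})/\mathrm{Tr}(\sigma_v^2\rho^{-1})$ and using $\partial_\theta(\sigma_v^2)=\{\partial_\theta\sigma_v,\sigma_v\}$ so that the trace becomes $\mathrm{Tr}((\partial_\theta\sigma_v)\{\sigma_v,\rho^{-1}\})$. The partial trace is pushed into the full trace via the adjoint identity $\mathrm{Tr}(\mathrm{Tr}_h(X)Y)=\mathrm{Tr}(X(Y\otimes I_h))$, which is where the factor $\{\sigma_v,\rho^{-1}\}\otimes I_h$ in Eq.~\eqref{gradient_bm} comes from. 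The $Z$-derivative piece picks up the factor of $2$ because $\mathrm{Tr}(\sigma_v\{\sigma_v,\rho^{-1}\})=2\,\mathrm{Tr}(\sigma_v^2\rho^{-1})$, leaving the subtractive term $2\mathrm{Tr}(\partial_\theta H\,e^{-H})/Z$.

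The hard part is the series expansion for $\partial_\theta e^{-H}$: this is the only nontrivial manipulation, and the entire structure of the formulas — in particular the sum over $p$ with the $(p+1)!$ denominator and the iterated adjoint $\mathrm{Ad}_{-H}^p$ — is a direct consequence of that identity. Everything else is a careful but routine chain-rule calculation, together with the elementary trace identities $\mathrm{Tr}([H,\cdot])=0$ and $\mathrm{Tr}(\mathrm{Tr}_h(X)Y)=\mathrm{Tr}(X(Y\otimes I_h))$. Convergence of the series is not a concern since $H$ is a finite-dimensional Hermitian operator, so $\mathrm{Ad}_{-H}$ is a bounded linear map on matrices and the series inherits the absolute convergence of the exponential.
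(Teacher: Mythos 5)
Your proposal is correct and follows essentially the same route as the paper's proof: the quotient rule for $\sigma_v=\mathrm{Tr}_h(e^{-H})/\mathrm{Tr}(e^{-H})$, Duhamel's integral formula for $\partial_\theta e^{-H}$, Hadamard's lemma to expand the conjugation into iterated commutators, and the integral $\int_0^1 s^p\,ds=1/(p+1)$ producing the $(p+1)!$ denominators, with the anticommutator and partial-trace identities handling the reverse divergence exactly as in the paper. The only cosmetic difference is that you carry the Hadamard expansion through explicitly for the forward divergence, where the paper stops at the integral form and asserts the simplification.
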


Our quantum algorithm for evaluating the terms in these gradient expressions is given below.

\begin{restatable}[Extended Swap test, proof in \supp~\ref{sec:proof_swap}]{thm}{swapamard}\label{thm:swap_trick}
\ \\ Let $\rho_1,\ldots, \rho_n$ be $m$-qubit densitry operators and let $U_1,\ldots, U_n$ be unitary operations on $m$ qubits.  There exists a unitary operation $\chi$ consisting of two Hadamard gates, $n$ controlled $U_j$ operations and $nm$ controlled swap gates such that for any state of the form $\rho_1 \otimes\cdots\otimes \rho_n$,
\begin{equation}
    {\rm Tr}\left(\left(\ket{0}\!\bra{0} \otimes I\right)\chi \left( \ket{0}\!\bra{0} \otimes \rho_1 \otimes \dots \otimes \rho_n \right) \chi^{\dagger} \right)= \frac{1+ Re[{\rm Tr} (\prod_{i=1}^n U_i \rho_i)]}{2}.
\end{equation}

\end{restatable}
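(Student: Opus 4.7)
The plan is to treat this as a generalization of the standard Hadamard/swap test for ${\rm Tr}(\rho_1\rho_2)$, replacing the lone controlled-swap by a controlled operation that simultaneously cyclically permutes the $n$ registers and applies the unitaries $U_1,\ldots,U_n$. Concretely, I would take $\chi = (H\otimes I)\,C_W\,(H\otimes I)$, where $H$ acts on the ancilla qubit and $C_W = \ket{0}\!\bra{0}\otimes I + \ket{1}\!\bra{1}\otimes W$, with $W = P\cdot(U_1\otimes U_2\otimes\cdots\otimes U_n)$ and $P$ the unitary implementing the cyclic permutation of the $n$ registers of $m$ qubits each.

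The technical heart of the argument is the cyclic-trace identity
\begin{equation}
    {\rm Tr}\!\left(P\,A_1\otimes A_2\otimes\cdots\otimes A_n\right) = {\rm Tr}(A_1 A_2\cdots A_n),
\end{equation}
which extends the familiar $n=2$ swap identity and follows by expanding both sides in the computational basis; the cycle structure of $P$ pairs up matrix indices exactly as in an iterated matrix product. Specializing to $A_j = U_j\rho_j$ gives ${\rm Tr}(W(\rho_1\otimes\cdots\otimes\rho_n)) = {\rm Tr}(\prod_{i=1}^n U_i\rho_i)$, which is the quantity we wish to extract.

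Once this identity is in hand the rest is the textbook swap-test calculation. After the first Hadamard on the ancilla and application of $C_W$, the state becomes $\tfrac{1}{2}\sum_{a,b\in\{0,1\}}\ket{a}\!\bra{b}\otimes W^a \rho\,(W^b)^\dagger$, with $\rho=\rho_1\otimes\cdots\otimes\rho_n$ and $W^0:=I$. Conjugating by the second Hadamard and taking the trace against $\ket{0}\!\bra{0}\otimes I$ is equivalent to tracing against $\ket{+}\!\bra{+}\otimes I$ before the final Hadamard, which yields $\tfrac{1}{4}\bigl(2 + {\rm Tr}(W\rho) + \overline{{\rm Tr}(W\rho)}\bigr) = \tfrac{1}{2}\bigl(1 + \mathrm{Re}[{\rm Tr}(W\rho)]\bigr)$, matching the claim.

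Finally, to match the stated gate counts I would decompose $P$ as a sequence of at most $n$ swaps of neighbouring $m$-qubit registers, each of which becomes $m$ controlled-swap gates once the whole operation is made controlled on the ancilla, giving at most $nm$ controlled swaps; the $U_j$ enter as $n$ controlled unitaries; together with the two Hadamards this reproduces the circuit description in the statement. The main obstacle is pinning down the cyclic-trace identity with the right convention for $P$ so that ${\rm Tr}(W\rho)$ comes out as $\prod_i U_i \rho_i$ in the desired order rather than a reverse or cyclically re-indexed product; once that bookkeeping is settled, everything else reduces to the standard Hadamard-test computation.
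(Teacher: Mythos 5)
Your proposal is correct and follows essentially the same route as the paper: a controlled cyclic permutation composed with controlled $U_j$'s inside a Hadamard test, with the key step being the cyclic-trace identity ${\rm Tr}(S_+\, A_1\otimes\cdots\otimes A_n) = {\rm Tr}(A_1\cdots A_n)$, which the paper verifies by the same computational-basis expansion you describe. The only cosmetic difference is that you bundle the controlled unitaries and the controlled permutation into a single controlled-$W$, whereas the paper keeps them as separate gates in the circuit.
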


One can estimate the above gradient by computing the numerator and the denominator separately and then propagating the error in both estimates. For unitary QNNs and QBMs, one can estimate individual terms in the gradient by sampling utilizing Theorem~\ref{thm:swap_trick} and the linearity of the trace. We focus on the reverse divergence here because it is often easier to implement $\rho^{-1}$ than $\sigma^{-1}$ which changes throughout the gradient descent.  

In order to prepare such states, we assume access to an  unitary quantum channel preparing a purification of $\rho^{-1}$ through an oracle such that 
\begin{equation}
    {\rm Tr}_{\rm anc}(\mathcal{O}_{\rho^{-1}} \ket{0}\otimes \ket{0}_{\rm anc}) = \rho^{-1}/\rm{Tr}(\rho^{-1}).
\end{equation}
 We also assume oracular access to the elementary Hamiltonians terms $H_j$ that we require to be both hermitian and unitary.  In particular, let us assume a unitary and invertible quantum oracle
 \begin{equation}
   \mathcal{O}_H:\ket{j}\ket{\psi} \mapsto \ket{j} H_j \ket{\psi}  
\end{equation}
for any state $\ket{\psi}$.  Lastly, assume an oracle 
 \begin{equation}
 \mathcal{O}_{exp}(t)\ket{j}\ket{\psi} = \ket{j}e^{-iH_j t}\ket{\psi}  
 \end{equation}
where $H_j$s are the elementary Hamiltonians used for constructing a unitary QNN. %In the most common scenario, each $H_j$ would be a tensor product $O(1)$ Pauli operators. If $H_j$ is $k$-local, the cost of applying $H_j$ and $e^{-i H_j \theta}$ would be $O(k)$~\cite{nielsen2002quantum}.  
We further assume that each oracle can be made {singly} controlled at unit cost.  Under these assumptions, we state the complexity of the gradient evaluation below,

\begin{restatable}[Complexity of computing the gradient of reverse divergence for unitary QNNs]{thm}{generic}\label{thm:gradient_generic}
Let $\rho$ be the training data state and $\sigma_v = {\rm Tr}_h \left[ \prod_{j=1}^{N} e^{-iH_j\theta_j} \ket{0}\!\!\bra{0} \prod_{j=N}^{1} e^{iH_j\theta_j} \right]$ be the output of a unitary QNN on visible units.  For any $\epsilon>0$, the number of single-qubit gates, Toffoli gates and queries to the oracles $\mathcal{O}_{\rho^{-1}}$, $\mathcal{O}_U$ and $\mathcal{O}_{exp}$ needed to compute an estimate $\mathcal{E}$ such that $|\mathcal{E} - \partial_{\theta_k} D(\sigma||\rho)| \le \epsilon \le \frac{13}{2{\rm Tr}(\sigma_v^2 \rho^{-1})}$ with probability greater than $2/3$ is in 
\begin{equation*}
    O\left( \frac{N+n_v} {\epsilon   {\rm Tr}(\sigma_v^2 \rho^{-1} /{\rm Tr}(\rho^{-1})) } 
    \right).
\end{equation*}
\end{restatable}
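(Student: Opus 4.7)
The plan is to rewrite the gradient formula of Theorem~\ref{thm:QNN} for the reverse divergence as a ratio $\mathcal{N}/\mathcal{D}$ in which both $\mathcal{N}$ and $\mathcal{D}=\text{Tr}(\sigma_v^2\rho^{-1})$ are a constant-size sum of traces of exactly the form evaluated by the Extended Swap Test of Theorem~\ref{thm:swap_trick}. First I factor $\rho^{-1}=\text{Tr}(\rho^{-1})\,\tilde{\rho}$, where $\tilde{\rho}=\rho^{-1}/\text{Tr}(\rho^{-1})$ is the normalised state prepared by $\mathcal{O}_{\rho^{-1}}$; the common factor $\text{Tr}(\rho^{-1})$ cancels between numerator and denominator, leaving a ratio $\mathcal{N}'/\mathcal{D}'$ with $\mathcal{D}'=\text{Tr}(\sigma_v^2\tilde{\rho})\in[0,1]$. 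Expanding $[\widetilde{H}_k,\sigma]$ and the anticommutator against $\sigma_v$ produces a constant number of trace expressions, each of the form $\text{Re}\,\text{Tr}(U\,\sigma\otimes\sigma_v\otimes\tilde{\rho})$ after reshuffling tensor factors, where $U$ is a tensor product of $\widetilde{H}_k$ with identities; imaginary parts are obtained by the standard phased-Hadamard variant of the swap test.

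Second, I implement the required controlled unitaries for each Extended Swap Test. Preparation of $\sigma$ costs $O(N)$ queries to $\mathcal{O}_{\exp}$; preparation of $\tilde{\rho}$ costs one query to $\mathcal{O}_{\rho^{-1}}$; controlled $\widetilde{H}_k=V_k H_k V_k^\dagger$ with $V_k=\prod_{j<k}e^{-iH_j\theta_j}$ costs $O(N)$ queries to $\mathcal{O}_H$ and $\mathcal{O}_{\exp}$. The $O(n_v)$ controlled swaps and two Hadamards contribute another $O(n_v)$ Toffoli and single-qubit gates, so each use of the swap-test circuit costs $O(N+n_v)$ elementary operations and queries in total. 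To reach the stated $1/\epsilon$ query scaling (rather than the $1/\epsilon^2$ of naive Monte Carlo) I would wrap each Extended Swap Test in amplitude estimation, which achieves additive error $\delta$ in the estimated probability with $O(1/\delta)$ uses of the circuit at success probability $\geq 2/3$; a median boost and a union bound over the $O(1)$ subestimators retain the global success probability above $2/3$.

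The hard part is error propagation through the ratio. For $|\widetilde{\mathcal{N}}'-\mathcal{N}'|,|\widetilde{\mathcal{D}}'-\mathcal{D}'|\le\delta$ one has
\begin{equation}
\left|\frac{\widetilde{\mathcal{N}}'}{\widetilde{\mathcal{D}}'}-\frac{\mathcal{N}'}{\mathcal{D}'}\right|\le \frac{\delta\bigl(1+|\mathcal{N}'|/\mathcal{D}'\bigr)}{\mathcal{D}'-\delta},
\end{equation}
so I would choose $\delta=\Theta(\epsilon\,\mathcal{D}')$. A deterministic bound on $|\mathcal{N}'|$ follows from $\|\sigma_v\|_\infty,\|\widetilde{H}_k\|_\infty\le 1$ and $\|\tilde{\rho}\|_1=1$, and combining this with the hypothesis $\epsilon\le 13/(2\,\text{Tr}(\sigma_v^2\rho^{-1}))$ is precisely what keeps $\mathcal{D}'-\delta$ bounded below and forces the right-hand side of the display to be at most $\epsilon$; the constant $13/2$ is the artefact of this algebra. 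The resulting per-subestimator cost of $O(1/(\epsilon\mathcal{D}'))$ amplitude-estimation queries, each of cost $O(N+n_v)$, yields exactly the stated complexity $O((N+n_v)/(\epsilon\,\text{Tr}(\sigma_v^2\rho^{-1}/\text{Tr}(\rho^{-1}))))$. The main obstacle is tuning these constants: one must simultaneously give a clean a priori bound on $|\mathcal{N}'|$ that does not itself grow with $1/\mathcal{D}'$, and keep the Taylor expansion of $1/(\mathcal{D}'-\delta)$ well-controlled so that the admissible range of $\epsilon$ is exactly the one stated.
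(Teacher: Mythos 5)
Your overall strategy --- estimate the numerator and the denominator of the gradient separately with the extended swap test wrapped in amplitude estimation (worst case $p=1/2$, $k=1$), then propagate the error through the quotient --- is exactly the route the paper takes, and your accounting of the per-circuit cost ($O(N)$ queries to prepare $\sigma$ and to implement the controlled $\widetilde H_k$, one query to $\mathcal{O}_{\rho^{-1}}$ for $\rho^{-1}/{\rm Tr}(\rho^{-1})$, $O(n_v)$ gates for the controlled swaps) and of the normalization factor ${\rm Tr}(\rho^{-1})$ in the final bound also matches the paper's.

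The one genuine gap is in the error propagation. Your display bounds the error of the quotient by $\delta\bigl(1+|\mathcal{N}'|/\mathcal{D}'\bigr)/(\mathcal{D}'-\delta)$, and you then propose to control $|\mathcal{N}'|$ by an absolute constant coming from $\|\sigma_v\|,\|\widetilde H_k\|\le 1$ and $\|\tilde{\rho}\|_1=1$. That is too weak: with $\delta=\Theta(\epsilon\,\mathcal{D}')$ the contribution $\delta\,|\mathcal{N}'|/\mathcal{D}'^2$ is $\Theta(\epsilon\,|\mathcal{N}'|/\mathcal{D}')$, which is not $O(\epsilon)$ when $\mathcal{D}'={\rm Tr}(\sigma_v^2\rho^{-1})/{\rm Tr}(\rho^{-1})$ is small (and generically it is exponentially small); you would be forced to take $\delta=\Theta(\epsilon\,\mathcal{D}'^2)$ and would land at $O\!\left((N+n_v)/(\epsilon\,\mathcal{D}'^2)\right)$ rather than the stated bound. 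What the paper proves, and what your argument needs, is that the numerator is at most a constant multiple of the denominator: rewriting ${\rm Tr}\left(\left\{{\rm Tr_h}([\widetilde H_k,\sigma]),\sigma_v\right\}\rho^{-1}\right)$ as a trace against $[\widetilde H_k,\sigma]$ of $\{\sigma_v,\rho^{-1}\}\otimes I$ and using $\|\widetilde H_k\|=1$ with a von Neumann--type trace inequality gives $|\mathcal{N}|\le 4\,{\rm Tr}(\sigma_v^2\rho^{-1})$, i.e. $1+|\mathcal{N}'|/\mathcal{D}'\le 5$. That is what makes $\delta=\Theta(\epsilon\,\mathcal{D}')$ sufficient, and it is the source of the constant $13$ and of the admissible range $\epsilon\le 13/(2{\rm Tr}(\sigma_v^2\rho^{-1}))$ (which enforces $|\delta_2|\le 1/2$, not the boundedness of the quotient's error). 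You correctly flag the a priori bound on $|\mathcal{N}'|$ as the main obstacle, but the condition you need is that $|\mathcal{N}'|/\mathcal{D}'$ is $O(1)$, not merely that $|\mathcal{N}'|$ is.
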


\section{Absence of Barren Plateaus for R\'enyi 2-divergence}
Above we derived the gradients for both the forward and reverse gradients of the R\'enyi 2-divergence,  but these gradients still might suffer from exponentially small gradients even if they do not satisfy the assumptions of existing barren plateau theorems. Below we provide sufficient conditions for the absence of a barren plateau for this training loss functions.

\begin{lemma}[Proof in \supp~\ref{sec:plateau}]\label{lem:plateau}
Let $U$ be a unitary matrix drawn from a unitarily invariant measure, such as the Haar measure.  Further let $\lambda_i(\cdot)$ denote the $i^{\rm th}$ eigenvalue in a sorted list of eigenvalues of the matrix $(\cdot)$ with $\lambda_0(\cdot)$ being the smallest eigenvalue.  For an all-visible QNN with $n$ neurons with a training distribution $\rho \succ 0$ we then have that
\begin{enumerate}
\item $\mathbb{E}\left(\left(\partial_{\theta_k} \widetilde{D}_2(\rho\|U\sigma U^\dagger) \right)^2\right) \in \Omega\left( \frac{{\rm Tr}^2(\sigma^{-2} (\partial_{\theta_k} \sigma))}{2^{2n} {\rm Tr}^2(\sigma^{-1})}\right)\subseteq \Omega \left({2^{-2n}} \lambda_0(\sigma^{-1} (\partial_{\theta_k} \sigma)^2 \sigma^{-1} \right)$
\item $\mathbb{E}\left(\left(\partial_{\theta_k} \widetilde{D}_2(U\sigma U^\dagger\|\rho ) \right)^2\right)\in \Omega \left(\frac{{\rm Tr}^2(\sigma(\partial_{\theta_k}\sigma ))}{2^{2n}\|\sigma\|^4 }\right)\subseteq \Omega\left( 2^{-2n} \|\sigma\|^{-4}\lambda_0((\partial_{\theta_k} \sigma)^2) \right)$

\end{enumerate}
\end{lemma}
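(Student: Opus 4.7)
The plan is to reduce $\mathbb{E}_U[(\partial_{\theta_k}\widetilde{D}_2)^2]$ to a ratio of explicit Haar moments via Cauchy--Schwarz. Write the gradient as $\partial_{\theta_k}\widetilde{D}_2 = N/D$, where $D>0$ is the argument of the logarithm in $\widetilde{D}_2$ and $N = \partial_{\theta_k} D$. Since $D>0$, the inequality $(\mathbb{E}_U[N])^2 = (\mathbb{E}_U[(N/D)\cdot D])^2 \leq \mathbb{E}_U[(N/D)^2]\cdot\mathbb{E}_U[D^2]$ gives
\begin{equation*}
\mathbb{E}_U\!\left[(\partial_{\theta_k}\widetilde{D}_2)^2\right] \;\geq\; \frac{(\mathbb{E}_U[N])^2}{\mathbb{E}_U[D^2]}.
\end{equation*}
The numerator is then evaluated by the single-copy Haar twirl $\mathbb{E}_U[UMU^\dagger] = {\rm Tr}(M)\,I/d$ with $d=2^n$, and the denominator is upper-bounded by the two-copy Weingarten identity
\begin{equation*}
\mathbb{E}_U\!\left[{\rm Tr}(XUYU^\dagger){\rm Tr}(ZUWU^\dagger)\right] = \alpha\,{\rm Tr}(X){\rm Tr}(Z) + \beta\,{\rm Tr}(XZ),
\end{equation*}
where $\alpha = (d\,{\rm Tr}(Y){\rm Tr}(W) - {\rm Tr}(YW))/(d(d^2-1))$ and $\beta = (d\,{\rm Tr}(YW) - {\rm Tr}(Y){\rm Tr}(W))/(d(d^2-1))$.

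For statement 1, differentiation of $\widetilde{D}_2(\rho\|U\sigma U^\dagger) = \log{\rm Tr}(\rho^2 U\sigma^{-1}U^\dagger)$ (using $\partial_{\theta_k}\sigma^{-1} = -\sigma^{-1}(\partial_{\theta_k}\sigma)\sigma^{-1}$ and the $\theta_k$-independence of $U$) gives $N = -{\rm Tr}(\rho^2 UAU^\dagger)$ with $A := \sigma^{-1}(\partial_{\theta_k}\sigma)\sigma^{-1}$, whence $\mathbb{E}_U[N] = -{\rm Tr}(\rho^2){\rm Tr}(\sigma^{-2}\partial_{\theta_k}\sigma)/d$. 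Weingarten with $X=Z=\rho^2$ and $Y=W=\sigma^{-1}$, combined with the elementary power-sum inequalities ${\rm Tr}(\rho^4)\leq{\rm Tr}^2(\rho^2)$ and ${\rm Tr}(\sigma^{-2})\leq{\rm Tr}^2(\sigma^{-1})$ (both instances of $\sum_i\lambda_i^{2k}\leq(\sum_i\lambda_i^k)^2$ for nonnegative $\lambda_i$), bounds $\mathbb{E}_U[D^2]\in O({\rm Tr}^2(\rho^2){\rm Tr}^2(\sigma^{-1})/d^2)$. Dividing cancels the common ${\rm Tr}^2(\rho^2)/d^2$ factors and delivers the required $\Omega({\rm Tr}^2(\sigma^{-2}\partial_{\theta_k}\sigma)/(2^{2n}{\rm Tr}^2(\sigma^{-1})))$ lower bound.

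For statement 2, the same recipe applied to $\widetilde{D}_2(U\sigma U^\dagger\|\rho) = \log{\rm Tr}(U\sigma^2 U^\dagger\rho^{-1})$, using $\partial_{\theta_k}\sigma^2 = \{\sigma,\partial_{\theta_k}\sigma\}$, gives $N = {\rm Tr}(U\{\sigma,\partial_{\theta_k}\sigma\}U^\dagger\rho^{-1})$ and therefore $\mathbb{E}_U[N] = 2{\rm Tr}(\rho^{-1}){\rm Tr}(\sigma\partial_{\theta_k}\sigma)/d$. Weingarten with $Y=W=\sigma^2$, together with the spectral bounds ${\rm Tr}(\sigma^2)\leq d\|\sigma\|^2$ and ${\rm Tr}(\sigma^4)\leq d\|\sigma\|^4$, gives $\mathbb{E}_U[D^2]\in O(\|\sigma\|^4{\rm Tr}^2(\rho^{-1}))$, and the ratio produces the claimed $\Omega({\rm Tr}^2(\sigma\partial_{\theta_k}\sigma)/(2^{2n}\|\sigma\|^4))$ bound. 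The secondary inclusions into $\Omega(2^{-2n}\lambda_0(\cdot))$ follow by further spectral comparisons relating the trace quantities to the smallest eigenvalue of the positive semidefinite matrix $\sigma^{-1}(\partial_{\theta_k}\sigma)^2\sigma^{-1}$. The main technical obstacle is managing the ratio $N/D$ inside the Haar expectation---a direct evaluation of $\mathbb{E}_U[N/D]$ is intractable---and the Cauchy--Schwarz step is precisely the tool that circumvents this, at the price of a (still sufficient) weaker lower bound.
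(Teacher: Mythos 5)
Your proposal is correct, and it takes a genuinely different route from the paper's argument. The paper bounds the random denominator \emph{pointwise}: it uses $ {\rm Tr}(U\sigma^2U^\dagger\rho^{-1})\le\|\sigma\|^2{\rm Tr}(\rho^{-1})$ (valid for every $U$) to pull the denominator out of the expectation, and then evaluates the full two-copy Haar average of the \emph{squared} numerator, $\mathbb{E}[N^2]$, via an approximate-independence argument on the columns of $U$ with explicit $O(2^{-3n})$ error terms. You instead apply Cauchy--Schwarz to $\mathbb{E}[N]=\mathbb{E}[(N/D)\cdot D]$ to get $\mathbb{E}[(N/D)^2]\ge(\mathbb{E}[N])^2/\mathbb{E}[D^2]$, so the numerator needs only a one-copy twirl while the two-copy (Weingarten) moment is spent on the denominator. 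Each choice has a trade-off: the paper's numerator estimate $\mathbb{E}[N^2]\ge(\mathbb{E}[N])^2$ is in principle tighter, but it requires a worst-case bound on $D$ over all $U$; your version replaces that worst-case bound by an average, which would survive in settings where $D$ is not uniformly controlled, at the price of the weaker first-moment numerator. In the present instance both routes land on the same asymptotic bounds (your part-1 bound actually comes out a factor $2^{2n}$ \emph{stronger} than stated, since the $1/d^2$ factors cancel in the ratio, which still implies the claimed $\Omega$ membership), and your use of the exact two-copy twirl identity is arguably cleaner than the paper's approximate moment computation. Like the paper, you leave the secondary inclusions into $\Omega(2^{-2n}\lambda_0(\cdot))$ essentially asserted rather than derived, so no gap is introduced relative to the reference proof.
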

These conditions are rather benign, which suggests that under most circumstances barren plateau should not appear for generative training using $\widetilde{D}_2$.  To get an intuition for this, consider training a unitary QNN.  Let us consider the reverse divergence. Notice that a typical random density operators will have $\|\sigma\| \approx 1/2^n$, so we expect the gradients to be at least $2^n\lambda_0(\partial_{\theta_k} \sigma)$, which we expect to be on the order of $\|H_k\|\in \Theta(1)$.  A similar argument leads to the same conclusion for the reverse divergence if we note that the mean value of the minimum eigenvalue is in $2^{-3n}$~\cite{chen2010smallest} and thus the typical scale of the inverse of the density matrix is $2^{3n}/2^{n} = 2^{2n}$. From this scaling argument we then similarly expect the gradients to be on the order of $\|H_k\|$.  For these reasons, we believe that barren plateaus are expected to be absent in the optimization landscape.  Our expectations are further validated when considering small scale numerical experiments.

\section{Application: Learning thermal states}\label{sec:thermal}

We apply our R\'enyi divergence training routine to learn thermal states, i.e. $\rho = e^{-H}/{\rm Tr}(e^{-H})$. For a full description and complexity analysis, see \supp~\ref{sec:app_thermal_state}. The description of the state $\rho$ is in terms of its Hamiltonian that is given to us through an oracle in individual terms $U_l$ in its LCU decomposition $H=\sum_{l=1}^L \alpha_l U_l$. We also assume a second oracle $\mathcal{O}_{exp}(t)\ket{j}\ket{\psi} = \ket{j}e^{-iH_j t}\ket{\psi}$ where $H_j$ are the elementary Hamiltonians used for constructing a QNN. 

For thermal states, the gradient of the reverse R\'enyi divergence~\eqref{gradient_uni_sigma} gives
\begin{equation}
    \partial_{\theta_k} \widetilde{D}_2(\sigma_v\|\rho) = \frac{-i{\rm Tr}\left(\left\{{\rm Tr_h}({[\widetilde{H}_k, \sigma]}),\sigma_v\right\} e^{H}\right)}{{\rm Tr}\left(\sigma_v^2 e^{H}\right)}. \label{gradient_thermal}
\end{equation}
We propose two algorithms for estimating~\eqref{gradient_thermal} that extend~\ref{thm:gradient_generic}. For both of them, we estimate the numerator and then the denominator separately and return our estimate of the gradient to be the quotient of the two. Our first algorithm prioritizes low depth and circuit simplicity over asymptotic complexity. The second algorithm builds on the first but uses coherent operations instead of sampling to increase efficiency. This algorithm builds on our low-depth algorithm but uses amplitude estimation and LCU state preparation depicted in Fig.~\ref{fig:LCU}. The main component of both algorithms is the extended swap test. 

The coherent algorithm explicitly uses a unitary query model for the Hamiltonian of the following form:
For $H=\sum_{j=1}^L \alpha_j U_j$, define the operation $\text{Prepare}$ and $\text{Select}$ such that
\begin{align} 
    \text{Prepare}\ket{0} &= \frac{1}{\sqrt{s}} \sum_{k, l_1, \dots, l_k} \sqrt{\frac{\alpha_{l_1}\dots\alpha_{l_k}}{2^kk!}} \ket{k, l_1, \dots, l_k}\\
    \text{Select}\ket{j}\ket{\psi} &=\ket{j} U_j \ket{\psi},
\end{align}
for $s \le e^{\|\alpha\|_1/2} = e^{\sum_j |\alpha_j|/2}$.  The cost of performing computing the gradient for a thermal state learning problem using these operations as fundamental oracles instead of $\mathcal{O}_{\rho^{-1}}$ is given below.

 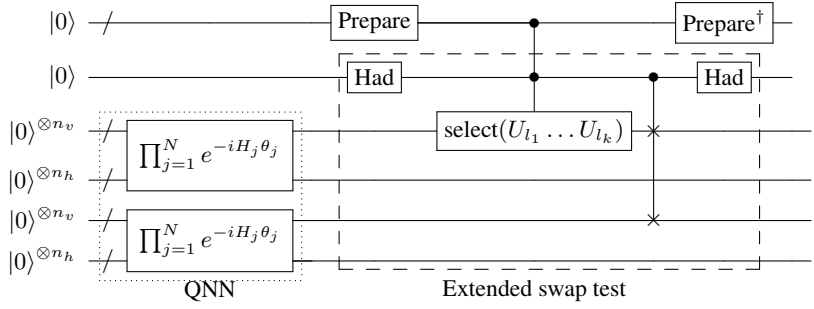
\begin{figure}[t]
 \centering
 \footnotesize{
 $$\Qcircuit @C=0.8em @R=0.8em {
 \lstick{\ket{0}}&{/}\qw&\qw&\qw&\gate{\text{Prepare}}&\ctrl{1}\qw&\qw&\gate{\text{Prepare}^{\dagger}}&\qw\\
 \lstick{\ket{0}}&\qw &\qw&\qw& \gate{\text{Had}}&\ctrl{1}&\ctrl{3}&\gate{\text{Had}}&\qw\\
\lstick{\ket{0}^{\otimes n_v}}&{/}\qw&\multigate{1}{\prod_{j=1}^{N} e^{-iH_j\theta_j}}&\qw&\qw&\gate{\text{select}(U_{l_1}\dots U_{l_k})}&\qswap&\qw&\qw&\qw\\
 \lstick{\ket{0}^{\otimes n_h}}&{/}\qw&\ghost{\prod_{j=1}^{N} e^{-iH_j\theta_j}}&\qw&\qw&\qw&\qw&\qw&\qw&\qw\\
 \lstick{\ket{0}^{\otimes n_v}}&{/}\qw&\multigate{1}{\prod_{j=1}^{N} e^{-iH_j \theta_j}}&\qw& \qw&\qw&\qswap&\qw&\qw&\qw\\
 \lstick{\ket{0}^{\otimes n_h}}&{/}\qw&\ghost{\prod_{j=1}^{N} e^{-iH_j \theta_j}}&\qw \qw&\qw&\qw&\qw&\qw&\qw&\qw  \\
 & & \mbox{QNN} & & &  \mbox{Extended swap test}
 \gategroup{2}{5}{6}{8}{.7em}{--} 
 \gategroup{3}{2}{6}{3}{.7em}{..}
 }$$}
 \caption{The circuit for initializing the QNN and extended swap utilizing LCU.  These operations will be repeated multiple times within amplitude estimation.}
 \label{fig:LCU}
 \end{figure}

\begin{restatable}[Coherent algorithm for thermal state learning]{thm}{thmdeep}\label{thm:deep_qnn}

Let $\rho = e^{-H}/{\rm Tr}(e^{-H})$ be a target distribution for $H=\sum_{l=1}^L \alpha_l U_l$ for unitary $U_l$ and $\sigma_v = {\rm Tr}_h \left[ \prod_{j=1}^{N} e^{-iH_j\theta_j} \ket{0}\!\!\bra{0} \prod_{j=N}^{1} e^{iH_j\theta_j} \right]$ be the output of a unitary QNN on visible units.  For any $\epsilon>0$, the number of gates needed to compute an estimate $\mathcal{E}$ such that $|\mathcal{E} - \partial_{\theta_k} \widetilde{D}_2(\sigma||\rho)| \le \epsilon\le \frac{13}{2{\rm Tr}(\sigma_v^2 \rho^{-1})}$ with probability greater than $2/3$ is in 
$$
         \widetilde{O}\left(\frac{N+\sqrt{\frac{{e^{\|\alpha\|_1} 2^n}}{{\rm Tr}(e^H)}}}{\epsilon {\rm Tr}(\sigma_v^2 e^H/{\rm Tr}(e^H))}\right),
$$
Similarly, the number of queries  to the oracles $\mathcal{O}_U$ and $\mathcal{O}_{exp}$, Prepare and Select needed  is in
$$
        \widetilde{O}\left(\frac{(L+n)\sqrt{\frac{{e^{\|\alpha\|_1} 2^n}}{{\rm Tr}(e^H)}}}{\epsilon {\rm Tr}(\sigma_v^2 e^H/{\rm Tr}(e^H))}\right),
$$
where the notation $\tilde{O}(\cdot)$ suppresses logarithmic factors.
\end{restatable}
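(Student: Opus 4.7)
The plan is to extend the sampling-based algorithm of Theorem~\ref{thm:gradient_generic} to the thermal-state setting by (i) building a block encoding of $e^H$ from a truncated Taylor series using the Prepare and Select oracles, (ii) wrapping that block encoding inside the extended swap test of Theorem~\ref{thm:swap_trick} so that the measurement probability encodes the trace of interest, and (iii) replacing the outer sampling loop by amplitude estimation to obtain Heisenberg-limited $1/\epsilon$ scaling. The remainder is bookkeeping on gate counts, oracle queries, and error propagation.

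\paragraph{Step 1: reduce the gradient to two estimable traces.}
Because $\rho = e^{-H}/{\rm Tr}(e^{-H})$ we have $\rho^{-1}={\rm Tr}(e^{-H})\,e^{H}$, and the $\text{Tr}(e^{-H})$ factors cancel in the ratio \eqref{gradient_thermal}. So it suffices to estimate
\begin{equation*}
N_k:={\rm Tr}\!\left(\{{\rm Tr}_h([\widetilde H_k,\sigma]),\sigma_v\}\,e^{H}\right),\qquad D:={\rm Tr}(\sigma_v^{2}e^{H}),
\end{equation*}
each to additive error of order $\epsilon D$, and return $-iN_k/D$. The commutator is handled, as in Theorem~\ref{thm:QNN}, by writing it as a difference of two swap-test estimates.

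\paragraph{Step 2: block-encode $e^H$ via truncated Taylor + LCU.}
Truncate $e^{H}=\sum_{k=0}^{\infty}H^{k}/k!$ at order $K=O(\|\alpha\|_{1}+\log(1/\epsilon'))$, chosen so that the tail is bounded in operator norm by a suitable $\epsilon'$ (this is the standard Berry--Childs--Kothari--Somma analysis). Each term $H^{k}/k!$ expands as a sum of products $\alpha_{l_{1}}\cdots \alpha_{l_{k}}U_{l_{1}}\cdots U_{l_{k}}$, exactly the combination implemented by the stated Prepare/Select with normalization $s\le e^{\|\alpha\|_{1}/2}$. Thus one Prepare$^{\dagger}$--Select--Prepare block applied on an ancilla state $|0\rangle$ gives an LCU block encoding of $e^{H}/s^{2}$.

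\paragraph{Step 3: extended swap test with the LCU insertion.}
Following Figure~\ref{fig:LCU}, I prepare two copies of $\sigma=\Pi_{j}e^{-iH_{j}\theta_{j}}\ket{0}\!\bra{0}\Pi_{j}e^{iH_{j}\theta_{j}}$, each at QNN cost $O(N)$, run the swap circuit with the Select unitary inserted on one of the copies, and apply Prepare and Prepare$^{\dagger}$ around the control register. By Theorem~\ref{thm:swap_trick}, the probability of seeing all ancillas in $|0\rangle$ at the end equals
\begin{equation*}
p \;=\; \frac{1}{2}\!\left(1+\frac{\operatorname{Re} T}{s^{2}}\right),
\end{equation*}
where $T\in\{N_k,D\}$ depending on which controlled unitary layer is inserted. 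So estimating $p$ to additive accuracy $\delta$ gives $T$ to additive accuracy $2s^{2}\delta$.

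\paragraph{Step 4: amplitude estimation and amplification.}
To obtain $p$ to accuracy $\delta$ with success probability $\ge 2/3$, amplitude estimation uses $\widetilde O(1/\delta)$ coherent uses of the swap-test circuit. Converting the required accuracy on $T$ back into $\epsilon$ on $T/s^{2}$ produces one factor of $s^{2}=e^{\|\alpha\|_{1}}$, while the physical post-selection probability inside the LCU is of order ${\rm Tr}(e^{H})/(2^{n}s^{2})$; fixed-point amplification boosts this at cost $\sqrt{e^{\|\alpha\|_{1}} 2^{n}/{\rm Tr}(e^{H})}$. Each amplification round uses the QNN ($O(N)$ gates) once and the Prepare/Select pair (each counted as $O(L+n)$ oracle queries, the $n$ arising from the controlled swaps) once.

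\paragraph{Step 5: error propagation and final count.}
Using $|\widehat N/\widehat D - N/D|\le |\widehat N-N|/|D|+|\widehat N|\,|\widehat D-D|/|D|^{2}$ together with the assumption $\epsilon\le 13/(2{\rm Tr}(\sigma_v^{2}\rho^{-1}))$, both $N_k$ and $D$ must be estimated to additive accuracy $\Theta(\epsilon\,{\rm Tr}(\sigma_v^{2}e^{H}/{\rm Tr}(e^{H})))$. Multiplying the per-invocation cost by the number of amplitude-estimation rounds and the amplification factor, and absorbing the Taylor cutoff $K$ together with the amplitude-estimation logarithms into the $\widetilde O$, yields the stated gate count $\widetilde O\!\big((N+\sqrt{e^{\|\alpha\|_{1}}2^{n}/{\rm Tr}(e^{H})})/(\epsilon\,{\rm Tr}(\sigma_v^{2}e^{H}/{\rm Tr}(e^{H})))\big)$ and the matching oracle count with $N\mapsto L+n$.

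\paragraph{Main obstacle.}
The delicate part is coordinating three simultaneous approximations: the Taylor truncation error in $e^{H}$, the amplitude-estimation error on a post-selected amplitude, and the error from forming the ratio $N_k/D$. One has to choose the truncation order $K$ large enough that its contribution is hidden inside $\epsilon$ yet small enough that its cost stays logarithmic, and one has to lower-bound the LCU post-selection probability by $\Theta({\rm Tr}(e^{H})/(e^{\|\alpha\|_{1}}2^{n}))$ in order to get exactly the square-root amplification factor that appears in the statement. Everything else is an exercise in combining Theorems~\ref{thm:QNN} and~\ref{thm:swap_trick} with standard LCU and amplitude-estimation primitives.
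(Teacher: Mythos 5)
Your proposal follows essentially the same route as the paper's proof: reduce the gradient to a numerator/denominator pair of traces, build $e^{H}\propto\rho^{-1}$ as a truncated-Taylor LCU block encoding from Prepare/Select acting on a maximally entangled input, feed it into the extended swap test of Theorem~\ref{thm:swap_trick}, lower-bound the post-selection probability by ${\rm Tr}(e^{H})/(e^{\|\alpha\|_{1}}2^{n})$ to get the square-root amplification factor, and combine amplitude estimation with the quotient error bound inherited from Theorem~\ref{thm:gradient_generic}. The only cosmetic difference is your truncation order $K=O(\|\alpha\|_{1}+\log(1/\epsilon'))$ versus the paper's purely logarithmic cutoff (yours is if anything the more careful choice, and it is absorbed into the same $\widetilde{O}$), so the argument is correct and matches the paper.
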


 \begin{figure}[t]
  \centering%
  \subfloat[Loss]{%
    \includegraphics[width=0.48\textwidth]{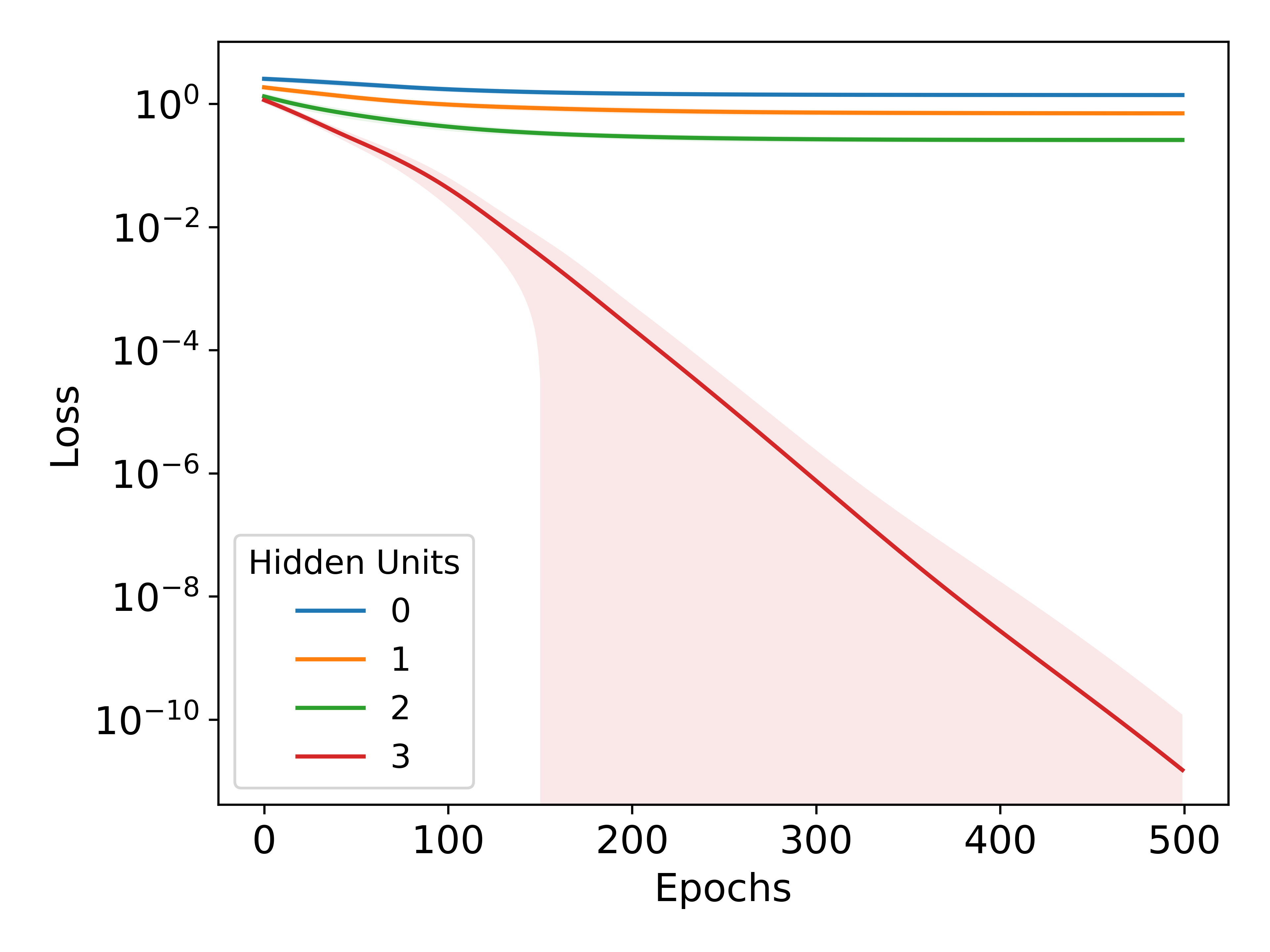}%
  }%
    \subfloat[Fidelity]{%
    \includegraphics[width=0.48\textwidth]{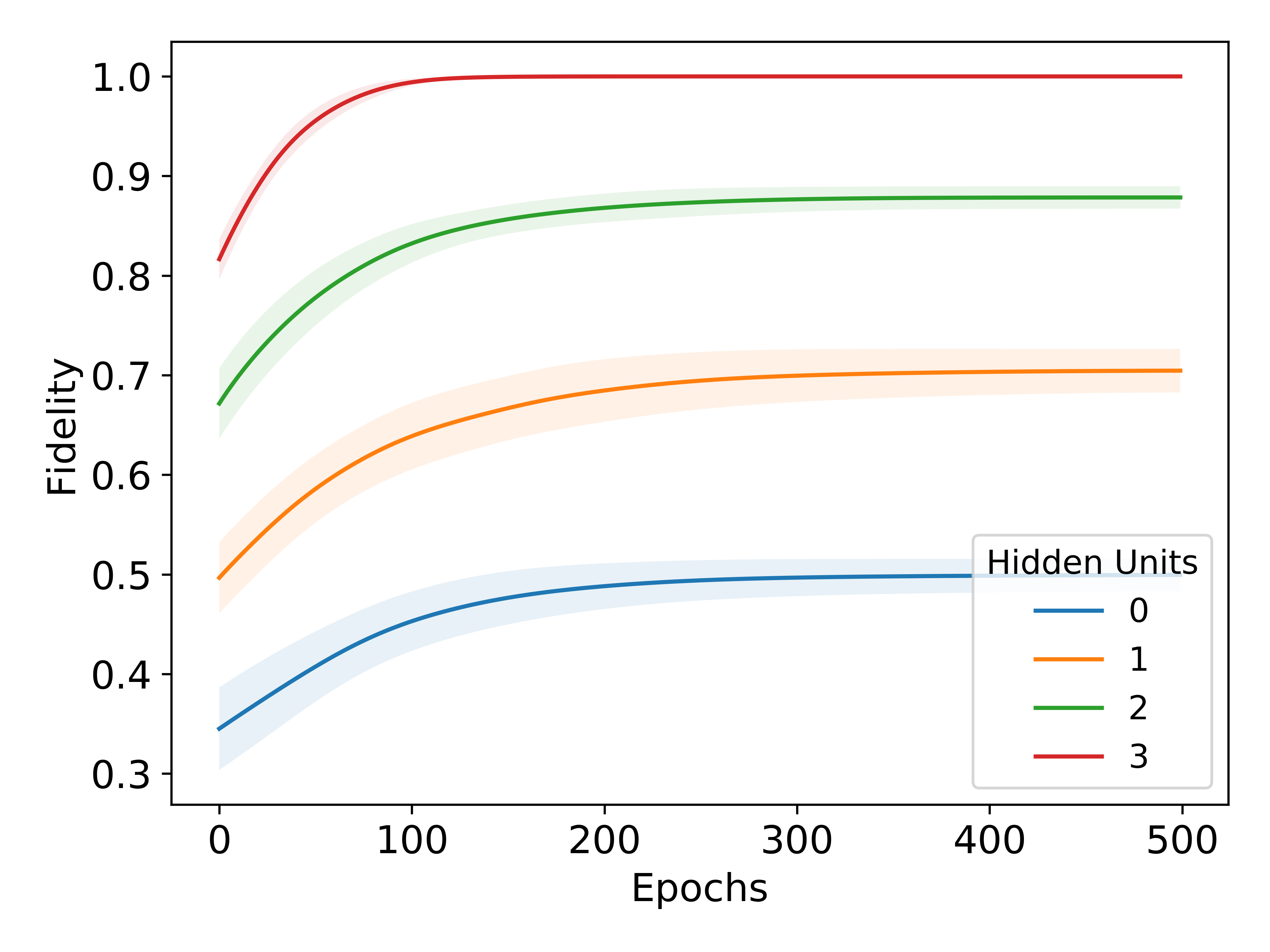}%
  }
 
 \caption{We trained the Unitary model with three visible units and an increasing number of hidden units. The target state is a random thermal state. The solid lines represent the average epoch value and the width of the shaded area two standard deviations over 50 runs. (a) Training loss (i.e. R\'enyi Divergence) of our model. (b) Fidelity between the target state and our model.}
  \label{UniNet_num}
\end{figure}

In addition to our theoretical results, we ran a series of small scale numerical experiments to showcase our ability to learn a thermal state with the Unitary model using the analytical gradients calculated in equation \eqref{gradient_thermal} and the ADAM optimizer with a learning rate of $0.001$ \cite{kingma2014adam}.  We constructed our target thermal states using a random two-local Hamiltonian model on $n_v$-qubits, i.e, 
\begin{equation}
   H_2 = \sum_i\sum_a J_a^{i} \sigma_a^i + \sum_{i<j} \sum_{a,b} J_{a, b}^{i, j} \sigma_a^i \sigma_b^j \label{two-local_ham}
\end{equation} 
where $\sigma_a^j=I^{\otimes j-1}\otimes \sigma_a \otimes I^{\otimes {n_v}-j}$, $\sigma_a$ are the Pauli matrices, and $J^i_a, J^{i,j}_{a,b} \in \mathbb{R}$ for $a, b\in \{x,y,z\}$.
For the experiments in Figure \ref{UniNet_num} and Supplementary Material \ref{app:additional_exp}, we constructed a Unitary model (see equation  \ref{state_prep}) by exhaustively sampling from the Hermitian terms in equation \eqref{two-local_ham}. We initialized our network coefficients by sampling from a normal distribution with mean $0$ and variance $1$, i.e. $\mathcal{N}(0,1)$. We initialize our target state by the sampling $J_{a}^{i}\sim \mathcal{N}(0,0.1)$, $J_{a, b}^{i, j}\sim \mathcal{N}(0,1)$ and then divided the resulting Hamiltonian by its operator norm. This initialization produced typical target states with a Volume Law scaling on the entanglement entropy, as discussed in \cite{marrero2020entanglement}, and prevented numerical instabilities when computing equation \eqref{gradient_thermal}.
In \cite{marrero2020entanglement} the authors observed gradient decay for these class of states as a function of the hidden units and our results show the absence of this phenomenon.     

Figure \ref{UniNet_num} demonstrates our ability to learn an ensemble of thermal states as we increase the number of hidden units of our Unitary model. For a Unitary model with three visible and three hidden units, we archived a $99\%$ average fidelity over our ensemble after $100$ epochs. \supp \ref{large_unitary_exp} contain a similar study when repeating this experiment on a larger system. Moreover, {\supp} \ref{hyperparamers_exp}, shows that we can successfully learn a fixed thermal state irrespective of the order of the gate operations we perform and network initialization. 

In all our experiments we saw no evidence of gradient decay or barren plateaus during training. Moreover, our model fidelity continued to increase with an increasing number of hidden units, suggesting an absence of entanglement induced Barren Plateaus.

\section{Application: Hamiltonian Learning}
\label{sec:ham}

As an example of how generative models like the QBM can be practical, we will apply them to the task of performing likelihood-free Hamiltonian learning~\cite{granade2012robust,wiebe2015quantum,kieferova2017tomography,anshu2021sample}, one of the few known significant quantum learning problems attaining a large quantum advantage.  The problem of quantum Hamiltonian learning is the problem of learning a generator for real-time  (or imaginary time) dynamics for a quantum system.  These operators allow us for example to control and calibrate quantum systems and many other tasks.  

\begin{figure}[t]
  \centering%
  \subfloat[Loss]{%
    \includegraphics[width=0.48\textwidth]{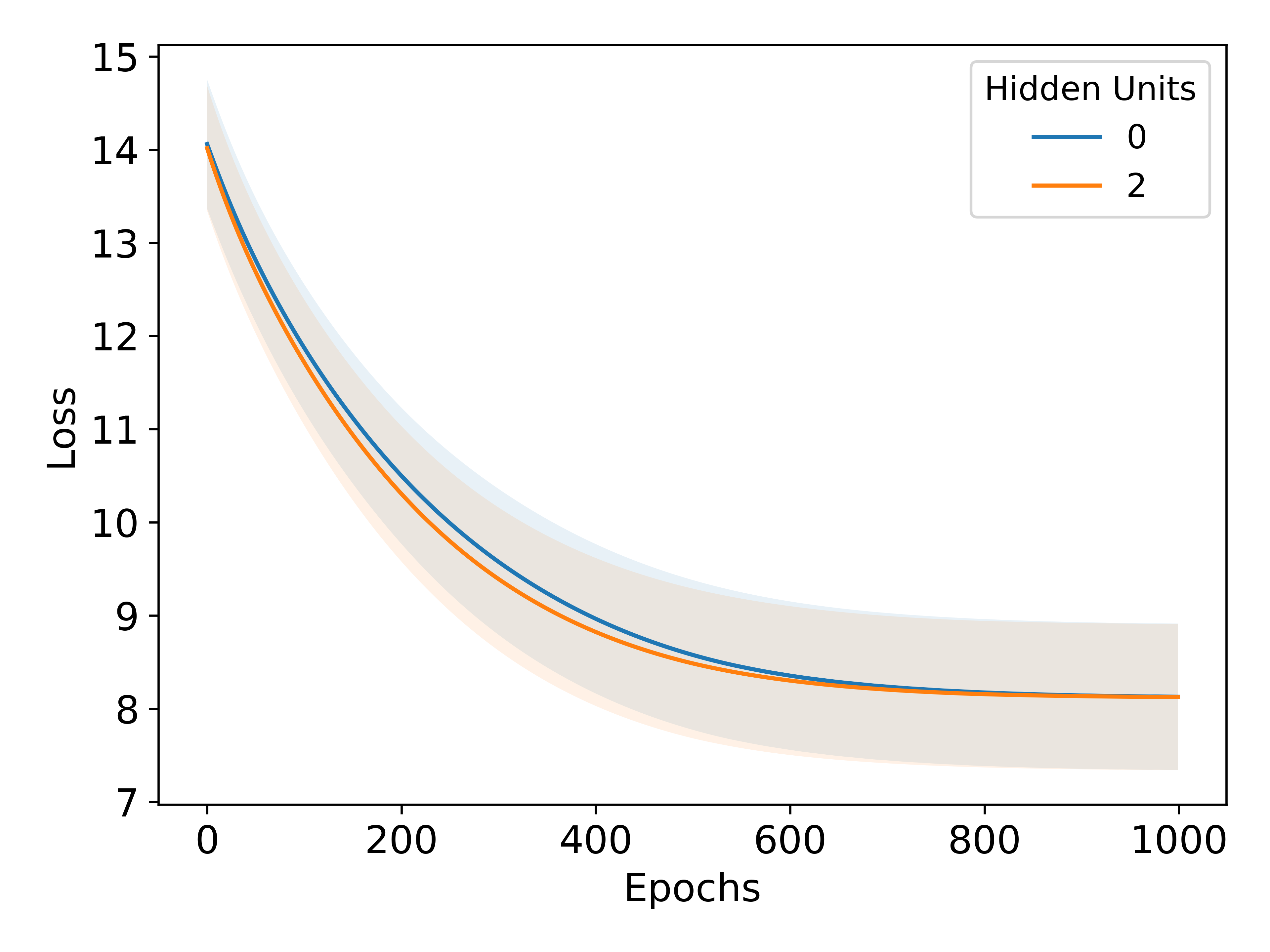}%
  }%
    \subfloat[Fidelity]{%
    \includegraphics[width=0.48\textwidth]{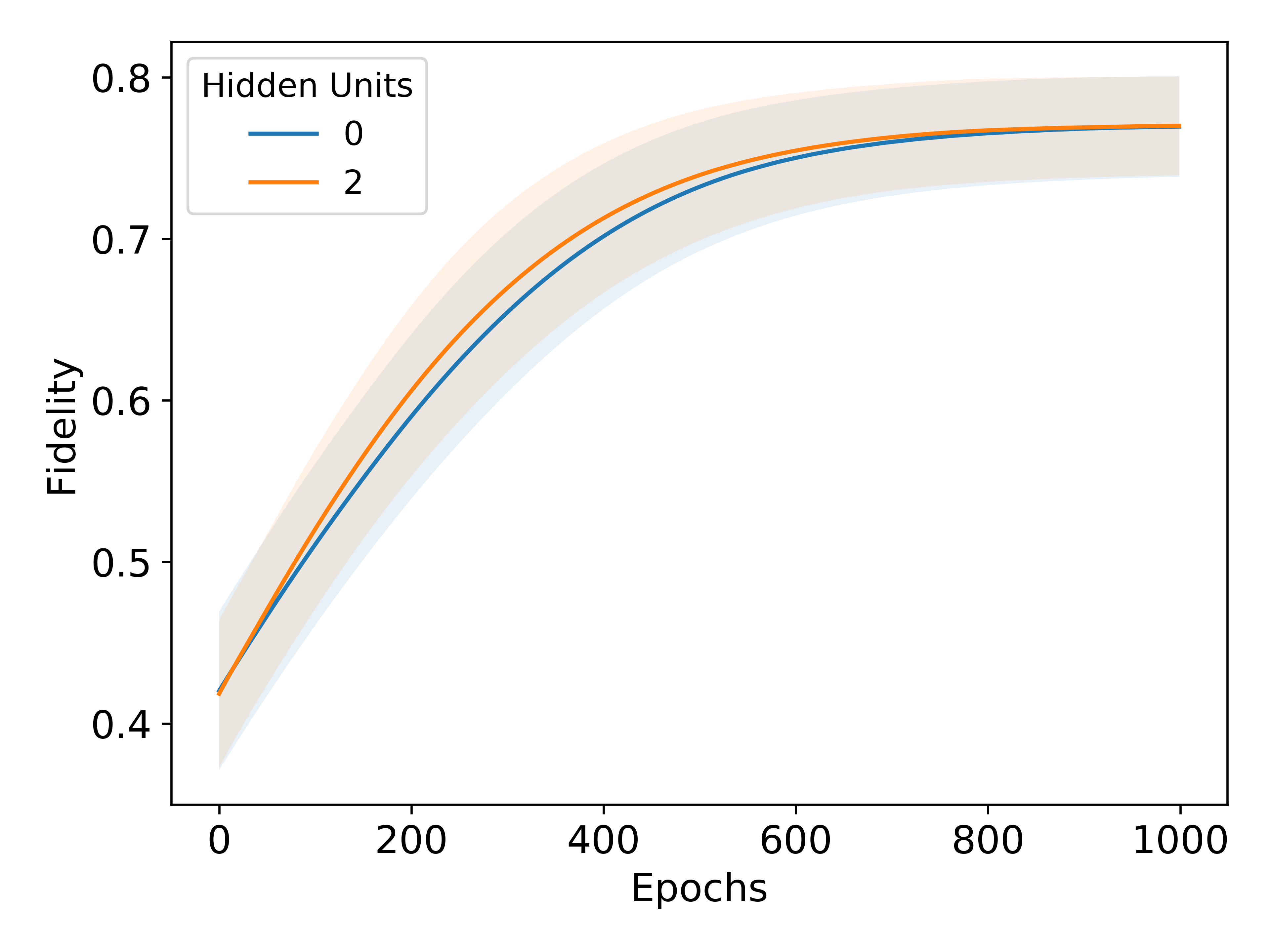}%
  }
 
 \caption{We trained the QBM with four visible units and an increasing number of hidden units. The target states are random thermal states using a three-local Hamiltonian with $\tau=10$. The solid lines represent the average epoch value and the width of the shaded area two standard deviations over an ensemble of 50 different target states. (a) Training loss (i.e. R\'enyi Divergence) and (b) Fidelity between the target states and our model.}
  \label{qbm_num}
\end{figure}

Here we investigate how effective are QBMs at solving problems in this space.  One advantage of divergence training is that it allows us to perform Hamiltonian learning without likelihood evaluations.

We devised a series of numerical experiments where the goal was to learn a target thermal state using a Hamiltonian from outside our model class. We optimized the model parameters using the analytical gradients calculated in \eqref{gradient_bm}. Our target state was constructed from a random three-local Hamiltonian on $n_v$-qubits, i.e.
   $$H_3 = H_2 + \sum_{i<j<k} \sum_{a,b,c} J_{a, b, c}^{i, j, k} \sigma_a^i \sigma_b^j \sigma_c^k$$
where $H_2$ is defined in \eqref{two-local_ham}, $\sigma_a^j=I^{\otimes j-1}\otimes \sigma_a \otimes I^{\otimes {n_v}-j}$, $\sigma_a$ are the Pauli matrices, and $J^i_a, J^{i,j}_{a,b}, J_{a, b, c}^{i, j, k}\in \mathbb{R}$ for $a,b,c\in \{x,y,z\}$. We initialized the Hamiltonian coefficients by sampling from $\mathcal{N}(0, 1)$ and then multiplying the resulting Hamiltonian by $\frac{\tau}{||H_3||}$, where $\tau$ is the temperature parameter, and $||H_3||$ is the operator norm of the Hamiltonian.  We used $H_2$ in \eqref{two-local_ham} on $(n_v+n_h)$-qubits, as our QBM Hamiltonian. We initialized the coefficients by sampling from $\mathcal{N}(0, 1)$ and then multiplied the resulting Hamiltonian by $\frac{1}{||H_2||}$. We optimized the model parameters using the ADAM optimizer with a learning rate of $0.001$ and imposed an L2-regularization with a penalty of coefficient of $2$ to prevent degenerative gradients. \supp~\ref{bm_l2_grad} contain more information about the effects of L2-regularization on the $\infty$-norm of our gradients.

Figure \ref{qbm_num} shows our ability to train a QBM for the ensemble of thermal states outlined above. Starting with a $42\%$ average fidelity, we managed to train QBMs that achieved a $77\%$ average fidelity after $1000$ epochs. \supp~\ref{bm_high_temp} contain a similar analysis on an ensemble of thermal states with a higher temperature i.e. smaller value of $\tau$. 

Our work represents fundamental research into the generic techniques of learning in the quantum realm. As such, it is too early to speculate about its impact on society. 

We did not observe a substantial benefit here from adding small numbers of hidden units to our model relative to the standard deviation in the performance of each of the instances studied.  In contrast, for the thermal state generation task using unitary networks, we noticed substantial advantages.  Much of these differences arise from the fact that the QBM yields mixed states rather than pure states and so there is less of a need in this context to use hidden units since even an all visible Boltzmann machine is capable of generating mixed states, unlike the unitary model.

\section{Conclusion}
Our work shows that choosing an unbounded loss function such as maximal R\'enyi divergence of order two allows one to avoid no-go theorems on barren plateaus and train unitary QNNs and QBMs. 
Compared to other unbounded loss functions such as relative entropy, gradients of this loss function have a closed-form and they can be in many cases efficiently estimated. By violating an assumption behind the barren plateau results, we were able to efficiently learn thermal states without experiencing gradient decay. We also demonstrated that maximal R\'enyi divergence can be used for Hamiltonian learning.
This approach to training opens new possibilities for training quantum model.

Our work focused strictly on generative training but could be applied in a broader context as pre-training step. In this approach, one would train a quantum model to perform a generative task first and then train it to perform a specific task in a second phase. Pre-training would ensure that the model at the start of a second phase is not random and thus does not suffer from a concentration of measure gradient decay.

The quantum circuits that arise in our work are not strictly shallow and suited for more advanced quantum hardware that might come on the market in the next years. Nevertheless, one could apply our sampling-based algorithm to learn simple quantum states using existing hardware which would lead to noisy implementation. Our training does not satisfy the assumptions leading to noise-induced barren plateaus~\cite{wang2020noise} but it does increase the circuit depth for training. Thus, the robustness of our training under noise is still an open question.

It should be also pointed out that our work does not guarantee an efficient training algorithm despite the fact that the gradient is efficiently computable and large at initialization. Specifically, convergence might be slow depending on the optimization landscape; however, the ease with which our algorithm learns and the absence of obvious barren plateaus in the landscape suggest that our approach is more practical and scalable than existing methods. The limitations of our training method are reflected in learning thermal states where our method becomes inefficient for low temperature states. This is to be expect because preparing ground states (corresponding to zero temperature) is QMA hard.  Nonetheless, quantum approaches like the D\" urr-H\o yer algorithm~\cite{durr1996quantum,gilyen2019quantum} can somewhat ameliorate these problems, a complete understanding of the cost of gradient descent for non-convex loss functions remains elusive.

We anticipate no immediate negative social impacts of the research or its application to Hamiltonian Learning or generation of thermal states owing to the low technological readiness of quantum computing at the moment although questions of fairness as well as privacy and security within quantum machine learning remain important avenues of inquiry.

This work leaves a number of open questions.  First, while we show that barren plateaus do not occur (under reasonable assumptions), our result is not able to prove a concentration of measure. Another question involves the role that these ideas may have in quantum generative adversarial networks.  Since our approach side steps the barren plateau problem, it may be ideal to deploy in such problems and a detailed study will be needed to understand the ramifications of this.  Finally, there is the question of whether these approaches can be used to generatively pretrain models for discriminative loss function (such as mean classification error rate), which suffer from the barren plateau problem because of their linear loss function.  This raises the intriguing possibility that generative training may have a role beyond quantum Hamiltonian learning: it may be an indispensable tool for avoiding gradient decay for large quantum neural networks.

\section{Acknowledgement}
MK was supported by the Sydney Quantum Academy, Sydney, NSW, Australia and  ARC Centre of Excellence for Quantum Computation and Communication Technology (CQC2T), project number CE170100012.
Support for C. Ortiz Marrero was provided by the Laboratory Directed Research and Development Program at Pacific Northwest National Laboratory, a multi-program national laboratory operated by Battelle for the U.S. Department of Energy, Release No. PNNL-SA-162897.
NW was funded by a grant from Google Quantum AI, and his theoretical work on this project was  supported by the U.S. Department of Energy, Office of Science, National Quantum Information Science Research Centers, Co-Design Center for Quantum Advantage under contract number DE-SC0012704
We thank Marco Tomamichel and Jarrod McClean for insightful comments.

\bibliographystyle{plainnat}  
\bibliography{ms}
%\newpage
%checklist
\blank{
 \begin{enumerate}
\item For all authors...
\begin{enumerate}
  \item Do the main claims made in the abstract and introduction accurately reflect the paper's contributions and scope?
    {\color{blue} Yes}
  \item Did you describe the limitations of your work?
    {\color{blue} Yes, in the conclusion and Supplementary materials}
  \item Did you discuss any potential negative societal impacts of your work?
   {\color{blue} Yes.}
  \item Have you read the ethics review guidelines and ensured that your paper conforms to them?
    {\color{blue} Yes}
\end{enumerate}

\item If you are including theoretical results...
\begin{enumerate}
  \item Did you state the full set of assumptions of all theoretical results?
    {\color{blue} Yes}
	\item Did you include complete proofs of all theoretical results?
    {\color{blue} Yes, in Supplementary materials}
\end{enumerate}

\item If you ran experiments...
\begin{enumerate}
  \item Did you include the code, data, and instructions needed to reproduce the main experimental results (either in the supplemental material or as a URL)?
    {\color{blue} Yes}
  \item Did you specify all the training details (e.g., data splits, hyperparameters, how they were chosen)?
    {\color{blue} Yes}
	\item Did you report error bars (e.g., with respect to the random seed after running experiments multiple times)?
    {\color{blue} Yes}
	\item Did you include the total amount of compute and the type of resources used (e.g., type of GPUs, internal cluster, or cloud provider)?
    {\color{blue} Yes, more details in the repository.}
\end{enumerate}

\item If you are using existing assets (e.g., code, data, models) or curating/releasing new assets...
\begin{enumerate}
  \item If your work uses existing assets, did you cite the creators?
    {\color{blue} Yes, in the paper and repository.}
  \item Did you mention the license of the assets?
    {\color{blue} Yes, BSD-2.}
  \item Did you include any new assets either in the supplemental material or as a URL?
    {\color{blue} Yes}
  \item Did you discuss whether and how consent was obtained from people whose data you're using/curating?
    {\color{blue} Yes}
  \item Did you discuss whether the data you are using/curating contains personally identifiable information or offensive content?
    \textcolor{blue}{N/A}
\end{enumerate}

\item If you used crowdsourcing or conducted research with human subjects...
\begin{enumerate}
  \item Did you include the full text of instructions given to participants and screenshots, if applicable?
    \textcolor{blue}{N/A}
  \item Did you describe any potential participant risks, with links to Institutional Review Board (IRB) approvals, if applicable?
   \textcolor{blue}{N/A}
  \item Did you include the estimated hourly wage paid to participants and the total amount spent on participant compensation?
  \textcolor{blue}{N/A}
\end{enumerate}
\end{enumerate}

%%%%%%%%%%%%%%%%%%%%%%%%%%%%%%%%%%%%%%%%%%%%%%%%%%%%%%%%%%%%

%\newpage
}
\appendix

\section{Proof of Theorem~\ref{thm:QNN}: Gradients for QNNs}\label{sec:proof_qnn}
Assuming we have a unitary QNN we can express the derivative of the density operator $\sigma$ as
\begin{align}
    \partial_{\theta_k} \sigma &={-i(\prod_{j=1}^{k-1} e^{-iH_j \theta_j} H_k \prod_{j=k}^L e^{-iH_j \theta_j} \ket{0}\!\bra{0}\prod_{j=L}^1 e^{iH_j \theta_j})+{\rm h.c.}}\nonumber\\
    &={-i(\prod_{j=1}^{k-1} e^{-iH_j \theta_j} H_k \prod_{j=k-1}^1 e^{iH_j \theta_j} \prod_{j=1}^L e^{-iH_j \theta_j} \ket{0}\!\bra{0}\prod_{j=L}^1 e^{iH_j \theta_j})+{\rm h.c.}}\nonumber\\
    &={-i(\prod_{j=1}^{k-1} e^{-iH_j \theta_j} H_k \prod_{j=k-1}^1 e^{iH_j \theta_j} \sigma)+{\rm h.c.}}=-i[\widetilde{H}_k, \sigma],\label{eq:commutator}
\end{align}
where $\widetilde{H}_k = \prod_{j=1}^{k-1} e^{-iH_j \theta_j} H_k \prod_{j=k-1}^1 e^{iH_j \theta_j}$. Therefore, because the derivative of the partial trace is the partial trace of the derivative
\begin{equation}
    \partial_{\theta_k} \widetilde{D}_2(\sigma_v\|\rho) = \frac{-i{\rm Tr}\left(\left\{{\rm Tr_h}({[\widetilde{H}_k, \sigma]}),\sigma_v\right\} \rho^{-1}\right)}{{\rm Tr}\left(\sigma_v^2 \rho^{-1}\right)}. \label{gradient_trotter}
\end{equation}

We can also compute the forward divergence $\partial_{\theta_k} \widetilde{D}_2(\rho\|\sigma_v)$ 

\begin{equation}
    \partial_{\theta} \widetilde{D}_2(\rho\|\sigma_v(\theta)) = \frac{- {\rm Tr}\left(\rho^2 \sigma_v^{-1} (\partial_\theta \sigma_v(\theta))\sigma^{-1}\right)}{{\rm Tr}\left(\rho^2 \sigma_v^{-1}\right)} 
    =  \frac{i {\rm Tr}\left(\rho^2 \sigma_v^{-1} {\rm Tr}_h(
    [\widetilde{H}_k, \sigma]
    )\sigma_v^{-1}\right)}{{\rm Tr}\left(\rho^2 \sigma_v^{-1}\right)}. \label{eq:forwardDiff}
\end{equation}
These results together prove the theorem.

The clear advantage of this approach is that both forward and reverse R\'enyi divergence have a closed form.  In Section~\ref{sec:thermal} we show how we can use~\eqref{gradient_uni_rho} for thermal state generation. 
Note that if one is only interested in computing the direction of the gradient and not the magnitude, it is sufficient to only compute the nominators in~\eqref{gradient_uni_sigma} and~\eqref{gradient_uni_rho}. 

This leads to the following theorem:
\unitarygradient*

\section{Extended swap test}\label{sec:proof_swap}

Here we discuss the main algorithmic component of training. We combine the extended version of the swap test with Hadamard test to estimate the quantity $Re[{\rm Tr} (\prod_{i=1}^n \prod_{j=1}^m U_j \rho_i)]$ for unitaries $U_i$ and quantum states $\rho_i$.

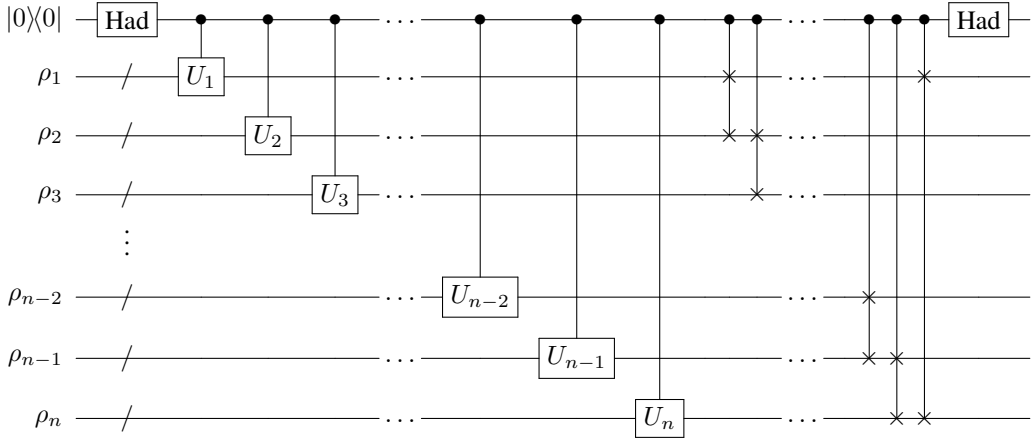
\begin{figure}[t]
    \centering
    $$\Qcircuit @C=0.8em @R=0.8em {
   \lstick{\ket{0}\!\!\bra{0}}&\gate{\text{Had}}&\ctrl{1}&\ctrl{2}& \ctrl{3} &\qw&\dots&  & \ctrl{6}& \ctrl{7}&\ctrl{8}&\qw&\ctrl{2}&\ctrl{3}&\qw&\dots&  &\qw& \ctrl{7} &\ctrl{8} &\ctrl{8}&\gate{\text{Had}}&\qw\\
\lstick{\rho_1}&{/}\qw&\gate{U_1}&\qw& \qw  &\qw&\dots&& \qw& \qw&\qw&\qw&\qswap  &\qw   &\qw&\dots&   &\qw& \qw& \qw &\qswap&\qw&\qw \\
\lstick{\rho_2}&{/}\qw&\qw&\gate{U_2}& \qw  &\qw&\dots& & \qw&\qw&\qw&\qw&\qswap  &\qswap   &\qw&\dots&  &\qw& \qw& \qw & \qw& \qw&\qw  \\
\lstick{\rho_3}&{/}\qw&\qw&\qw& \gate{U_3}  &\qw&\dots& & \qw& \qw&\qw&\qw&\qw    &\qswap &\qw&\dots&    &\qw  &\qw&\qw& \qw & \qw& \qw \\
&\vdots  &&     &&   &   & &&&  & && \\
&&&&&&&&\\
\lstick{\rho_{n-2}}&{/}\qw&\qw&\qw& \qw  &\qw&\dots&& \gate{U_{n-2}}&\qw& \qw&\qw& \qw    & \qw     &\qw&\dots&  &\qw& \qswap & \qw & \qw& \qw&\qw  \\
\lstick{\rho_{n-1}}&{/}\qw&\qw&\qw& \qw &\qw&\dots&&\qw& \gate{U_{n-1}}&\qw&\qw& \qw    & \qw &\qw &\dots&   &\qw& \qswap& \qswap & \qw& \qw&\qw \\
\lstick{\rho_n}&{/}\qw& \qw&\qw&\qw& \qw&\dots& &\qw & \qw& \gate{U_n}&\qw&\qw    & \qw     &\qw& \dots&& \qw& \qw& \qswap& \qswap& \qw &\qw
}$$
    \caption{This unitary, denoted $\chi$, consists of two Hadamard gates, a series and of controlled unitaries and the controlled cyclic permutation $C_+$.  If we wish to uncontrollably permute $p$ registers each consisting of $q$ qubits, the circuit would require $pq$ controlled swap operations which can be decomposed into $pq$ Toffoli gates and $2pq$ CNOT gates. At the end of the test, the first qubit is measured.}
    \label{fig:controlled_swap}
\end{figure}

\swapamard*

The proof follows from a generalization of the swap-test and Hadamard test.
First, $S_+$ be a cyclic permutation operator such that for any bounded operator of the form $A_1,\ldots, A_n$, 
\begin{equation}
    S_+ A_1 \otimes \cdots \otimes A_n S_+^\dagger = A_n\otimes A_1 \otimes \cdots \otimes A_2\label{eq:Sdef}
\end{equation}

Next, let $C_+$ be a controlled cyclic permutation of the form
\begin{equation}
    C_+ := \ket{0}\!\bra{0} \otimes I + \ket{1}\!\bra{1} \otimes S_+
\end{equation}
The operation $C_+$ can be implemented with a series of nested, controlled swaps.

It follows from algebra that
\begin{align}
    & \left(\ket{0}\!\bra{0} \otimes I\right)\chi \left( \ket{0}\!\bra{0} \otimes \rho_1 \otimes \dots \otimes \rho_n \right) \chi^{\dagger}  \nonumber\\
    &\qquad= \frac{1}{4} \ket{0}\!\bra{0} \otimes \left(\bigotimes_{i=1}^n \rho_i + S_+\bigotimes_{i=1}^n U_i \rho_i + \bigotimes_{i=1}^n \rho_i U_i^{\dagger} S_+^\dagger + S_+ \bigotimes_{i=1}^n U_i \rho_i U_i^{\dagger} S_+^\dagger\right).
\end{align}
It then follows from~\eqref{eq:Sdef} and the fact that a density matrix is unit-normalized that
\begin{equation}
    {\rm Tr}\left(\left(\ket{0}\!\bra{0} \otimes I\right)\chi \right) = \frac{1}{4}\left( 2 + {\rm Tr}\left(S_+\bigotimes_{i=1}^n U_i \rho_i + \bigotimes_{i=1}^n \rho_i U_i^{\dagger} S_+^\dagger \right) \right).\label{eq:swapeqn}
\end{equation}
Next note that by the definition of the cyclic shift we have that
\begin{equation}
    \bra{x_1,\ldots, x_n} S_+ \ket{y_1,\ldots, y_n} = \delta_{x_1,y_n}\delta_{x_2,y_1}\cdots \delta_{x_n,y_{n-1}}.
\end{equation}
We then have that
\begin{align}
    {\rm Tr}\left(S_+\bigotimes_{i=1}^n U_i \rho_i \right) &= \sum_{i_1,\ldots, i_n} \bra{i_1,\ldots,i_n} S_+ \bigotimes_{i=1}^n U_i\rho_i \ket{i_1,\ldots, i_n}\nonumber\\
    &=\sum_{i_1,\ldots, i_n} \sum_{j_1,\ldots, j_n} \bra{i_1,\ldots,i_n} S_+ \ket{j_1,\ldots,j_n}\!\bra{j_1,\ldots,j_n} \bigotimes_{i=1}^n U_i\rho_i \ket{i_1,\ldots, i_n}\nonumber\\
    &=\sum_{i_1,\ldots, i_n} \bra{i_1,\ldots,i_n}\sum_{j_1,\ldots, j_n} \delta_{i_1,j_n}\delta_{i_2,j_1}\cdots \delta_{i_n,j_{n-1}}\bra{j_1,\ldots,j_n} \bigotimes_{i=1}^n U_i\rho_i \ket{i_1,\ldots, i_n}\nonumber\\
    &=\sum_{i_1,\ldots, i_n} \bra{i_n}U_i\rho_1 \ket{i_1}\bra{i_1}U_2\rho_2 \ket{i_2}\cdots \bra{i_{n-1}}U_n\rho_n \ket{i_n} \nonumber\\
    &=\sum_{i_n} \bra{i_n}U_1\rho_1 U_2\rho_2 \cdots U_n\rho_n \ket{i_n}={\rm Tr}\left(\prod_{i=1}^n U_i\rho_i \right).
\end{align}
By repeating the exact same calculation, we also find that
\begin{equation}
    {\rm Tr}\left(\bigotimes_{i=1}^n \rho_i U_i^{\dagger} S_+^\dagger\right)= {\rm Tr}\left(\prod_{i=1}^n \rho_i U_i^{\dagger} \right).
\end{equation}
We therefore have that
\begin{equation}
    \frac{1}{4}\left( 2 + {\rm Tr}\left(S_+\bigotimes_{i=1}^n U_i \rho_i + \bigotimes_{i=1}^n \rho_i U_i^{\dagger}S_+^\dagger \right) \right) =\frac{1+ \text{Re}[{\rm Tr} (\prod_{i=1}^n U_i\rho_i)]}{2},
\end{equation}
and therefore the result follows from~\eqref{eq:swapeqn}.

\begin{corollary}[Density matrix power]
Let $O_\rho$ be an oracle that prepares $\ket{\psi}$ in a Hilbert space $\mathcal{H}_A\otimes \mathcal{H}_B$ such that ${\rm Tr}_A (\ket{\psi}\!\bra{\psi}) = \rho \in \mathbb{C}^{2^n\times 2^n}$.  
\begin{enumerate}
    \item If $O_{\rho}:\sigma \mapsto \sigma \otimes \ket{\psi} \!\bra{\psi}$ then for any $m$, there exists an algorithm that can estimate ${\rm Tr}(\rho^m)$ within error $\epsilon$ with probability greater than $2/3$ using $O(m/\epsilon^2)$ queries to $O_{\rho}$. Furthermore, $\Omega(m)$ queries to $O_{\rho}$ are needed for any blackbox algorithm that can estimate ${\rm Tr}(\rho^m)$ within error less than $1/3$.
    \item If $O_{\rho}:\ket{0} \mapsto \ket{\psi} $ is a unitary operation with a known inverse then for any $m$, there exists an algorithm that can estimate ${\rm Tr}(\rho^m)$ within error $\epsilon$ with probability greater than $2/3$ using $O(m/\epsilon)$ queries to $O_{\rho}$. Furthermore, $\Omega(\sqrt{m})$ queries to $O_{\rho}$ are needed for any blackbox algorithm that can estimate ${\rm Tr}(\rho^m)$ within error less than $1/3$.
\end{enumerate}
\end{corollary}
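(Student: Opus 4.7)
The plan is to instantiate the Extended Swap Test of Theorem~\ref{thm:swap_trick} with $U_i = I$ and $\rho_i = \rho$ for all $i \in \{1,\ldots,m\}$. Because $\rho^m$ is positive semi-definite, $\mathrm{Tr}(\rho^m)$ is real, and so the circuit outputs $\ket{0}$ with probability exactly $p = (1 + \mathrm{Tr}(\rho^m))/2$. A single execution consumes $m$ independent copies of $\rho$, each prepared by one call to $O_\rho$. The remaining task is to analyze the cost of estimating $p$ to the required accuracy in the two oracle models, and then to match those costs with lower bounds.

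For the upper bound of Part~1, I would collect $O(1/\epsilon^2)$ independent swap-test outcomes and return $2\hat{p}-1$. Hoeffding's inequality gives $\epsilon$-accuracy with probability at least $2/3$, costing $O(m/\epsilon^2)$ queries to $O_\rho$. For the upper bound of Part~2, the assumption that $O_\rho$ is unitary with a known inverse lets us view the full preparation-plus-swap-test circuit as a coherent unitary $W$ whose probability of flagging $\ket{0}$ equals $p$. Amplitude estimation applied to $W$ with $O(1/\epsilon)$ Grover iterates estimates $p$ to additive error $\epsilon/2$, hence $\mathrm{Tr}(\rho^m)$ to error $\epsilon$, with success probability above $2/3$; each iterate uses $O(m)$ calls to $O_\rho$ and $O_\rho^\dagger$, giving an overall cost of $O(m/\epsilon)$.

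The lower bounds are reductions. For Part~1, I would apply the quantum polynomial method: after $T$ samples, the acceptance probability of any measurement strategy is a polynomial of degree at most $T$ in the entries of $\rho$, while $\mathrm{Tr}(\rho^m)$ is a polynomial of degree exactly $m$. Picking two spectra with identical power sums $p_1,\ldots,p_{m-1}$ but differing $p_m$ produces density matrices whose $m$-th moments differ by a constant yet are indistinguishable by any $T<m$-sample algorithm, forcing $T=\Omega(m)$. For Part~2, I would reduce from unstructured search on $[m]$: given a Boolean $f$, the purification $\ket{\psi_f} = \frac{1}{\sqrt{m}}\sum_x \ket{x}\ket{f(x)}$ is prepared with $O(1)$ calls to $f$, and its visible reduction $\rho_f = (1-k/m)\ket{0}\!\bra{0} + (k/m)\ket{1}\!\bra{1}$ satisfies $\mathrm{Tr}(\rho_f^m) = (1-k/m)^m + (k/m)^m$, which equals $1$ for $k=0$ and tends to $1/e$ for $k=1$. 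This gap exceeds $2/3$, so a $\tfrac{1}{3}$-accurate estimator would decide Grover search in $O(T)$ queries, forcing $T=\Omega(\sqrt{m})$.

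The main obstacle will be making the Part~1 polynomial-method argument rigorous in the channel-oracle access model, where queries return fresh copies of the state rather than evaluating a scalar function: one must verify that adaptive entangling operations across copies do not enable acceptance probabilities of degree higher than $T$ in the entries of $\rho$, which follows from the tensor-product structure of the joint state $\rho^{\otimes T}$. The other pieces---Hoeffding, amplitude estimation, and the Grover-based reduction---are routine once the extended swap test is in place.
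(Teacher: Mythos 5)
Your upper bounds coincide with the paper's (extended swap test with $U_i=I$ on $m$ copies, direct sampling for the channel oracle, amplitude estimation for the unitary oracle), and your Part~2 lower bound is a reduction from Grover search in the same spirit as the paper's, which instead binary-searches over half-spaces using one Grover iterate and a trace power $n\in\Theta(L)$. One quantitative slip in your Part~2 reduction: $1-1/e\approx 0.632<2/3$, so the $k=0$ versus $k=1$ gap in ${\rm Tr}(\rho_f^m)$ is \emph{not} large enough to be resolved by an estimator with additive error $1/3$ and success probability $2/3$; this is easily repaired by planting a constant number of marked items (e.g.\ $k=3$ gives a gap $1-e^{-3}-o(1)>2/3$ while the search lower bound remains $\Omega(\sqrt{m})$).

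The genuine gap is in your Part~1 lower bound. First, the observation that the acceptance probability has degree at most $T$ in the entries of $\rho$ while ${\rm Tr}(\rho^m)$ has degree $m$ proves nothing on its own: a degree-$T$ polynomial may well approximate a degree-$m$ polynomial to within $1/3$ on the set of density matrices, so you need an \emph{approximate}-degree argument, which your proposal reduces to exhibiting two spectra with identical power sums $p_1,\dots,p_{m-1}$ whose $m$-th power sums differ by more than $2/3$. No such pair exists. By Cauchy--Schwarz, $p_{m-1}^2\le p_{m-2}\,p_m$, and trivially $p_m\le p_{m-1}$; hence once $p_{m-2}=c'$ and $p_{m-1}=c$ are fixed (with $c\le c'\le 1$), the value $p_m$ is confined to an interval of length at most
\begin{equation*}
c-\frac{c^2}{c'}\;\le\;c(1-c)\;\le\;\frac{1}{4}\,,
\end{equation*}
so the achievable gap is at most $1/4<2/3$ and the reduction cannot force the claimed distinguishing task. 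The paper avoids this entirely with a hypothesis-testing argument: it compares $\sigma_0=\ket{\phi}\!\bra{\phi}$ with $\sigma_1=(1-\delta)\ket{\phi}\!\bra{\phi}+\delta\,\mathds{1}/N$ for $\delta=\Theta(1/m)$, for which ${\rm Tr}(\sigma_0^m)=1$ while ${\rm Tr}(\sigma_1^m)$ is a constant bounded below $1/3$, yet $S(\sigma_0\|\sigma_1)=O(\delta)$, so the relative-entropy bound on the optimal discrimination error over $n$ copies forces $n\in\Omega(1/\delta)=\Omega(m)$ copies --- and hence $\Omega(m)$ queries, since each query to the channel oracle merely appends a fresh copy. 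These two states deliberately do \emph{not} have matching lower moments; closeness is measured in relative entropy rather than by moment matching, which is what makes a constant gap in the $m$-th moment compatible with indistinguishability from $o(m)$ copies.
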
\label{col:matrix_powers}

\begin{proof}
Case 1) The upper bound follows directly from statistical sampling and the circuit of~\ref{thm:swap_trick} and standard bounds on the variance of the mean and then applying Chebyshev's inequality.  The lower bound follows from state discrimination bounds.

Density matrix power protocol is well known in the literature. In situations when qubits can be reused after measurement, \cite{yirka2020qubit} gives a protocol where the number of qubits does not depend on $n$.

First, let us define an isometry superoperator $I$ such that $I: U \mapsto U\otimes \mathds{1}$ for any matrix $U$ and $\mathds{1}\in \mathbb{C}^{2^n\times 2^n}$.  It is then clear that for any operator $U$ and density operator $\sigma$ in $\mathbb{C}^{2^n\times 2^n}$, $I(U) O_{\rho}(\sigma) I(U^\dagger) = O_{\rho}(U\sigma U^\dagger)$.  Therefore without loss of generality, we can always assume that state is initialized to the state $\rho^{\otimes n} = O_{\rho}^{n}(1)$ which requires $n$ queries.  Put in other words, we can always assume that all the queries are performed at the beginning of the protocol given our choice of oracle.

This fact makes state discrimination bounds a natural way to understand the limitations of this protocol since the number of copies used in the optimal measurement to discriminate the states is equivalent to the number of queries for this oracle.  Let us take $\ket{\phi}$ to be an arbitrary pure state of dimension $N\ge 2$.  We then consider the task of discriminating between the states
\begin{align}
    \sigma_0 &= \ket{\phi} \!\bra{\phi}\nonumber\\
    \sigma_1 &= (1-\epsilon)\ket{\phi} \!\bra{\phi} + \frac{\epsilon\mathds{1}}{N},
\end{align}
for some $\epsilon >0$.  We can discriminate between the two states by noting that ${\rm Tr}(\sigma_0^n) = 1$ for all $n$ but we see from the binomial theorem that 
\begin{equation}
    {\rm Tr}(\sigma_1^n) = (1- \epsilon + \epsilon/N)^n + (N-1) \left(\frac{\epsilon}{N}\right)^n = 1-n\left(1-\frac{1}{N}\right)\epsilon + O(\epsilon^2)\label{eq:lb1}
\end{equation}
Thus if we could compute ${\rm Tr}(\sigma^n)$ within error $1/3$ and success probability $2/3$ then we could distinguish the two states given that $n$ is chosen to be in $\Theta(1/\epsilon)$. 

The optimal failure probability of any POVM discriminating between these states given an $n$-copy state is given by the quantum relative entropy between $\sigma_0$ and $\sigma_1$~\cite{vedral2002role}:
\begin{equation}
    P_{\rm fail} = e^{-n S(\sigma_0||\sigma_1)}\label{eq:lb0}
\end{equation}
In this case the relative entropy between the two states is
\begin{align}
    S(\sigma_0||\sigma_1) &= {\rm Tr}(\sigma_0 \ln \sigma_0) - {\rm Tr}(\sigma_0 \ln \sigma_1) \nonumber\\
    &= - {\rm Tr}(\sigma_0 \ln \sigma_1) = - \ln((1-\epsilon) + \epsilon/N) \le \ln\left(\frac{1}{1-\epsilon}\right)= \epsilon + O(\epsilon^2).\label{eq:lb2}
\end{align}
Thus in order for the success probability, optimally to be $1/3$, it is necessary to pick $n \in \Omega(1/\epsilon)$.  We then see that if the protocol given in~\eqref{eq:lb1} required $o(n) = o(1/\epsilon)$ queries to prepare the state then we would violate the bound in~\eqref{eq:lb0} based on the relative entropy calculation in~\eqref{eq:lb2}.  Therefore the query complexity of computing ${\rm Tr}(\rho^n)$ within constant error using these non-unitary oracles is in $\Omega(n)$

Case 2) The proof of the upper bound is straightforward.  From Theorem~\ref{thm:swap_trick}, a circuit exists that uses $m$ queries to $O_{\rho}$ and such that the marked qubit is $0$ with probability $\frac{1+ Re[{\rm Tr} (\rho^m)]}{2}$.  Since $O_{\rho}$ is unitary and invertible, amplitude estimation can be used to learn the value~\cite{brassard2002quantum} within error $\epsilon$ and probability of failure at most $1/3$ using $O(1/\epsilon)$ applications of the circuit in~\ref{thm:swap_trick}.  Since this circuit uses $O(m)$ queries, there exists a quantum algorithm that requires $O(m/\epsilon)$ queries to $O_\rho$ that outputs the trace within the desired uncertainties.  

The lower bound follows by contradiction.  Let us consider the search problem.  We can solve the search by searching half-spaces for the presence of a marked item.  This search can be done by using a single iteration of Grover's algorithm and then converting the result into a mixed state by sampling and forgetting the result. 
In particular, let the initial state for Grover's search be $\ket{\psi}$ and assume that the marked state is $\ket{m}$.  At one iteration of the Grover oracle, the state is of the form
\begin{equation}
    \ket{\psi'} := \sin(3\theta) \ket{m} + \cos(3\theta) \ket{\psi^\perp}.
\end{equation}
where $\theta = \arcsin(\sqrt{\frac{M}{N}})$ for the case of $N$ items and $M$ marked items in the set.

If there are $L$ items that we are searching over.  We reduce the problem to a trace estimation problem by dividing the domain in half and determining whether $m\in [0,L/2)$ or its complement.  We need to consider two cases, either $M=0$, which corresponds to no marked entry in the half space or $M=1$ which implies that there is a marked element in that space.  Let us first consider the case where $M=1$.  In this case $\theta=\arcsin(\sqrt{2/L})$ and we will convert it into a mixed state $\rho$ by performing a query to the Grover oracle $\mathcal{O}_g\ket{x} \ket{0} = \ket{x} \ket{\delta_{x,m}}$ with output on an ancillary qubit.  Then if we perform a partial trace over this qubit the state that we are left with is
\begin{equation}
    \sigma'_0 = \sin^2(3\theta) \ket{m}\!\bra{m} + \cos^2(3\theta) \ket{\psi^\perp}\!\bra{\psi^\perp}.
\end{equation}
On the other hand, if there is no marked state within the range we would see a density operator of the form
\begin{equation}
    \sigma_1' = \ket{\psi^\perp}\!\bra{\psi^\perp}.
\end{equation}
We can distinguish these two cases because, for and $n>0$, ${\rm Tr}({\sigma'}_0^n) < 1$ but ${\rm Tr}({\sigma'}_1^n) = 1$.  This allows us to reduce the problem of search to one of trace estimation.

In order to find the number of queries needed, we first need to estimate the power $n$ that would be required to lower the trace to at most a value of $2/3$. Then for any $n\ge 2$
\begin{align}
    {\rm Tr}(\sigma_0'^n)= \sin^{2n} (3\theta) + \cos^{2n}(3\theta) = \left(1-\frac{(3\theta)^2}{2!} \right)^{2n} + O(\theta^4)= 1- 9n\theta^2 +O(\theta^4).
\end{align}
Since $\theta \in \Theta(1/\sqrt{L})$ we therefore have that the trace can be lowered to $2/3$ by taking a value of $n\in \Theta(L)$.  

If we repeat this $O(\log(L))$ times.  Now let us assume that estimating the trace were possible within constant accuracy using $O(\sqrt{L})$ queries then the total number of queries would be in
\begin{equation}
    O(\sum_j \sqrt{L/2^j}) = O(\sqrt{L}).
\end{equation}
Thus if we follow this procedure starting at $L=N$ we could therefore solve the search problem in $O(\sqrt{N})$ queries, which is impossible~\cite{brassard2002quantum}.  Therefore the trace estimation process must require $\Omega(\sqrt{N})$ queries when provided the density operator in a purified form yielded by a unitary oracle.
\end{proof}

The use of the swap test for computing powers of density matrices is been established in the literature. Notably, ref.~\cite{yirka2020qubit} utilizes qubit resets to improve the spatial complexity of the algorithms from $O(nm)$ to $O(n)$.

The above results show that the extended swap test is asymptotically optimal if the input is provided by a non-unitary oracle.  However, if a purification is provided then the above result shows that there is a potential quadratic gap between the cost required by applying amplitude estimation on the result of the extended swap test and the lower bound.  Determining whether the lower bound is tight under these circumstances remains an open question.

%%%%%%%%%%%%%%%%%%%%%%%%%%%%%%%%%%%%%%%

\section{Efficient algorithm for computing the reverse divergence for the Unitary Network}\label{sec:proof_gradient_alg}

Our algorithm for estimating the gradient involves using the extended swap test in concert with amplitude estimation to estimate the numerator and denominator in the expression for the gradient.  We state the amplitude estimation result below for completeness.

\begin{theorem}[Amplitude Estimation~\cite{brassard2002quantum}]\label{thm:amp_est} \ \\
For any positive integers $k$, $M$ and unitary matrix $U$ and projector $P_{\rm mark}$ such that ${\rm Tr}(P_{\rm mark} U\ket{0}\!\bra{0}U^\dagger) =p$,
there exists an algorithm that
outputs $\tilde{p}: (0 \leq \tilde{p} \leq 1)$ such that
\begin{equation}
     \left| \tilde{p} - p \right| \;\leq\;
2 \pi k \frac{\sqrt{p(1-p)}}{M} +  k^2 \frac{\pi^2}{M^2} \label{ae}
\end{equation}
with
probability at least $\frac{8}{\pi^2}$ when $k=1$ and
with probability greater than $1-\frac{1}{2(k-1)}$ for $k
\geq 2$.  It uses $O(M)$ evaluations of $U$ and $I - 2 P_{\rm mark}$.
\end{theorem}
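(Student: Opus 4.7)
The plan is to reduce amplitude estimation to quantum phase estimation applied to a Grover-type amplitude-amplification operator. First I would decompose $U\ket{0} = \sin(\theta)\ket{\psi_{\rm good}} + \cos(\theta)\ket{\psi_{\rm bad}}$, where $\ket{\psi_{\rm good}}$ is the normalized projection of $U\ket{0}$ onto the range of $P_{\rm mark}$ and $\ket{\psi_{\rm bad}}$ is its orthogonal complement, so that $p = \sin^2(\theta)$. Define the Grover iteration $Q = -U(I - 2\ket{0}\!\bra{0})U^{\dagger}(I - 2 P_{\rm mark})$ and check by direct computation that $Q$ leaves the two-dimensional subspace $\mathrm{span}(\ket{\psi_{\rm good}}, \ket{\psi_{\rm bad}})$ invariant and acts on it as a rotation by angle $2\theta$. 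Its eigenvalues restricted to this subspace are therefore $e^{\pm 2i\theta}$, and $U\ket{0}$ decomposes as an equal-weight superposition of the two corresponding eigenvectors.

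Next, I would apply quantum phase estimation to $Q$ using an ancilla register of $\lceil \log_2 M \rceil$ qubits, with the main register initialized to $U\ket{0}$. Because $U\ket{0}$ is supported on both eigenvectors, the measured integer outcome $y \in \{0,\ldots,M-1\}$ is distributed as a mixture of the two individual phase-estimation distributions for the phases $\pm\theta/\pi \bmod 1$; both outcomes yield the same final estimate $\tilde p := \sin^2(\pi y/M)$ via the symmetry $\sin^2(\pi y/M) = \sin^2(\pi(1 - y/M))$. The standard analysis of phase estimation for arbitrary (non-dyadic) phases bounds the probability that $y/M$ differs from $\theta/\pi$ by more than $k/M$ (mod $1$) by $1 - 8/\pi^2$ when $k=1$ and by $1/(2(k-1))$ when $k\ge 2$. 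Implementing phase estimation requires $O(M)$ controlled applications of $Q$, hence $O(M)$ calls each to $U$, $U^{\dagger}$ and $I - 2 P_{\rm mark}$.

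The final step converts the phase estimate into the claimed amplitude bound via the identity $\sin^2(\tilde{\theta}) - \sin^2(\theta) = \sin(\tilde{\theta} - \theta)\sin(\tilde{\theta}+\theta)$, together with $|\sin(\tilde{\theta}-\theta)| \le |\tilde{\theta} - \theta| \le \pi k/M$ and the bound $|\sin(\tilde{\theta}+\theta)| \le |\sin(2\theta)| + |\sin(\tilde{\theta}-\theta)| = 2\sqrt{p(1-p)} + |\tilde{\theta}-\theta|$. Multiplying yields
\begin{equation}
    |\tilde p - p| \;\le\; \frac{\pi k}{M}\left(2\sqrt{p(1-p)} + \frac{\pi k}{M}\right) \;=\; \frac{2\pi k \sqrt{p(1-p)}}{M} + \frac{\pi^2 k^2}{M^2},
\end{equation}
which matches the statement.

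The main technical obstacle is the phase-estimation tail bound for phases that are not multiples of $1/M$. Making it tight enough to recover the advertised constants $8/\pi^2$ and $1/(2(k-1))$ comes down to estimating truncated geometric sums of the form $M^{-2}\bigl|\sum_{j=0}^{M-1} e^{2\pi i j \delta}\bigr|^2 = \sin^2(\pi M \delta)/(M^2 \sin^2(\pi\delta))$ and taking the worst-case offset $\delta$ of the true phase relative to the measurement grid. Once this Dirichlet-kernel estimate is in hand, both the spectral analysis of $Q$ and the final trigonometric step reduce to routine algebra.
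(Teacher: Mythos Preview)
The paper does not give its own proof of this statement: it is quoted verbatim as a known result from Brassard, H{\o}yer, Mosca and Tapp~\cite{brassard2002quantum} and used as a black box in subsequent complexity arguments. Your proposal is a correct reconstruction of the standard proof from that reference---the spectral analysis of the Grover iterate $Q$ on the two-dimensional invariant subspace, phase estimation with $O(M)$ controlled-$Q$ calls, the Dirichlet-kernel tail bound yielding the constants $8/\pi^2$ and $1/(2(k-1))$, and the trigonometric conversion $\sin^2\tilde\theta-\sin^2\theta=\sin(\tilde\theta-\theta)\sin(\tilde\theta+\theta)$---so there is nothing to compare against here beyond noting that your outline matches the original source the paper cites.
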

With this result in place, we can then prove our main for the gradient of the reverse divergence for the unitary network, which we restate below for convenience.
\generic*

\begin{proof}
We compute the gradient by estimating the numerator and then the denominator and return our estimate of the gradient to be the quotient of the two. Under the assumption that the error in the numerator is at most $|\delta_1|$ and the error in the denominator is at most $|\delta_2|$ where both quantities are assumed to have a relative error at most $1/2$, we have from the triangle inequality

\begin{align}
    &\left|\frac{{\rm Tr}\left(\left\{{\rm Tr_h}({[\widetilde{H}_k, \sigma]}),\sigma_v\right\} \rho^{-1}\right)}{{\rm Tr}\left(\sigma_v^2 \rho^{-1}\right)}-\frac{{\rm Tr}\left(\left\{{\rm Tr_h}({[\widetilde{H}_k, \sigma]}),\sigma_v\right\} \rho^{-1}\right)+\delta_1}{{\rm Tr}\left(\sigma_v^2 \rho^{-1}\right)+\delta_2}\right|\nonumber\\
    &\le \frac{|\delta_1|}{{\rm Tr}\left(\sigma_v^2 \rho^{-1}\right)}+ \left| {\rm Tr}\left(\left\{{\rm Tr_h}({[\widetilde{H}_k, \sigma]}),\sigma_v\right\} \rho^{-1}\right)+|\delta_1| \right|\left|\frac{1}{{\rm Tr}\left(\sigma_v^2 \rho^{-1}\right)} - \frac{1}{{\rm Tr}\left(\sigma_v^2 \rho^{-1}\right)+\delta_2} \right|\nonumber\\
    &\le \frac{|\delta_1|}{{\rm Tr}\left(\sigma_v^2 \rho^{-1}\right)}+ \frac{3}{2}\left| {\rm Tr}\left(\left\{{\rm Tr_h}({[\widetilde{H}_k, \sigma]}),\sigma_v\right\} \rho^{-1}\right) \right|\left|\frac{\delta_2}{{\rm Tr}^2\left(\sigma_v^2 \rho^{-1}\right) +\delta_2 {\rm Tr}\left(\sigma_v^2 \rho^{-1}\right) }  \right|\nonumber\\
    &\le \frac{|\delta_1|}{{\rm Tr}\left(\sigma_v^2 \rho^{-1}\right)}+ 3|\delta_2|\left| \frac{{\rm Tr}\left(\left\{{\rm Tr_h}({[\widetilde{H}_k, \sigma]}),\sigma_v\right\} \rho^{-1}\right)}{{\rm Tr}^2\left(\sigma_v^2 \rho^{-1}\right)} \right|\nonumber\\
    &= \frac{|\delta_1|}{{\rm Tr}\left(\sigma_v^2 \rho^{-1}\right)}+ 3|\delta_2|\left| \frac{{\rm Tr}\left(({[\widetilde{H}_k, \sigma]})\left\{\sigma_v, \rho^{-1}\right\} \otimes I/2^{n_h}\right)}{{\rm Tr}^2\left(\sigma_v^2 \rho^{-1}\right)} \right|\nonumber\\
    &\le \frac{|\delta_1|}{{\rm Tr}\left(\sigma_v^2 \rho^{-1}\right)}+ 3|\delta_2|\left| \frac{{\rm Tr}\left(\left\{\sigma,\left\{\sigma_v, \rho^{-1}\right\} \otimes I/2^{n_h}\right)\right\}}{{\rm Tr}^2\left(\sigma_v^2 \rho^{-1}\right)} \right|\le \frac{|\delta_1| + 12 |\delta_2|}{{\rm Tr}\left(\sigma_v^2 \rho^{-1}\right)}
 \label{tighter_bound}
\end{align}
Here we used that $\|\widetilde{H}_k\| = 1$ and $|\delta_2| \le 1/2$. Therefore to ensure an error of $\epsilon$ it suffices to take $|\delta_1| = |\delta_2|$ and
\begin{equation}
    |\delta_2| \le  \frac{\epsilon {\rm Tr}(\sigma_v^2 \rho^{-1})}{13} \label{eq:bds}
\end{equation}
where the sufficiency condition on the value of $\epsilon$ is therefore guaranteed by $\epsilon \le \frac{13}{2{\rm Tr}(\sigma_v^2 \rho^{-1})}.$

Having the relationships between the errors in the numerator, denominator and the entire gradient, we will now focus on estimating the gradient. 

Then we can use the extended swap test to sample from ${\rm Tr}[\sigma^2_v \rho^{-1}]$. A single run of the test  requires $O(N)$ queries to oracles $e^{i H_k \theta_k}$ to prepare $\sigma$, a single query to prepare $\rho^{-1}/{\rm Tr}(\rho^{-1})$ and $O(n_v)$ gates to preform swap. Estimating the denominator up to $\delta_2$ we need $O\left( \frac{N{\rm Tr}^2(\rho^{-1})}{\delta_2^2}\right)$ queries and  $O\left( \frac{n_v{\rm Tr}^2(\rho^{-1})}{\delta_2^2}\right)$ gates if we used sampling, but Theorem~\ref{thm:amp_est} allows us to reduce the complexity quadratically in terms of the error.

For our case, it is sufficient to take $k=1$ and consider the worst-case scenario $p=1/2$. Thus, we only need $O\left( \frac{N}{\delta_2}\right)$ queries and  $O\left( \frac{n_v}{\delta_2} \right)$ gates for the denominator. Similarly, we can estimate the numerator by expanding the numerator and the denominator into a linear combination of $4$ terms and estimating each one using the extended swap test and amplitude estimation. The term $\widetilde{H}_K$ requires $O(N)$ queries and thus the complexity of the numerator is also $O\left( \frac{N{\rm Tr}(\rho^{-1})}{\delta_1}\right)$ queries and  $O\left( \frac{n_v{\rm Tr}(\rho^{-1})}{\delta_1} \right)$ gates. This allows us to state the complexity of gradient computation as
\begin{equation}
    N_{\rm ops} = N_{\rm queries} + N_{\rm gates} \in O\left(\frac{n_v+N}{ \min\{|\delta_1|,|\delta_2|\}} \right) = O\left(\frac{n_v+N}{\epsilon \rm{Tr}(\sigma_v^2 \rho^{-1} / {\rm Tr}(\rho^{-1}))} \right).
\end{equation}

\end{proof}

\section{Training QBMs}
Here we provide a specific quantum algorithm for computing the derivative of the reverse R\'eyni divergence for Boltzmann machines.  First we provide a proof of the gradients for the reverse gradient and then provide a method that uses the extended swap test to evaluate the gradient in the subsequent subsection.
\subsection{Proof of Theorem~\ref{thm:boltz}: Gradients for QBMs}\label{sec:proof_qbm}
The aim of training a quantum Boltzmann machine here is then to minimize the R\'enyi Divergence $\widetilde{D}_2(\rho || \sigma(\theta))$ for training distribution $\rho$.  We further assume that the user has access to an oracle that is capable of preparing $\sigma(\theta)$ for any $\theta$ and wishes to use this oracle through a gradient descent process to minimize the R\'enyi Divergence. We show below a method that can be used to approximate the gradient through a commutator expansion.

\begin{align}
    \partial_\theta\sigma(\theta) &= \frac{{\rm Tr_h}\left(\partial_{\theta} e^{-H(\theta)} \right)}{{\rm Tr}\left( e^{-H(\theta)}\right)}-\frac{{\rm Tr_h}\left(e^{-H(\theta)} \right){\rm Tr}\left(\partial_{\theta} e^{-H(\theta)} \right)}{{\rm Tr}\left( e^{-H(\theta)}\right)^2} \nonumber\\
    &= \frac{{\rm Tr_h}\left(\partial_{\theta} e^{-H(\theta)} \right)}{{\rm Tr}\left( e^{-H(\theta)}\right)}-\frac{\sigma(\theta){\rm Tr}\left(\partial_{\theta} e^{-H(\theta)} \right)}{{\rm Tr}\left( e^{-H(\theta)}\right)} 
\end{align}

We therefore have that
\begin{align}
    &\partial_{\theta} D_2(\rho\|\sigma(\theta)) = \frac{- {\rm Tr}\left(\rho^2 \sigma^{-1}(\theta) (\frac{{\rm Tr_h}\left(\partial_{\theta} e^{-H(\theta)} \right)}{{\rm Tr}\left( e^{-H(\theta)}\right)})\sigma^{-1}(\theta)\right)}{{\rm Tr}\left(\rho^2 \sigma^{-1}\right)}+\frac{{\rm Tr}\left(\partial_{\theta} e^{-H(\theta)} \right)}{{\rm Tr}\left( e^{-H(\theta)}\right)}\nonumber\\
    &=\frac{- {\rm Tr}\left(\rho^2 \sigma^{-1}(\theta) (\frac{{\rm Tr_h}\left(\partial_{\theta} e^{-H(\theta)} \right)}{{\rm Tr}\left( e^{-H(\theta)}\right)})\sigma^{-1}(\theta)\right)}{{\rm Tr}\left(\rho^2 \sigma^{-1}\right)}-\frac{{\rm Tr}\left(\left(\partial_{\theta} H(\theta)\right)e^{-H(\theta)} \right)}{{\rm Tr}\left( e^{-H(\theta)}\right)}\nonumber\\
     &=\frac{ {\rm Tr}\left(\rho^2 \sigma^{-1}(\theta) ({{\rm Tr_h}\left(\int_0^1 e^{-sH(\theta)} (\partial_{\theta} H(\theta)) e^{-(1-s)H(\theta)}\right)})\sigma^{-1}(\theta)\right)}{{\rm Tr}\left(\rho^2 \sigma^{-1}\right){\rm Tr}\left( e^{-H(\theta)}\right)}-\frac{{\rm Tr}\left(\left(\partial_{\theta} H(\theta)\right)e^{-H(\theta)} \right)}{{\rm Tr}\left( e^{-H(\theta)}\right)}\nonumber\\
     &=\frac{ {\rm Tr}\left(\rho^2 \sigma^{-1}(\theta) ({{\rm Tr_h}\left(\int_0^1 e^{-sH(\theta)} (\partial_{\theta} H(\theta)) e^{-(1-s)H(\theta)}\right)})\sigma^{-1}(\theta)\right)}{{\rm Tr}\left(\rho^2 \sigma^{-1}\right){\rm Tr}\left( e^{-H(\theta)}\right)}-\frac{{\rm Tr}\left(\left(\partial_{\theta} H(\theta)\right)e^{-H(\theta)} \right)}{{\rm Tr}\left( e^{-H(\theta)}\right)}\nonumber\\
\end{align}
This expression can be simplified using Hadamard's lemma and reduce to a statistical sampling problem. The expression is valid, but requires preparations of the inverse of the partial trace which is not ideal.

If we assert that $\rho$ is maximum rank, we  can instead evaluate the reverse divergence
\begin{equation}
    D_2(\sigma(\theta)\| \rho) = \log\left({\rm Tr}\left(\sigma^2 \rho^{-1} \right)\right).
\end{equation}
Next using Hadamard's lemma which states that $e^{A} B e^{-A} = \sum_{p=0}^\infty \frac{{\rm ad}_A^p(B)}{p!}$ where ${\rm ad}_A^p(B)$ is the $p$-fold nested commutator of $B$ with respect to $A$.
\begin{align}
    &\partial_{\theta} D_2(\sigma(\theta)\|\rho) = \frac{{\rm Tr}\left(\partial_\theta\sigma(\theta) \rho^{-1}\sigma(\theta) + \sigma(\theta)\rho^{-1}\partial_\theta\sigma(\theta)  \right)}{{\rm Tr}\left(\sigma^2 \rho^{-1} \right)}= \frac{{\rm Tr}\left(\{\partial_\theta\sigma(\theta),\sigma(\theta)\} \rho^{-1}\right)}{{\rm Tr}\left(\sigma^2 \rho^{-1} \right)}\nonumber\\
    &= \frac{{\rm Tr}\left(\left\{\frac{{\rm Tr_h}\left(\partial_{\theta} e^{-H(\theta)} \right)}{{\rm Tr}\left( e^{-H(\theta)}\right)}-\frac{\sigma(\theta){\rm Tr}\left(\partial_{\theta} e^{-H(\theta)} \right)}{{\rm Tr}\left( e^{-H(\theta)}\right)} ,\sigma(\theta)\right\} \rho^{-1}\right)}{{\rm Tr}\left(\sigma^2 \rho^{-1} \right)}\nonumber\\
    &= \frac{{\rm Tr}\left(\left\{\frac{{\rm Tr_h}\left(\partial_{\theta} e^{-H(\theta)} \right)}{{\rm Tr}\left( e^{-H(\theta)}\right)} ,\sigma(\theta)\right\} \rho^{-1}\right)}{{\rm Tr}\left(\sigma^2 \rho^{-1} \right)}+2\frac{{\rm Tr}\left(\left(\partial_{\theta} H(\theta)\right)e^{-H(\theta)} \right)}{{\rm Tr}\left( e^{-H(\theta)}\right)}\nonumber\\
    &= -\int_0^1\frac{{\rm Tr}\left(\left\{{{\rm Tr_h}\left( e^{-sH(\theta)} (\partial_{\theta} H)e^{-(1-s)H(\theta)} \right)}{} ,\sigma(\theta)\right\} \rho^{-1}\right)}{{\rm Tr}\left(\sigma^2 \rho^{-1} \right){\rm Tr}\left( e^{-H(\theta)}\right)}\mathrm{d}s+2\frac{{\rm Tr}\left(\left(\partial_{\theta} H(\theta)\right)e^{-H(\theta)} \right)}{{\rm Tr}\left( e^{-H(\theta)}\right)}\nonumber\\
    &= -\int_0^1\frac{{\rm Tr}\left({\left( e^{-sH(\theta)} (\partial_{\theta} H)e^{-(1-s)H(\theta)} \right)} (\sigma(\theta)\rho^{-1}\otimes I_h+\rho^{-1}\sigma(\theta)\otimes I_h)\right)}{{\rm Tr}\left(\sigma^2 \rho^{-1} \right){\rm Tr}\left( e^{-H(\theta)}\right)}\mathrm{d}s\nonumber\\
    &\qquad+2\frac{{\rm Tr}\left(\left(\partial_{\theta} H(\theta)\right)e^{-H(\theta)} \right)}{{\rm Tr}\left( e^{-H(\theta)}\right)}\nonumber
    \end{align}
    \begin{align}
    &= -\sum_{p=0}^\infty\int_0^1\frac{{\rm Tr}\left({\rm ad}_{-H}^{p}(\partial_\theta H)\frac{s^p}{p!} e^{-H}(\sigma(\theta)\rho^{-1}+\rho^{-1}\sigma(\theta))\otimes I_h\right)}{{\rm Tr}\left(\sigma^2 \rho^{-1} \right){\rm Tr}\left( e^{-H(\theta)}\right)}\mathrm{d}s+2\frac{{\rm Tr}\left(\left(\partial_{\theta} H(\theta)\right)e^{-H(\theta)} \right)}{{\rm Tr}\left( e^{-H(\theta)}\right)}\nonumber\\
    &= -\sum_{p=0}^\infty\frac{{\rm Tr}\left({\rm ad}_{-H}^{p}(\partial_\theta H) e^{-H}(\sigma(\theta)\rho^{-1}+\rho^{-1}\sigma(\theta))\otimes I_h\right)}{{\rm Tr}\left(\sigma^2 \rho^{-1} \right){\rm Tr}\left( e^{-H(\theta)}\right)(p+1)!}+2\frac{{\rm Tr}\left(\left(\partial_{\theta} H(\theta)\right)e^{-H(\theta)} \right)}{{\rm Tr}\left( e^{-H(\theta)}\right)}\nonumber\\
    &= -\sum_{p=0}^\infty\frac{{\rm Tr}\left({\rm ad}_{-H}^{p}(\partial_\theta H) e^{-H}(\{\sigma(\theta),\rho^{-1}\})\otimes I_h\right)}{{\rm Tr}\left(\sigma^2 \rho^{-1} \right){\rm Tr}\left( e^{-H(\theta)}\right)(p+1)!}+2\frac{{\rm Tr}\left(\left(\partial_{\theta} H(\theta)\right)e^{-H(\theta)} \right)}{{\rm Tr}\left( e^{-H(\theta)}\right)}\label{eq:longDeriv}
\end{align}
Here we use the notation that $I_h$ is the identity on the hidden subsystem and take $\{A,B\} = AB + BA$ to be the anti-commutator.  Now, if we change the training set slightly to define $\rho\mapsto (\rho^+)^{-1}$ where $\rho^+$ is the Moore-Penrose pseudo-inverse of a matrix then

\begin{equation}
    \partial_{\theta} D_2(\sigma(\theta)\|(\rho^+)^{-1})= -\sum_{p=0}^\infty\frac{{\rm Tr}\left({\rm ad}_{-H}^{p}(\partial_\theta H) e^{-H}(\{\sigma(\theta),\rho^{+}\}\otimes I_h)\right)}{{\rm Tr}\left(\sigma^2 \rho^{+} \right){\rm Tr}\left( e^{-H(\theta)}\right)(p+1)!}+2\frac{{\rm Tr}\left(\left(\partial_{\theta} H(\theta)\right)e^{-H(\theta)} \right)}{{\rm Tr}\left( e^{-H(\theta)}\right)}
 \end{equation}
Note that this form is well defined even if $\rho$ is rank-deficient, and the above observations combined provide a proof for Theorem~\ref{thm:boltz}.

\qbmgradient*

\subsection{Query Complexity of Training QBMs}
Here we discuss the query complexity of training QBMs.  In contrast to earlier results which give the gate complexities, we focus here on the query complexity owing to the challenges involved in preparing approximate thermal states on quantum computers, which make them less accessible for near term quantum computers than QBMs.  Of course, the ease with which a correspondence between classical and quantum Boltzmann machines can be drawn along with the fact that the issues of entanglement induced barren plateaus can be directly addressed~\cite{marrero2020entanglement} makes such models appealing both theoretically and for long-term applications.  For information about training with respect to forward divergences see~\cite{wiebe2019generative}.

In this section, we will assume three oracles.  First, we assume Prepare and Select oracles which prepare the coefficients of the Hamiltonian terms and applies a selected term from the Hamiltonian to a quantum state vector.  In addition, we assume another oracle $\mathcal{O}_{e^{H}}$ which prepares the thermal state corresponding to the Hamiltonian.  Note that thermal state preparation is not efficient unless $\QMA \subseteq \BQP$; however, due to the diversity of methods employed in thermal state preparation~\cite{poulin2009sampling,yung2012quantum,van2017quantum,chowdhury2020variational} we leave this operation as an oracle rather than employing a specific method that uses Prepare and Select to construct such a state.

\begin{theorem}
 Let $H=\sum_{j=1}^N \alpha_j U_j$ for unitary $U_j$ and $\alpha_j\ge0$ and let Prepare: $\ket{0} \mapsto \sum_j \sqrt{\alpha_j} \ket{j} /\sqrt{\sum_j \alpha_j}$ and Select: $\ket{j}\ket{\psi} \mapsto \ket{j} U_j \ket{\psi}$ and let $\mathcal{O}_{e^H}$ prepare the thermal state of the Boltzmann machine for a given set of weights ($\theta$).  Then there exists $\Upsilon>0$ such that for all $\epsilon < \Upsilon$ there is a quantum algorithm that can estimate $\partial_\theta D(\sigma_v \| \rho^{-1})$ within error $\epsilon$  with probability at least $2/3$ using $O(N e^{2\sum_j \alpha_j}/\epsilon)$ queries to these oracles.
\end{theorem}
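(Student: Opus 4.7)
The plan is to start from the closed-form gradient in Theorem~\ref{thm:boltz} and estimate the numerator and denominator of each term separately, combining them via the quotient-error bound already established in the proof of Theorem~\ref{thm:gradient_generic}. The local piece $2\,{\rm Tr}((\partial_\theta H)e^{-H})/{\rm Tr}(e^{-H})$ is a plain Gibbs expectation: sample from $\mathcal{O}_{e^H}$ and Hadamard-test against the LCU expansion $\partial_\theta H=\sum_j(\partial_\theta\alpha_j)U_j$, at cost $O(N/\epsilon)$. The main object to estimate is the nonlocal commutator series from~\eqref{gradient_bm}, and the entire exponential factor will be paid there.

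To handle the series I would first truncate it at $P=O(\|\alpha\|_1+\log(1/\epsilon))$. Using the binomial identity ${\rm Ad}_{-H}^p(B)=\sum_{k=0}^p(-1)^k\binom{p}{k}H^{k}BH^{p-k}$ and expanding each $H^m$ via Prepare/Select, the $p$-th summand admits an LCU representation of $\ell_1$-weight at most $(2\|\alpha\|_1)^p$; because $\sum_p(2\|\alpha\|_1)^p/(p+1)!\le e^{2\|\alpha\|_1}/(2\|\alpha\|_1)$, the factorials force the truncation tail to fall below $\epsilon\,{\rm Tr}(\sigma_v^2\rho^{-1})$ for the $P$ above. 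Packing the indices $(p,k,l_1,\dots,l_p)$ into a single Prepare-style register in the spirit of the truncated Taylor-series block-encoding construction yields a single unitary $W$ that block-encodes $\sum_{p=0}^P {\rm Ad}_{-H}^p(\partial_\theta H)/(p+1)!$ with normalization $\Lambda=O(e^{2\|\alpha\|_1}/\|\alpha\|_1)$ and cost $O(N)$ queries to Prepare and Select per invocation.

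With $W$ in hand I would prepare the three states $e^{-H}/{\rm Tr}(e^{-H})$ and $\sigma_v\otimes I_h/2^{n_h}$ (both from $\mathcal{O}_{e^H}$, the second by tracing out the hidden register of an additional Gibbs copy), together with $\rho^{-1}/{\rm Tr}(\rho^{-1})$ padded on the hidden register, then invoke the extended swap test of Theorem~\ref{thm:swap_trick} with the unitary list $(W,I,I)$ to encode $\mathrm{Re}\,{\rm Tr}(W\cdot(\text{three states}))$ into the measurement probability of the control qubit. The anticommutator $\{\sigma_v,\rho^{-1}\}$ is split into its two ordered summands and each handled by a separate invocation. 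Amplitude estimation (Theorem~\ref{thm:amp_est}) then recovers the numerator to additive error $\epsilon\,{\rm Tr}(\sigma_v^2\rho^{-1})$ using $O(\Lambda/\epsilon)$ applications of $W$, giving the advertised total query count $O(Ne^{2\|\alpha\|_1}/\epsilon)$. The denominator ${\rm Tr}(\sigma_v^2\rho^{-1})$ is strictly cheaper, and the error-propagation calculation of Theorem~\ref{thm:gradient_generic} identifies the admissible $\Upsilon$ with the threshold $13/(2\,{\rm Tr}(\sigma_v^2\rho^{-1}))$ that keeps the relative errors in numerator and denominator below $1/2$.

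The principal obstacle is arranging the LCU block encoding so that amplitude estimation pays only one overall factor of $\Lambda$ rather than one per commutator order; this requires producing the indices $p$, $k$ and $l_1,\dots,l_p$ in a single coherent superposition whose amplitudes have magnitude $\sqrt{\binom{p}{k}\alpha_{l_1}\cdots\alpha_{l_p}/(p+1)!}$ with the sign $(-1)^k$ carried separately, and ensuring that the LCU weights still sum to a quantity dominated by $e^{2\|\alpha\|_1}$ rather than the naive $(L+1)^P$. A secondary subtlety is that $\sigma_v\rho^{-1}$ is not a density operator, which I would sidestep by the splitting trick used in the proof of Theorem~\ref{thm:gradient_generic} and by drawing the three tensor-factor states of Theorem~\ref{thm:swap_trick} from independent oracle calls, so that the trace factorises correctly into a product of unitaries and states in the form required by that theorem.
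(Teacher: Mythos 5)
Your proposal follows essentially the same route as the paper's proof: split off the local term $2\,{\rm Tr}((\partial_\theta H)e^{-H})/{\rm Tr}(e^{-H})$, bound the nested-commutator series coefficients by $(2\|\alpha\|_1)^p/(p+1)!$ so the LCU 1-norm is $O(e^{2\|\alpha\|_1})$, implement the sum coherently via a truncated-Taylor-series-style block encoding fed into the extended swap test with amplitude estimation, and propagate numerator/denominator errors exactly as in Theorem~\ref{thm:gradient_generic}. Your write-up is in fact somewhat more explicit than the paper's (the binomial expansion of ${\rm Ad}_{-H}^p$, the truncation order $P$, and the identification of $\Upsilon$ are all left implicit there), and the normalization bookkeeping for the three-state swap test that you flag as a subtlety is glossed over to the same degree in the paper itself.
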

 \begin{proof}
 Our algorithm for estimating the gradient for the reverse divergence uses the generalized swap test to evaluate each term that appears in the expression for the gradient found in Theorem~\ref{thm:boltz}.  We can describe the gradient expression in the following form:
 \begin{align}
     &\frac{\sum_{p,\{j\}}{\rm Tr}(H^{j_1} \cdots H^{j_p} (\partial_{\theta} H) H^{j_{p+1}} \cdots H^{j_{2p}}(e^{-H} / {\rm Tr}(e^{-H}) \{\sigma_v,\rho^{-1} \}\otimes I_h)}{{\rm Tr}(\sigma_v^2 \rho^{-1})(p+1)!} + 2 {\rm Tr}\left(\frac{(\partial_\theta H) e^{-H}}{{\rm Tr}(e^{-H})}\right),\nonumber\\
     &\le \sum_{p,\{j\}}\frac{C_{j,p}}{{\rm Tr}(\sigma_v^2 \rho^{-1})}+ 2 {\rm Tr}\left(\frac{(\partial_\theta H) e^{-H}}{{\rm Tr}(e^{-H})}\right)\label{eq:boltzDeriv}
 \end{align}
 where $\{j\}$ describes the set of all $j$ such that $j_1,\ldots, j_{2p}$ are integers mod $2$ and their sum is equal to $p$.
 As $\mathcal{O}_{e^{-H}}$ is a unitary operation and $\|\partial_\theta H\| = 1$, amplitude amplification can be used to estimate the final term in this sum within error $\epsilon/2$ with probability greater than $5/6$ using $O(1/\epsilon)$ queries according to Theorem~\ref{thm:amp_est}.
 
 Now let us turn our attention to the first terms in the summation in~\eqref{eq:boltzDeriv}.  Note that each $|C_{j,p}| \le 2(2 \sum_j \alpha_j)^{p} {\rm Tr}(\sigma_v^2 \rho^{-1})/(p+1)!$ from the von Neumann trace inequality.  Therefore if we begin by estimating the denominator and numerator separately, as is done in Theorem~\ref{thm:gradient_generic}, then we find that if the error in the numerator and the denominator are $|\delta_2| \le {\rm Tr}(\sigma_v^2 \rho^{-1})/2$ and $|\delta_1| \le |\sum_{p,\{j\}} C_{j,p}|/2$ then
 \begin{align}
     &\left|\frac{\sum_{p,\{j\}} C_{j,p}}{(p+1)!{\rm Tr}(\sigma_v^2 \rho^{-1})}-\frac{\sum_{p,\{j\}} C_{j,p}+\delta_1}{{\rm Tr}(\sigma_v^2 \rho^{-1}) +\delta_2}\right|\le \frac{|\delta_1|+ 3|\delta_2||\sum_{p,\{j\}}C_{j,p}|}{{\rm Tr}(\sigma_v^2\rho^{-1})}.
 \end{align}
 Thus if we wish to ensure that we can estimate this ratio within error $\epsilon/2$ then it suffices to take $|\delta_2| \le \epsilon {\rm Tr}(\sigma_v^2 \rho^{-1}) /|6\sum_{p,\{j\}} C_{j,p}|$ and $|\delta_1| \le \epsilon {\rm Tr}(\sigma_v^2 \rho^{-1})/2$.
 
 The approach that we use to evaluate this is to use the LCU lemma to implement the sum over $p$~\cite{berry2015simulating}.  In order to do so, the correct block of the larger unitary needs to be specified.  All such projectors can be implemented as $P_0 = (I - (I -2P_0))/2$, which is the sum of two unitaries.  Both require no additional query operations and thus this contribution to the costs.  Thus using the extended swap test on the linear combination of unitary matrices, we can estimate the sum by subtracting the two results from $I$ and $(I-2P_0)$.
 
 Next let us define the coefficient sum of the unitaries in the decomposition of $\sum_{p,\{j\}} C_{p,j}$ to be $\sum_{p,\{j\}} |C_{p,j}| \le 2e^{\sum_j \alpha_j}$.  The extended swap test allows us to learn such terms with an experiment with output probability 
 \begin{equation}
     \frac{1}{2} + \frac{\sum_{p,\{j\}} C_{p,j} }{2\sum_{p,\{j\}} |C_{p,j}|}.
 \end{equation}
 Thus if we wish to use the swap test to estimate the numerator then it suffices to estimate $\sum_{p,\{j\}} C_{p,j} /\sum_{p,\{j\}} |C_{p,j}|$ within error $\delta_3$.  We therefore have that
 \begin{equation}
     |\delta_3| \le \frac{\epsilon }{4 e^{2\sum_j \alpha_j}} \le \frac{\epsilon {\rm Tr}(\sigma_v^2 \rho^{-1})}{2 \sum_{p,\{j\}} |C_{p,\{j\}}|}\qquad \Rightarrow \qquad |\delta_1| \le \epsilon {\rm Tr}(\sigma_v^2 \rho^{-1})/2.
 \end{equation}
 The LCU lemma requires at most $O(N)$ queries to Prepare and Select to create the state used in the generalized swap test.  Thus the complexity of estimating the numerator within the required error tolerance with probability at least $11/12$ is in $O(N e^{2 \sum_j \alpha_j} / \epsilon)$.

Similarly, the denominator can be estimated directly using the extended swap test and we similarly find that we can estimate the denominator within the desired error tolerance using phase estimation (with probability at least $11/12$) using $O(N e^{2\sum_{j} \alpha_j} /\epsilon)$ queries.

From the union bound, the probability that any one of these estimation steps fails is at most
\begin{equation}
    P_{\rm fail} \le \frac{1}{12} + \frac{1}{12} + \frac{1}{6} = \frac{1}{3}.
\end{equation}
Therefore the protocol will succeed using promised number of query operations with success probability at least $2/3$.
 \end{proof}

\section{Thermal State Learning}\label{sec:app_thermal_state}
\subsection{Coherent Algorithm}
The performance of our coherent algorithm is claimed in Theorem~\ref{thm:deep_qnn}, which we restate below for convenience and provide a proof of the performance of our algorithm.

\thmdeep*

\begin{proof}[Proof of Theorem~\ref{thm:deep_qnn}]
We will now focus on the case when $\rho$ is a thermal state $e^{-H}/Tr[e^{-H}]$ and we  are given oracular access to the Hamiltonian $H=\sum_{j=1}^L \alpha_j U_j$ where each $U_j$ is both Hermitian and unitary and a unitary QNN with an output $\sigma = {\rm Tr}_h \left[ \prod_{j=1}^{N} e^{-iH_j\theta_j} \ket{0}\!\!\bra{0} \prod_{j=N}^{1} e^{iH_j\theta_j} \right]$. Note that $\norm{\alpha}_1=\sum_{j=1}^L \alpha_l$. In particular, let us assume a unitary and invertible quantum oracle $\mathcal{O}_U:\ket{j}\ket{\psi} \mapsto \ket{j} U_j \ket{\psi}$ for any state $\ket{\psi}$.  Additionally, let us assume a second oracle $\mathcal{O}_{exp}(t)\ket{j}\ket{\psi} = \ket{j}e^{-iH_j t}\ket{\psi}$ where $H_j$s are the elementary Hamiltonians used for constructing a QNN.

The expression for gradient of the R\'enyi divergence~\eqref{gradient_trotter} then takes form
\begin{equation}
    \partial_{\theta_k} D_2(\sigma_v\|\rho) = \frac{-i{\rm Tr}\left(\left\{{\rm Tr_h}({[\widetilde{H}_k, \sigma]}),\sigma_v\right\} \right)}{{\rm Tr}\left(\sigma_v^2 e^H\right)}. \label{eq:term_grad}
\end{equation}

The cost then, quantified by the total number of gates and queries made is given by~Theorem~\ref{thm:gradient_generic} to be in 
\begin{equation}
    N_{ops} \in O\left(\frac{n_v+N}{\epsilon {\rm Tr}(\sigma_v e^H/{\rm Tr}(e^H))} \right)
\end{equation}
The only issue with this estimate is that it presupposes the existence of an oracle $O_{\rho^{-1}}$ which prepares a copy of the inverse density operator.  We will use queries to Prepare and Select to create a block encoding of the density operator.

In order to create a density matrix from Prepare and Select queries, recall that if $H\ket{E_j} = E_j \ket{E_j}$ then for any unitary $U$
\begin{equation}
    {\rm Tr}_1( \frac{1}{\sqrt{2^n}}\sum_j \ket{j}U\ket{j})= \frac{1}{2^n}{\rm Tr}_1( \sum_j \ket{E_j} U\ket{E_j}) = \frac{1}{2^n} \sum_j U \ket{E_j}\!\bra{E_j} U^\dagger.
\end{equation}
Next let us use these results but denote $\ket{E_j;0}:=\ket{E_j}\ket{0}_a$ to be the same states tensored with an ancillary register.  It then follows that if we define $\frac{1}{\sqrt{2^n}}\sum_j \ket{j}\text{Prepare}^\dagger~\text{Select}~\text{Prepare}\ket{j} := \ket{\phi_{in}}$.  Let us take the Taylor series truncated to order $K$ and as before, approximate both the numerator and the denominator as a linear combination of expectation values. If we are willing to tolerate error at most $\delta_2$, we can a truncation cut-off $K=O\left(\frac{\log{\frac{1}{\delta_2}}}{\log{\log{\frac{1}{\delta_2}}}}\right)$ where $\delta_2$ is the error tolerance budget for this operation.  Thus we have $K/\epsilon \in \widetilde{O}(1/\epsilon)$ and so the contribution to the scaling from $K$ disappears from our bounds as it is a sub-dominant logarithmic factor.  
\begin{align}
    &(I\otimes \bra{0}_a){\rm Tr}_1( \frac{1}{\sqrt{2^n}}\sum_j \ket{j}\text{Prepare}^\dagger~\text{Select}~\text{Prepare}\ket{j})(I\otimes \ket{0}_a) \nonumber\\
    &= \frac{1}{s 2^n} \sum_{j}\sum_{k\le K,l_1,\ldots, l_k} {\frac{\alpha_{l_1}\dots\alpha_{l_k}}{2^kk!}} U_{l_1}\cdots U_{l_k} \ket{E_j}\!\bra{E_j}\sum_{\ell,l_1,\ldots, l_\ell} {\frac{\alpha_{l_1}^*\dots\alpha_{l_\ell}^*}{2^\ell \ell!}} U_{l_\ell}^\dagger\cdots U_{l_1}^\dagger\nonumber\\
    &=\frac{1}{s 2^n} e^{E_j}\ket{E_j}\!\bra{E_j} = \frac{e^{H}}{s2^{n}} = \frac{{\rm Tr}(\rho^{-1})}{s 2^n} \frac{\rho^{-1}}{{\rm Tr}(\rho^{-1})}
    \end{align}
    
    Thus the probability of success for the preparation of the state $\rho^{-1} / {\rm Tr}(\rho^{-1})$ is
    \begin{equation}
        {\rm Pr}(\rho^{-1} /{\rm Tr}(\rho^{-1}) |  \ket{\phi_{in}})  \ge \frac{{\rm Tr}(e^H)}{e^{\|\alpha\|_1} 2^n}
    \end{equation}
    Thus by marking the success state using a reflection operator (which requires $O(n)$ gates) the number of queries needed to boost the probability to $\Theta(1)$
    \begin{equation}
        N_{\rm queries}
        \in \widetilde{O}\left( \sqrt{\frac{{e^{\|\alpha\|_1} 2^n}}{{\rm Tr}(e^H)}}\right), 
    \end{equation}
    and the number of gates is in $\widetilde{O}\left((L+n)\sqrt{\frac{{\rm Tr}(e^H)}{e^{\|\alpha\|_1} 2^n}}\right)$ per query to $\mathcal{O}_{\rho^{-1}}$.
    
Next if we consider the overall query complexity for the gradient evaluation is then upper bounded by the number of queries by assuming that each query made to our oracles in Theorem~\ref{thm:gradient_generic} is a query to $\mathcal{O}_{\rho^{-1}}$ which then leads to a query complexity in 
\begin{equation}
    \widetilde{O}\left(\frac{N+\sqrt{\frac{{e^{\|\alpha\|_1} 2^n}}{{\rm Tr}(e^H)}}}{\epsilon {\rm Tr}(\sigma_v^2 e^H/{\rm Tr}(e^H))}\right),
\end{equation}
and a gate complexity in
\begin{equation}
\widetilde{O}\left(\frac{(L+n)\sqrt{\frac{{e^{\|\alpha\|_1} 2^n}}{{\rm Tr}(e^H)}}}{\epsilon {\rm Tr}(\sigma_v^2 e^H/{\rm Tr}(e^H))}\right),
\end{equation}
This concludes our proof.
\end{proof}

Further, circuits that can be explicitly used to identify the expectation values used in the algorithm are given in Figure~\ref{fig:LCU_nominator} and Figure~\ref{fig:thermal_swapamard}.  These circuits directly use the above reasoning in concert with the extended swap test.

 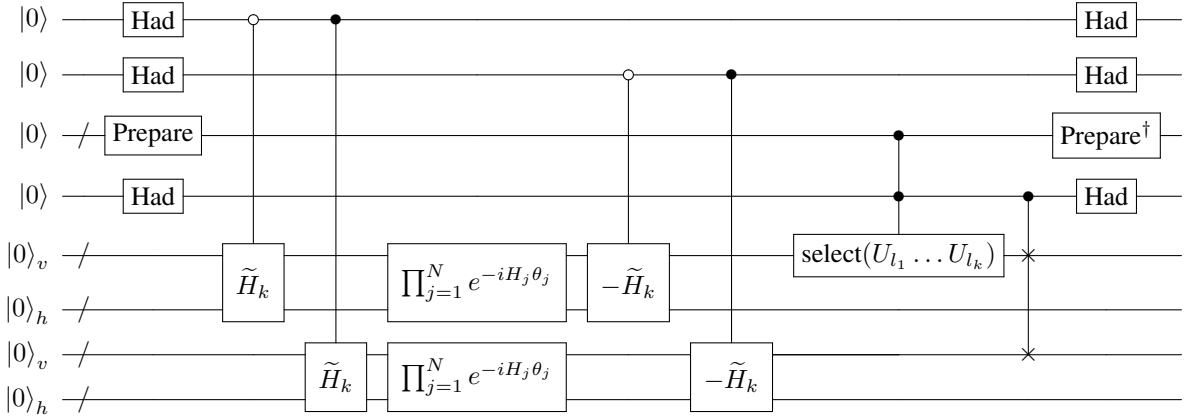
\begin{figure}[t]
 \centering
 $$\Qcircuit @C=0.8em @R=0.8em {
 \lstick{\ket{0}}&\qw&\gate{\text{Had}}&\ctrlo{4}&\ctrl{6}&\qw&\qw&\qw&\qw&\qw&\gate{\text{Had}}&\qw  \\
 \lstick{\ket{0}}&\qw&\gate{\text{Had}}&\qw&\qw&\qw&\ctrlo{3}&\ctrl{5}&\qw&\qw&\gate{\text{Had}}&\qw  \\
 \lstick{\ket{0}}&{/}\qw&\gate{\text{Prepare}}&\qw&\qw&\qw&\qw&\qw&\ctrl{1}&\qw&\gate{\text{Prepare}^{\dagger}}&\qw\\
 \lstick{\ket{0}}&\qw& \gate{\text{Had}}&\qw&\qw&\qw&\qw&\qw&\ctrl{1}&\ctrl{3}&\gate{\text{Had}}&\qw\\
\lstick{\ket{0}_v}&{/}\qw&\qw&\multigate{1}{\widetilde{H}_k}&\qw&\multigate{1}{\prod_{j=1}^{N} e^{-iH_j\theta_j}}&\multigate{1}{-\widetilde{H}_k}&\qw&\gate{\text{select}(U_{l_1}\dots U_{l_k})}&\qswap&\qw&\qw\\
 \lstick{\ket{0}_h}&{/}\qw&\qw&\ghost{\widetilde{H}_k}&\qw&\ghost{\prod_{j=1}^{N} e^{-iH_j\theta_j}}&\ghost{-\widetilde{H}_k}&\qw&\qw&\qw&\qw&\qw\\
 \lstick{\ket{0}_v}&{/}\qw&\qw&\qw&\multigate{1}{\widetilde{H}_k}&\multigate{1}{\prod_{j=1}^{N} e^{-iH_j \theta_j}}&\qw&\multigate{1}{-\widetilde{H}_k}&\qw\qw&\qswap&\qw&\qw\\
 \lstick{\ket{0}_h}&{/}\qw&\qw&\qw&\ghost{\widetilde{H}_k}&\ghost{\prod_{j=1}^{N} e^{-iH_j\theta_j}}&\qw&\ghost{-\widetilde{H}_k}&\qw&\qw&\qw&\qw
 }$$
 \caption{The circuit for state preparation and LCU for the numerator of~\eqref{gradient_thermal}. The gate $\widetilde{H}_k$ can be created using $O(k)$ oracle calls where $k\leq N$. These operations will be repeated multiple time within amplitude estimation.}
 \label{fig:LCU_nominator}
 \end{figure}

%\thmdeep*

\subsection{Sampling based algorithm}\label{sec:shallow}
We first present an algorithm that prioritizes low circuit depth over the overall complexity. In the next section, we will build on this simpler algorithm and present a more efficient algorithm for fault-tolerant quantum computers.

This algorithm estimates the gradient of the numerator and the denominator of~\eqref{gradient_thermal}  by sampling from a sum of weighted estimators. The probability distribution on the weights is easily computable classically and each term in the sum can be estimated by using a quantum circuit depicted in Fig~\ref{fig:thermal_swapamard}.

Specifically, we can estimate
\begin{equation}
    {\rm Tr}(\sigma_v^2 e^H) = {\rm Tr}\left(\sigma_v^2 \sum_{q=0}^{\infty} \frac{(\sum_{j=1}^L \alpha_j H_j)^q}{q!}\right)
\end{equation}
by stochastic sampling. 

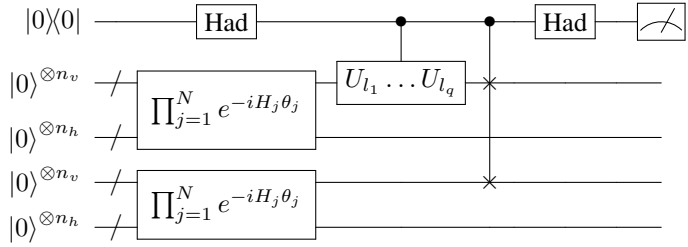
\begin{figure}[t]
    \centering
    $$\Qcircuit @C=0.8em @R=0.8em {
\lstick{\ket{0}\!\!\bra{0}}&\qw&\gate{\text{Had}}&\ctrl{1} &\ctrl{3}&\qw&\gate{\text{Had}}&\qw&\meter\\
\lstick{\ket{0}^{\otimes n_v}}&{/}\qw&\multigate{1}{\prod_{j=1}^{N} e^{-iH_j\theta_j}}&\gate{U_{l_1}\dots U_{l_q}}  &\qswap  &\qw   &\qw&\qw&\qw \\
\lstick{\ket{0}^{\otimes n_h}}&{/}\qw&\ghost{\prod_{j=1}^{N} e^{-iH_j\theta_j}}&\qw&\qw&\qw&\qw&\qw&\qw \\
\lstick{\ket{0}^{\otimes n_v}}&{/}\qw&\multigate{1}{\prod_{j=1}^{N} e^{-iH_j\theta_j}}&\qw &\qswap     &\qw&\qw &\qw&\qw\\
\lstick{\ket{0}^{\otimes n_h}}&{/}\qw&\ghost{\prod_{j=1}^{N} e^{-iH_j\theta_j}}&\qw&\qw&\qw&\qw&\qw&\qw
}$$
    \caption{An example of the extended swap test for sampling terms ${\rm Tr}(\sigma^2U_{l_1}\dots U_{l_q})$. The complexity of this circuit is $O(N + q )$ oracle calls to implement the output of a QNN $\sigma$ and $q$ controlled-$U_{l_i}$ operations. There are $O(n_v)$ gates needed for the multi-qubit  SWAP.}
    \label{fig:thermal_swapamard}
\end{figure}

To do so, we would use the circuit in Fig.~\ref{fig:thermal_swapamard} to sample from terms $Tr(\sigma_v^2 H_{j_1}\dots H_{j_q})$ where the probability of choosing indices $j_1, \dots, j_q$ is proportional to $\frac{\alpha_{j_1}\dots\alpha_{j_1}}{q!}$.
However, the variance of the resulting distribution may be higher than we would like.  We can reduce the variance by using importance sampling.  Let $\vec{j}$ correspond to a vector of length $q$ of the $\alpha_j$ coefficients.  We can then write the result as
\begin{equation}
    {\rm Tr}(\sigma_v^2 e^H) = \sum_{q=0}^{\infty} \sum_{\vec{j}\in\{1,\ldots,N\}^q}  {\rm Tr}(\sigma_v^2\prod_{p\in \vec{j}} H_p ) \frac{(\prod_{p\in \vec{j}} \alpha_p )}{q!}\label{eq:thermal1}
\end{equation}
This reduces the problem to an infinite sum of weighted estimation problems.  Next, we will use importance sampling to write this manifestly as a set of estimation problems wherein the apriori variance of the terms is minimized.  In particular, for any fixed $q$ let 
\begin{equation}
Q(\vec{j}) := \frac{\prod_{p\in \vec{j}} \alpha_j}{\sum_{\vec{j}\in\{1,\ldots,N\}^q} \prod_{p\in \vec{j}}\alpha_p }= \frac{\prod_{p\in \vec{j}} \alpha_p}{(\norm{\alpha}_1)^q}
\end{equation}
We then can write
\begin{equation}
    {\rm Tr}(\sigma_v^2 e^H) = \sum_{q=0}^{\infty}\frac{1}{q!}\left(\sum_{\vec{j}\in\{1,\ldots,N\}^q}\frac{  {\rm Tr}(\sigma_v^2\prod_{p\in \vec{j}} H_p ) {(\prod_{p\in \vec{j}} \alpha_j )}}{{Q(\vec{j})}}\right)Q(\vec{j})
\end{equation}
Similarly, let $s_\ell$ be a random variable such that $\mathbb{E}_{\ell}(s_{\ell,\vec{j}}) = {\rm Tr}(\sigma_v^2\prod_{p\in \vec{j}} H_p )$. Not that we can sample $s_\ell$ from circuit Fig.~\ref{fig:thermal_swapamard}. We then have that 

\begin{align}
    {\rm Tr}(\sigma_v^2 e^H) &= \sum_{q=0}^{\infty}\frac{{(\sum_{p} \alpha_p )}^q}{q!}\left(\sum_{\vec{j}\in\{1,\ldots,N\}^q}{  {\rm Tr}(\sigma_v^2\prod_{p\in \vec{j}} H_p ) }\right)Q(\vec{j})\nonumber\\
    &= \sum_{q=0}^{\infty}\frac{{(\sum_{p} \alpha_p )}^q}{q!}\mathbb{E}_{\vec{j}\ell} (s_{\ell,j}),
\end{align}
where here the expectation value for $s_{\ell,\vec{j}}$ is implicitly taken over the probability distribution $Q(\vec{j})$.  Now let $t_{q}$ be a random variable such that $\mathbb{E}(t_k) = \mathbb{E}_{\vec{j,\ell}} s_{j,\ell}$.  We then have from the additive property of variance that if $\mathbb{V}(t_k) \le \delta_1^2 q!^2 /(\norm{\alpha}_1)^{2k}2^{q+1}$
\begin{equation}
    \mathbb{V}\left( \sum_{q=0}^{\infty}\frac{{(\prod_{p\in \vec{j}} \alpha_j )}^q}{q!}t_k\right)\le \delta_1^2.
\end{equation}
In order to provide an unbiased estimator of $\mathbb{E}(t_k)$ we can simply sample from the extended swap test for randomly chosen indices and shift and rescale the result appropriately.  Each application of the extended swap test~\ref{fig:thermal_swapamard} requires $O(q)$ oracle calls and $O(n)$ gates and we need additional $N$ queries to initialize the extended swap test into states $\sigma_v$.

The query complexity of estimating each $t_k$ within sufficiently small variance is in $O(N+q)$  and has probability $O\left(\frac{ (\norm{\alpha}_1)^{2q}}{\delta_1^2 q!^2}  \right) $.
Therefore, the query complexity of estimating the trace within variance $\delta_1^2$ is in

\begin{align}
    O\left( \sum_{q=0}^\infty \frac{(q +N) (\norm{\alpha}_1)^{2q}}{\delta_1^2 q!^2}  \right) &= O\left(\norm{\alpha}_1\frac{ I_1\left( 2\norm{\alpha}_1 \right)}{\delta_1^2} + N \frac{ (I_0\left( 2 \norm{\alpha}_1 \right)-1)}{\delta_1^2} \right) \\
    &\subseteq O\left(\frac{e^{2\norm{\alpha}_1}}{\delta_1^2 } \left( \sqrt{\norm{\alpha}_1} +  \frac{ N}{\sqrt{2\norm{\alpha}_1} } \right)\right),
\end{align}
where the sum was expressed as a modified Bessel function of the first kind $I_1$.
%I LOVE WOLFRAM ALPHA

The gate complexity of obtaining a single sample is $O(n_v)$. The overall gate complexity is thus
\begin{equation}
        O\left( \sum_{q=0}^\infty \frac{n_v (\norm{\alpha}_1)^{2q}}{\delta_1^2 q!^2}  \right) = 
        O\left(n_v \frac{ (I_0\left( 2 \norm{\alpha}_1 \right)-1)}{\delta_1^2} \right) =  O\left(\frac{  n_v e^{ 2\norm{\alpha}_1 }}{ \delta_1^2\sqrt{2\norm{\alpha}_1} } \right)
\end{equation}

The exact same argument can be applied for estimating ${\rm Tr}\left(\left\{{\rm Tr_h}({[\widetilde{H}_k, \sigma]}),\sigma_v\right\} e^{H}\right)$ with precision $\delta_2$.  To see this, first note that
\begin{align}
    &{\rm Tr}\left(\left\{{\rm Tr_h}({[\widetilde{H}_k, \sigma]}),\sigma_v\right\} e^{H}\right) = {\rm Tr}\left({\rm Tr_h}({[\widetilde{H}_k, \sigma]})\left\{\sigma_v, e^{H}\right\}\right)\nonumber\\
    &={\rm Tr}\left({\rm Tr_h}({\widetilde{H}_k \sigma})\sigma_v e^{H}\right) + {\rm Tr}\left({\rm Tr_h}({\widetilde{H}_k \sigma})e^{H}\sigma_v\right) -{\rm Tr}\left({\rm Tr_h}({\sigma \widetilde{H}_k })\sigma_v e^{H}\right)-
    {\rm Tr}\left({\rm Tr_h}({\sigma \widetilde{H}_k }) e^{H}\sigma_v\right). \label{H_k_position}
\end{align}

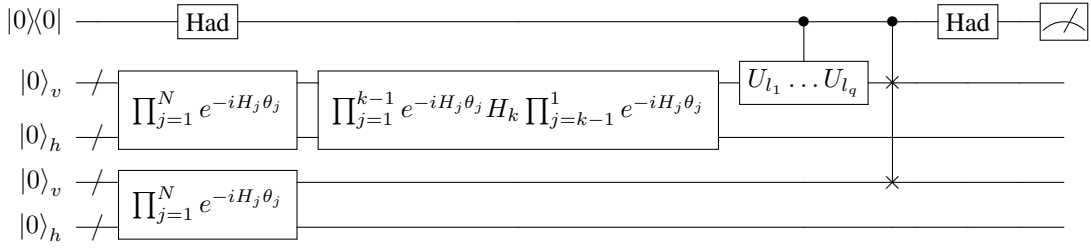
\begin{figure}[t]
    \centering
    $$\Qcircuit @C=0.8em @R=0.8em {
\lstick{\ket{0}\!\!\bra{0}}&\qw&\gate{\text{Had}}&\qw&\ctrl{1} &\ctrl{3}&\qw&\gate{\text{Had}}&\qw&\meter\\
\lstick{\ket{0}_v}&{/}\qw&\multigate{1}{\prod_{j=1}^{N} e^{-iH_j\theta_j}}&\multigate{1}{\prod_{j=1}^{k-1} e^{-iH_j\theta_j} H_k \prod_{j=k-1}^{1} e^{-iH_j\theta_j}}   &\gate{U_{l_1}\dots U_{l_q}}  &\qswap  &\qw   &\qw&\qw&\qw \\
\lstick{\ket{0}_h}&{/}\qw&\ghost{\prod_{j=1}^{N} e^{-iH_j\theta_j}} &\ghost{\prod_{j=1}^{k-1} e^{-iH_j\theta_j} H_k \prod_{j=k-1}^{1} e^{-iH_j\theta_j}}  &\qw&\qw&\qw&\qw&\qw&\qw \\
\lstick{\ket{0}_v}&{/}\qw&\multigate{1}{\prod_{j=1}^{N} e^{-iH_j\theta_j}}&\qw &\qw&\qswap     &\qw&\qw &\qw&\qw\\
\lstick{\ket{0}_h}&{/}\qw&\ghost{\prod_{j=1}^{N} e^{-iH_j\theta_j}}&\qw&\qw&\qw&\qw&\qw&\qw&\qw
}$$
    \caption{An example of the extended swap test for sampling terms from the numerator of~\eqref{gradient_thermal}.
    %${\rm Tr}(\widetilde{H}_k \sigma_v^2U_{l_1}\dots U_{l_q})$. 
    The complexity of this circuit is $O(q + N+k)$ oracle calls to implement state preparation controlled-$U_{l_i}$ and $O(n_v)$ gates needed for the multi-qubit  SWAP.}
    \label{fig:thermal_swapamard_nominator}
\end{figure}

We then see that the result is a sum of four terms each of which is of the same form as~\eqref{eq:thermal1} except for the presence of an additional (unitarily conjugated) Hamiltonian term $\widetilde{H}_k$.  This term can be included in the extended swap test and its cost is additional $O(N)$ calls to oracles $\mathcal{O}_U$ and $\mathcal{O}_{exp}$. Therefore, the query complexity of obtaining a single sample is $O(q+N)$ and following the same calculation as above, the overall query complexity is $ O\left(\frac{e^{2 \norm{\alpha}_1 }}{\delta_2^2 } \left( \sqrt{\norm{\alpha}_1} + \frac{N}{\sqrt{2\norm{\alpha}_1}} \right)\right)$.

Further, an application of Chebyshev's inequality states that (with high probability) the error will be $O(\epsilon)$ if a sufficient number of samples are taken to ensure that this variance bound is met.  It then follows by substitution into~\eqref{eq:bds} that the number of gates required to estimate the gradient within error in the max-norm of $\epsilon$ with probability greater than $2/3$ is in 
\begin{equation}
    O\left(    \frac{n_v e^{2 \norm{\alpha}_1}}{\sqrt{\norm{\alpha}_1} \epsilon^2 {\rm Tr}^2(\sigma_v^2 e^{H} / {\rm Tr}(e^H))} \right).
\end{equation}
Each application of the extended swap circuit requires $O(N+q)$ queries to the oracles that apply each $U_j$ and each $e^{-iH_j \theta_j}$.  Thus the query complexity of gradient estimation is
\begin{equation}
    O\left( \left( \sqrt{\norm{\alpha}_1} + \frac{N}{\sqrt{2\norm{\alpha}_1}} \right)\frac{Ne^{2 \norm{\alpha}_1 }}{\epsilon^2 {\rm Tr}^2(\sigma_v^2 e^{H} / {\rm Tr}(e^H))} \right).
\end{equation}

\begin{restatable}[Sampling-based thermal state learning]{thm}{thmshallow}\label{thm:shallow_qnn}

Let $\rho = e^{-H}/{\rm Tr}(e^{-H})$ be a target distribution for $H=\sum_{l=1}^L \alpha_l U_l$ for unitary $U_l$ and $\sigma = {\rm Tr}_h \left[ \prod_{j=1}^{N} e^{-iH_j\theta_j} \ket{0}\!\!\bra{0} \prod_{j=N}^{1} e^{iH_j\theta_j} \right]$ and output of a unitary QNN on visible units.  For any $\epsilon>0$, the number of gates needed to compute an estimate $\mathcal{E}$ such that $|\mathcal{E} - \partial_{\theta_k} D(\sigma||\rho)| \le \epsilon$ with probability greater than $2/3$ is in 
$$
    O\left( \left( \sqrt{\norm{\alpha}_1} + \frac{N}{\sqrt{2\norm{\alpha}_1}} \right)\frac{Ne^{2 \norm{\alpha}_1 }}{\epsilon^2 {\rm Tr}^2(\sigma_v^2 e^{H} / {\rm Tr}(e^H))} \right).
$$
Similarly, the number of queries  to the oracles $\mathcal{O}_U$ and $\mathcal{O}_{exp}$ needed to perform this computation is in
$$
    O\left(    \frac{n_v e^{2 \norm{\alpha}_1}}{\sqrt{\norm{\alpha}_1} \epsilon^2 {\rm Tr}^2(\sigma_v^2 e^{H} / {\rm Tr}(e^H))} \right).
$$
\end{restatable}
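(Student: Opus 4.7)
The plan is to estimate the numerator and denominator of~\eqref{gradient_thermal} separately, using Taylor expansion of $e^{H}$ together with importance sampling over the resulting multi-indices, and then combine the two estimates using the quotient error bound derived in the proof of Theorem~\ref{thm:gradient_generic}. First I would invoke~\eqref{eq:bds}, which shows that if the numerator and denominator of~\eqref{gradient_thermal} are each estimated to additive error $O(\epsilon\,{\rm Tr}(\sigma_v^2 e^{H}/{\rm Tr}(e^{H})))$, the overall gradient error is within $\epsilon$. Hence I can now treat the numerator and denominator as independent estimation problems.

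For the denominator, I would write ${\rm Tr}(\sigma_v^2 e^{H}) = \sum_{q=0}^{\infty} \frac{1}{q!}\sum_{\vec{j}\in\{1,\ldots,L\}^q} \bigl(\prod_{p\in\vec{j}}\alpha_p\bigr)\,{\rm Tr}\bigl(\sigma_v^2 \prod_{p\in\vec{j}} U_p\bigr)$ and apply importance sampling with weights $Q(\vec{j}) = \prod_{p\in\vec{j}}\alpha_p / \|\alpha\|_1^{q}$. The extended swap test of Theorem~\ref{thm:swap_trick} then furnishes an unbiased estimator of each ${\rm Tr}(\sigma_v^2 \prod_p U_p)$, with bounded variance by the boundedness of the output indicator. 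A single sample at Taylor order $q$ requires $O(N+q)$ oracle queries (to prepare two copies of $\sigma_v$ and apply $q$ controlled $U_{l}$'s) and $O(n_v)$ gates for the multi-qubit swap; the circuit is depicted in Fig.~\ref{fig:thermal_swapamard}. The numerator term ${\rm Tr}(\{{\rm Tr}_h([\widetilde{H}_k,\sigma]),\sigma_v\} e^{H})$ decomposes as in~\eqref{H_k_position} into four terms of the same shape, each bearing an additional $\widetilde{H}_k$ that costs $O(N)$ extra oracle queries per sample (see Fig.~\ref{fig:thermal_swapamard_nominator}); the same importance-sampling scheme then applies.

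The central technical step is to apportion the variance budget across Taylor orders. If I let $t_q$ be the importance-sampled estimator at order $q$, then by additivity of variance of independent contributions it suffices to force $\mathbb{V}(t_q) \le \delta^2 (q!)^2/(\|\alpha\|_1^{2q} 2^{q+1})$ so that the Taylor-truncated series has total variance $\le \delta^2$. Since each $t_q$ is the mean of $\Theta\bigl(\|\alpha\|_1^{2q}/(\delta^2 (q!)^2)\bigr)$ samples, summing the per-order query cost $(N+q)$ against the sample count and invoking the series identities $\sum_q \|\alpha\|_1^{2q+1}/(q!)^2 = \|\alpha\|_1 I_1(2\|\alpha\|_1)$ and $\sum_q \|\alpha\|_1^{2q}/(q!)^2 = I_0(2\|\alpha\|_1)$ yields total query cost $O\bigl(e^{2\|\alpha\|_1}\delta^{-2}(\sqrt{\|\alpha\|_1} + N/\sqrt{2\|\alpha\|_1})\bigr)$, while the per-sample gate count $O(n_v)$ gives the stated gate complexity. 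Setting $\delta = \Theta(\epsilon\,{\rm Tr}(\sigma_v^2 e^{H}/{\rm Tr}(e^{H})))$ and applying Chebyshev's inequality to convert the variance bound into an $\epsilon$-error bound with probability $\ge 2/3$ then yields the claimed complexities after accounting for both the numerator and denominator estimates via the union bound.

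The hard part is the variance allocation across the infinite Taylor tail: one must verify both that the chosen per-order tolerance sums to the target total variance and that the resulting per-order sample counts do not blow up the query cost, and these two constraints are traded off through the $(q!)^{-2}$ factors that make the Bessel-function identities closeable. Once that bookkeeping is done, everything else is a routine concatenation of the extended swap test, Chebyshev's inequality, and the quotient-error inequality from Theorem~\ref{thm:gradient_generic}.
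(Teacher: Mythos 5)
Your proposal follows essentially the same route as the paper's own argument: Taylor expansion of $e^{H}$ with importance sampling over multi-indices weighted by $Q(\vec{j})=\prod_{p\in\vec{j}}\alpha_p/\|\alpha\|_1^{q}$, per-order variance allocation with the $(q!)^{-2}$ factors summed via the modified Bessel identities, the four-term decomposition of the numerator from~\eqref{H_k_position}, and the quotient-error bound~\eqref{eq:bds} combined with Chebyshev's inequality. The reasoning and the bookkeeping match the paper's proof, so there is nothing substantive to add.
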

  
\section{Proof of Lemma~\ref{lem:plateau}}\label{sec:plateau}
Here we provide a proof of Lemma~\ref{lem:plateau}, which shows that under broad conditions the mean squared gradient for of the R\'enyi $2$-divergence is not exponentially small under the assumption that the distribution is invariant under unitary transformations.  
\begin{proof}[Proof of lemma~\ref{lem:plateau}]
We begin with our expression for the derivative of the reverse divergence.  First note that ${\rm Tr}(A\otimes B) = {\rm Tr}(A) {\rm Tr}(B)$.  This trick is commonly used to reduce the problem of computing the expectation of a square to an expectation value over multiple copies of a state~\cite{popescu2006entanglement}.  Using this trick,
\begin{align}
    \mathbb{E}((\partial_{\theta_k}\widetilde{D}_2( U\sigma U^\dagger \|\rho))^2)&=\mathbb{E}\left(\frac{{\rm Tr}\left(\left\{(\partial_{\theta_k}U\sigma U^\dagger),U\sigma U^\dagger\right\} \rho^{-1}\right)}{{\rm Tr}\left(U\sigma^{2} U^\dagger \rho^{-1}\right)} \right)^2\nonumber\\
    &=\mathbb{E}\left(\frac{{\rm Tr}\left(\left\{(\partial_{\theta_k}\sigma ),\sigma \right\} U^\dagger \rho^{-1}U\right)}{{\rm Tr}\left(U\sigma^{2} U^\dagger \rho^{-1}\right)} \right)^2\nonumber\\
    &={\rm Tr}\left( \mathbb{E} \left(\frac{U^\dagger\rho^{-1} U }{{{\rm Tr}\left(U\sigma^{2} U^\dagger \rho^{-1}\right)}}\right)^{\otimes 2}\left(\left\{(\partial_{\theta_k}\sigma ),\sigma \right\}  \right)^{\otimes 2}\right)\nonumber\\
    &\ge\frac{{\rm Tr}\left( \mathbb{E} \left(U^\dagger \rho^{-1} U\right)^{\otimes 2}\left(\left\{(\partial_{\theta_k}\sigma ),\sigma \right\}   \right)^{\otimes 2}\right)}{\|\sigma\|^4 {\rm Tr}^{2}(\rho^{-1})} , \label{eq:lastinline}
\end{align}
where the last line follows from the fact that for all $U$, $\rho^{-1}$ is positive definite. 
Next let us assume that $\rho \ket{j} = \gamma_j \ket{j}$ then since the columns of a random unitary matrix are independent up to exponentially small errors,
\begin{align}
    &\mathbb{E}\left((U^\dagger \rho^{-1}  U)\otimes (U^\dagger \rho^{-1}  U)\right)= \sum_{j,j'} \gamma^{-1} \gamma_{j'}^{-1} \mathbb{E}(U^\dagger \ket{j}\!\bra{j} U)\otimes (U^\dagger \ket{j'}\!\bra{j'} U)\nonumber\\
    &=\sum_{j\ne j'} \gamma_j^{-1} \gamma_{j'}^{-1} \mathbb{E}(U^\dagger \ket{j}\!\bra{j} U)\otimes (U^\dagger \ket{j'}\!\bra{j'} U) +\sum_{j} \gamma_j^{-2}  \mathbb{E}(U^\dagger \ket{j}\!\bra{j} U)\otimes (U^\dagger \ket{j}\!\bra{j} U)\nonumber\\
    &=\sum_{j\ne j'} \gamma_j^{-1} \gamma_{j'}^{-1} \frac{(I\otimes I)}{2^{2n}} +\sum_{j} \gamma_j^{-2}  \frac{2(I\otimes I)}{2^n(2^n+1)} + O\left(\frac{{\rm Tr}^2(\rho^{-1})}{2^{3n}} \right)\nonumber\\
    &=\left({\rm Tr}^2(\rho^{-1}) +  {\rm Tr}(\rho^{-2})\right) \frac{(I\otimes I)}{2^{2n}} + O\left(\frac{{\rm Tr}^2(\rho^{-1})}{2^{3n}} \right)\nonumber\\
    &\ge  {\rm Tr}^2(\rho^{-1}) \frac{(I\otimes I)}{2^{2n}} + O\left(\frac{{\rm Tr}^2(\rho^{-1})}{2^{3n}} \right) \label{eq:forwardExpect}
\end{align}
Note that in the third line we have used the fact that $1/(2^n+1) =1/2^n + O(1/2^{2n})$.
In~\eqref{eq:forwardExpect}, the expectation values are computed using the fact that the only components that have a non-zero expectation value are those that have projection onto the symmetric subspace of the two subsystems.  All other components have zero expectation value.  The specific constants come from representation theory as discussed in~\cite{popescu2006entanglement}.  

Eq.~\eqref{eq:forwardExpect} and~\eqref{eq:lastinline} then implies that
\begin{align}
    \frac{{\rm Tr}\left( \mathbb{E} \left(U^\dagger \rho^{-1} U\right)^{\otimes 2}\left(\left\{(\partial_{\theta_k}\sigma ),\sigma \right\}   \right)^{\otimes 2}\right)}{\|\sigma\|^4 {\rm Tr}^{2}(\rho^{-1})} &\ge \frac{{\rm Tr}^2(\left\{(\partial_{\theta_k}\sigma ),\sigma \right\})}{2^{2n}\|\sigma\|^4 }  + O\left(\frac{{\rm Tr}^2(\left\{(\partial_{\theta_k}\sigma ),\sigma \right\})}{2^{3n}\|\sigma\|^4 }\right) \nonumber\\
    &\ge \frac{4{\rm Tr}^2(\sigma(\partial_{\theta_k}\sigma ))}{2^{2n}\|\sigma\|^4 }  + O\left(\frac{{\rm Tr}^2(\left\{(\partial_{\theta_k}\sigma ),\sigma \right\})}{2^{3n}\|\sigma\|^4 }\right) 
\end{align}

Next looking at part $2$ of the lemma, using the forward divergence in~\eqref{eq:forwardDiff}, we have that that for any unitary $U$
\begin{equation}
    \partial_{\theta_k} \widetilde{D}_2(\rho\|U\sigma U^\dagger) = \frac{- {\rm Tr}\left(U^\dagger \rho^2 U \sigma^{-1} (\partial_{\theta_k} \sigma)\sigma^{-1}\right)}{{\rm Tr}\left(U^\dagger \rho^2 U \sigma^{-1}\right)} 
\end{equation}
Therefore following the same reasoning that was used above, 
\begin{align}
    &\mathbb{E}\left(\partial_{\theta_k} \widetilde{D}_2(\rho\|U\sigma U^\dagger) \right)^2 \ge  \sum_{j,j'} \gamma_j \gamma_{j'}{\rm Tr}\left( \mathbb{E}\left(\frac{(U^\dagger \ket{j}\!\bra{j} U) \otimes (U^\dagger \ket{j'}\!\bra{j'} U)}{{\rm Tr}^2\left(U^\dagger \rho^2 U \sigma^{-1}\right)}\right)\left(\sigma^{-1} (\partial_{\theta_k} \sigma)\sigma^{-1})^{\otimes 2}\right)\right)\nonumber\\
    &\ge  \sum_{j,j'} \gamma_j \gamma_{j'}{\rm Tr}\left( \mathbb{E}\left(\frac{(U^\dagger \ket{j}\!\bra{j} U) \otimes (U^\dagger \ket{j'}\!\bra{j'} U)}{{\rm Tr}^2\left(U^\dagger \rho^2 U \sigma^{-1}\right)}\right)\left(\sigma^{-1} (\partial_{\theta_k} \sigma)\sigma^{-1})^{\otimes 2}\right)\right)
\end{align}
Repeating the above reasoning we find the following simple expression
\begin{align}
    \sum_{j,j'} \gamma_j \gamma_{j'}\mathbb{E}\left((U^\dagger \ket{j}\!\bra{j} U) \otimes (U^\dagger \ket{j'}\!\bra{j'} U)\right) &=(1 +{\rm Tr}(\rho^{2})) \frac{(I\otimes I)}{2^{2n}} + O\left(\frac{1}{2^{3n}} \right)
\end{align}
This leads us to the conclusion that
\begin{align}
     &\mathbb{E}\left(\partial_{\theta_k} \widetilde{D}_2(\rho\|U\sigma U^\dagger) \right)^2 \ge \frac{{\rm Tr}^2(\sigma^{-2} (\partial_{\theta_k} \sigma))}{2^{2n} {\rm Tr}^2(\sigma^{-1})} + O\left(\frac{{\rm Tr}^2(\sigma^{-2} (\partial_{\theta_k} \sigma))}{2^{3n} {\rm Tr}^2(\sigma^{-1})} \right)
\end{align}
\end{proof}
In order to understand better the impact that this has on entanglement induced barren plateaus, which occur when there is a volume law for the entanglement entropy between the visible and hidden subsystems of the quantum neural network, let us consider the case where $\sigma\approx I/2^{n_v}$ where $n_v$ is the number of visible units in the system.   Thus we have for the example of the forward divergence
\begin{equation}
    \frac{{\rm Tr}^2(\sigma_v^{-2} (\partial_{\theta_k} \sigma_v))}{2^{2n_v} {\rm Tr}^2(\sigma^{-1})} \lesssim \frac{\|\sigma_v^{-1}\partial_{\theta_k}\sigma_v\|^2}{2^{2n_v}} \lesssim {4\|H_k\|^2\|\sigma_v\|^2} \in O(2^{-2n_v}).
\end{equation}
  It, therefore, holds that the gradient with respect to the visible units of a unitary quantum neural network may still experience entanglement induced barren plateaus.  However, this conclusion does not necessarily hold for derivatives with respect to the hidden weights.  Further work will be needed to determine whether such gradients vanish with high probability over the Haar measure when using the R\'eyni $2$-divergence. 

Repeating the same argument for the reverse divergence leads to
\begin{equation}
    \frac{4{\rm Tr}^2(\sigma(\partial_{\theta_k}\sigma_v ))}{2^{2n_v}\|\sigma_v\|^4 } \lesssim 2^{2n_v+2} {\rm Tr}^2(\sigma_v \partial_{\theta_k} \sigma_v) \lesssim 2^{2n_v+2} \|\partial_{\theta_k} \sigma_v\|^2 \lesssim 16 \|H_k\|^2 \in O(1). 
\end{equation}
Interestingly, this suggests that the lower bound on the derivative of reverse divergence is not necessarily exponentially small.  Thus we anticipate that under some circumstances training with respect to this loss function will not have barren plateaus.  This agrees with our numerical observations that also do not observe a barren plateau for the reverse divergence.  As with the forward divergence, however, more work is needed in order to fully understand the derivatives with respect to the hidden units and give a full treatment of the expectation over the hidden units.

\section{Additional Numerical Experiments}
\label{app:additional_exp}
This section contains a series of additional numerical experiments that expand on the results in Section \ref{sec:thermal} and \ref{sec:ham}.  These results largely explore the role that increasing the number of visible or hidden units have on the performance of the algorithm for very small quantum examples simulated on classical computers.   

\subsection{Larger Model: Quantum Unitary Network}
Using the same experimental formulation outlined in Section \ref{sec:thermal}, we learned an ensemble of higher dimensional thermal states using larger unitary QNNs. We achieved a $98\%$ average fidelity after $100$ epochs for a model with four visible and four hidden units, as shown in Figure \ref{large_unitary_plot}.
\label{large_unitary_exp}
\begin{figure}[htbp]
  \centering%
  \subfloat[Loss]{%
    \includegraphics[width=0.48\textwidth]{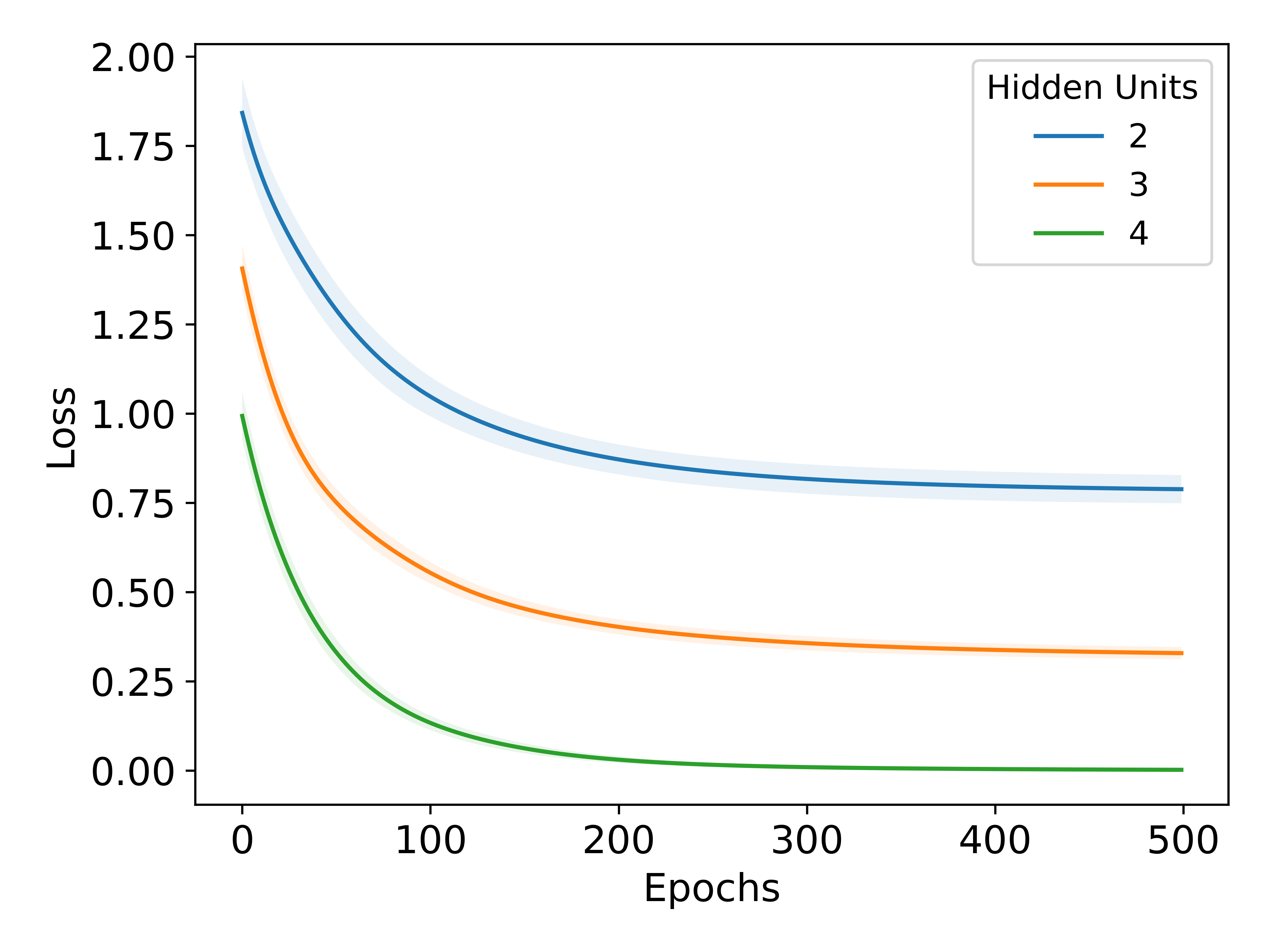}%
  }
  \subfloat[Fidelity]{%
    \includegraphics[width=0.48\textwidth]{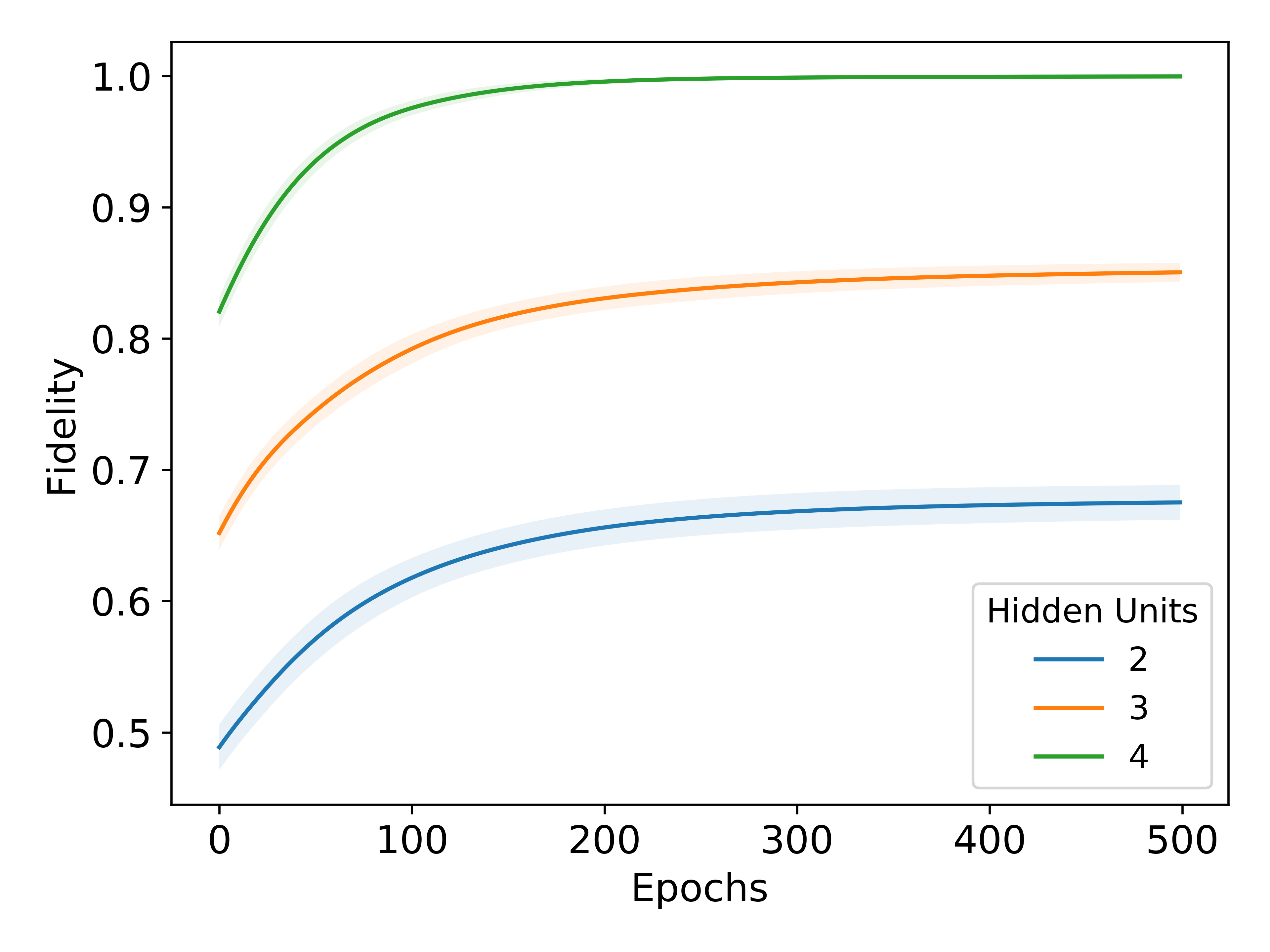}%
  }\\

  \caption{
  Training loss and fidelity of the Unitary model with four visible and an increasing number of hidden units using the ADAM optimizer with an initial learning rate of 0.001. (a-b) We trained a fixed model $50$ different times against different thermal states and computed the mean and standard deviation of the loss per epoch. The solid line represents the mean and the shaded area two standard deviations. The figures show the (a) Training loss and (b) fidelity of the model.}
  \label{large_unitary_plot}

\end{figure}
\FloatBarrier

\subsection{Network Hyperparameters: Quantum Unitary Network}
\label{hyperparamers_exp}
When training a unitary QNN there are two hyperparameters you need to choose from: circuit architecture and network initialization. By circuit architecture we mean the order in which we apply the gates in your quantum circuit because the Unitaries might not commute (see Figure \ref{fig:uqnn} for an example). In Figure \ref{hyperparamers_plot}, we fixed a thermal state and studied the effect that different circuit architectures and network initialization have on training. We initialized our network coefficients by sampling from a normal distribution with mean $0$ and variance $1$. In both experiments, we saw convergence to the same loss value after $300$ epochs. 

\begin{figure}[htbp]
  \centering%
  \subfloat[Architecture ensemble]{%
    \includegraphics[width=0.48\textwidth]{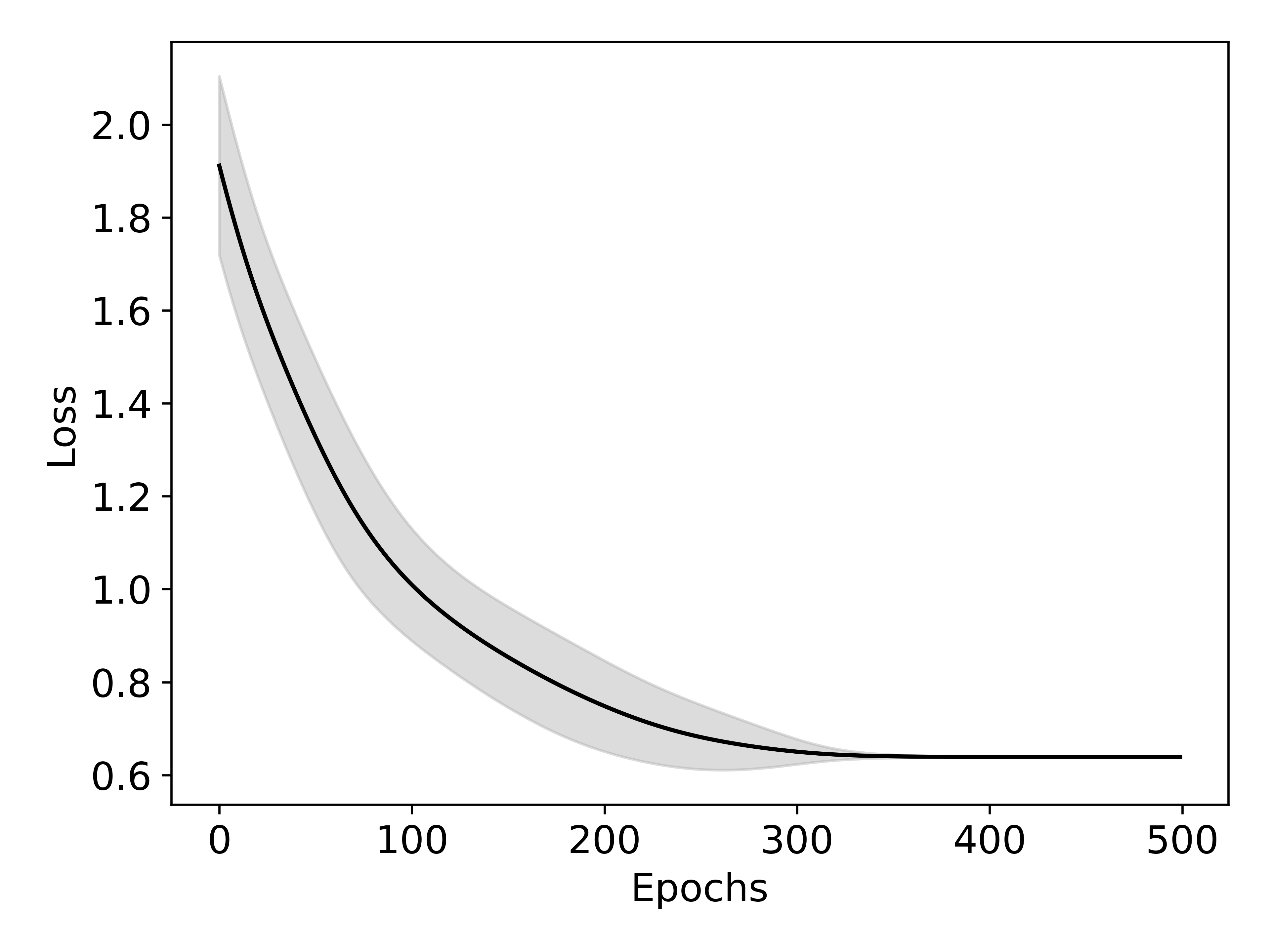}%
  }
  \subfloat[Network ensemble]{%
    \includegraphics[width=0.48\textwidth]{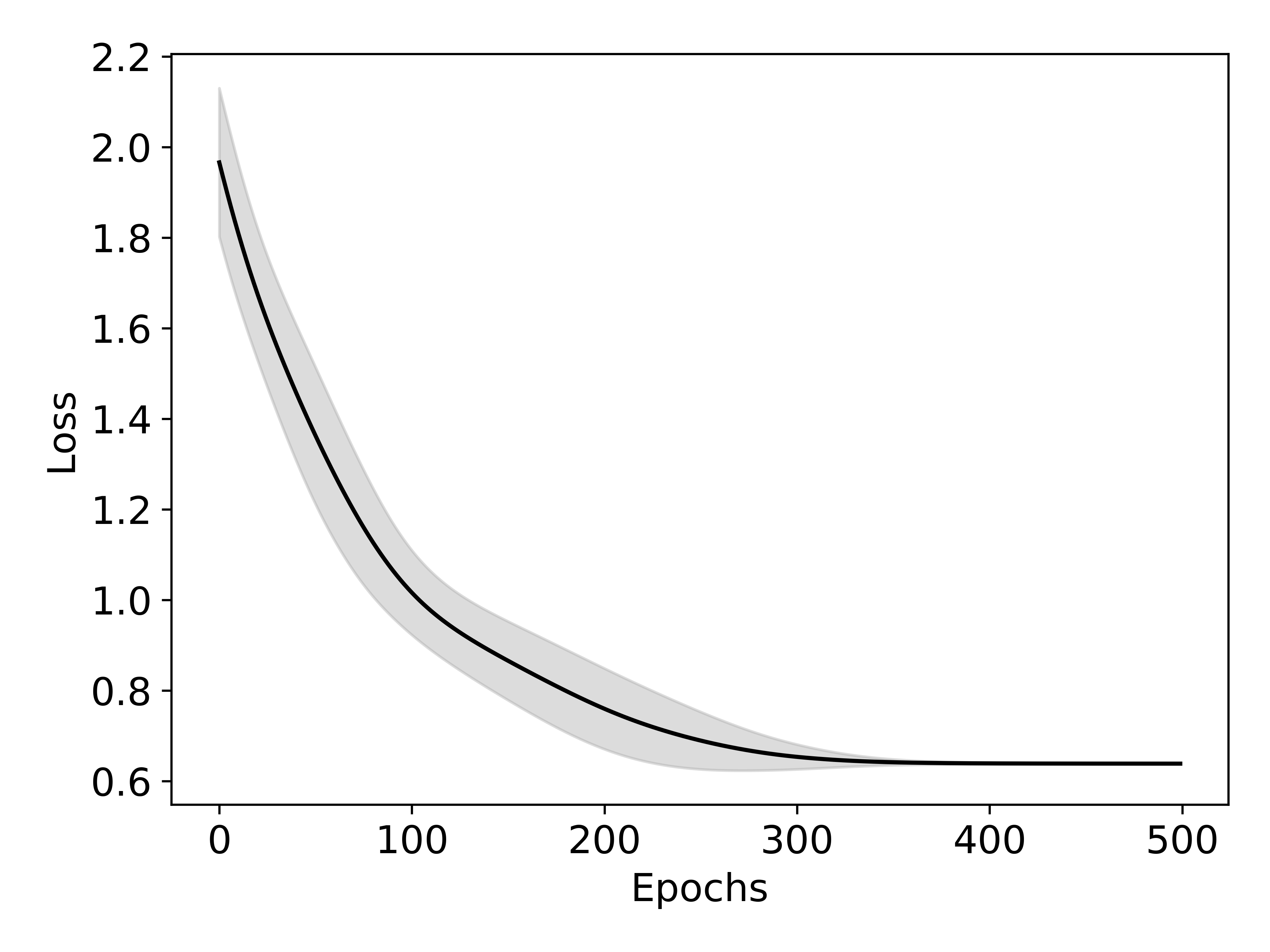}%
  }
 
  \caption{Training loss for the Unitary model with three visible units and one hidden unit using the ADAM optimizer with an initial learning rate of 0.001. (a-b) We trained each model $50$ different times against a fixed thermal state and computed the mean and standard deviation of the loss per epoch. The solid line represents the mean and the shaded area two standard deviations. (a) We varied the order of operations of the ansatzes and (b) the initialization parameters.}
  \label{hyperparamers_plot}
\end{figure}
\FloatBarrier

\subsection{L2 Regularization and Gradients: Quantum Boltzmann Machines}
\label{bm_l2_grad}
In \cite{kieferova2017tomography}, the authors outlined the need for regularization when training QMBs to prevent exploding gradients. We found a similar need when training QBMs with the R\'enyi divergence. In Figure \ref{bm_l2_grad_plot}a, we see that unless we regularize the $\infty$-norm of the gradients begin to diverge after $100$ epochs. After using an L2-regularization with a penalty of $2.0$, the problems disappear, as shown in Figure \ref{bm_l2_grad_plot}b.

\begin{figure}[htbp]
  \centering%
  \subfloat[L2 Regularization]{%
    \includegraphics[width=0.48\textwidth]{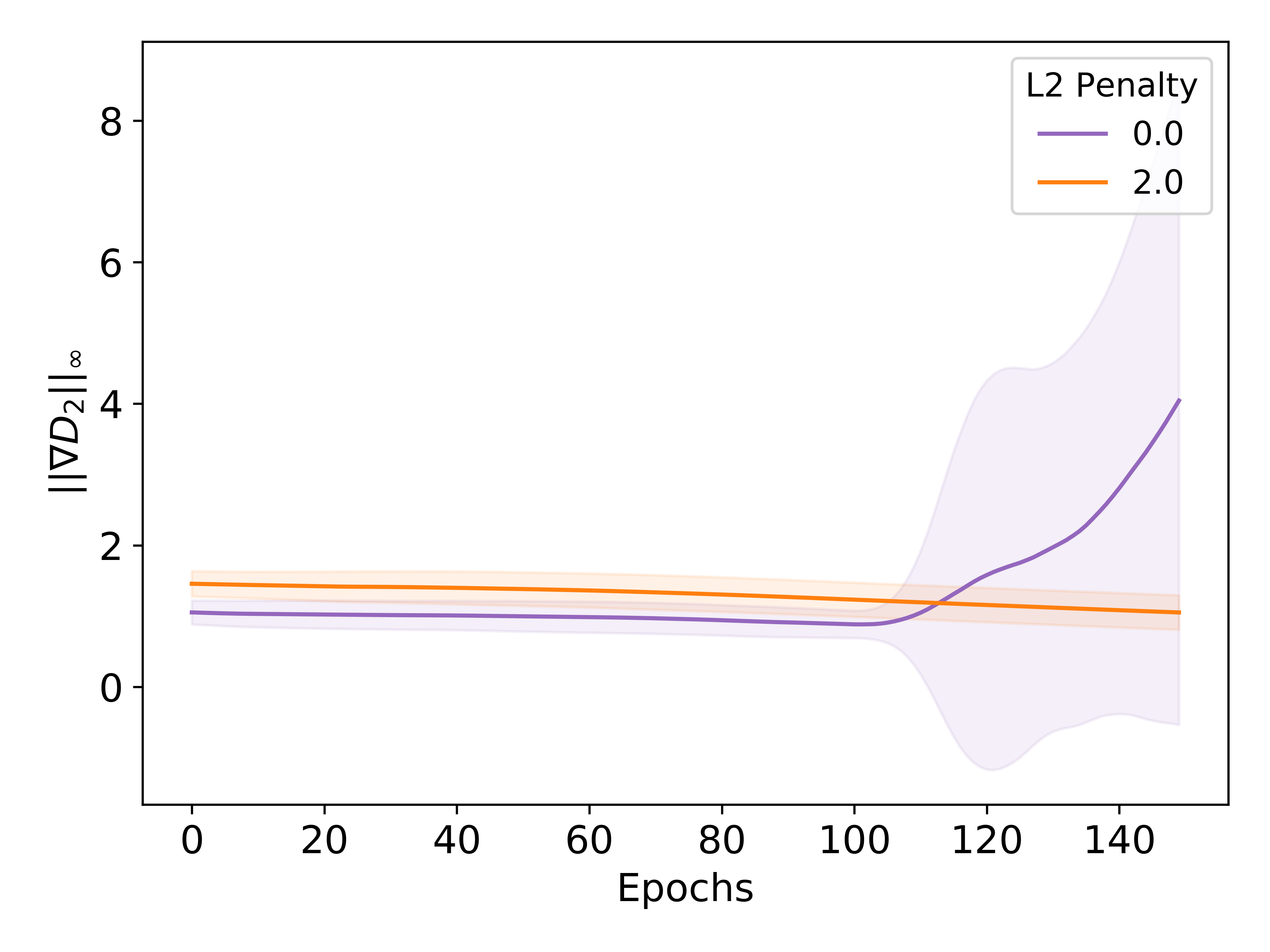}%
  }%
    \subfloat[Hidden Units]{%
    \includegraphics[width=0.48\textwidth]{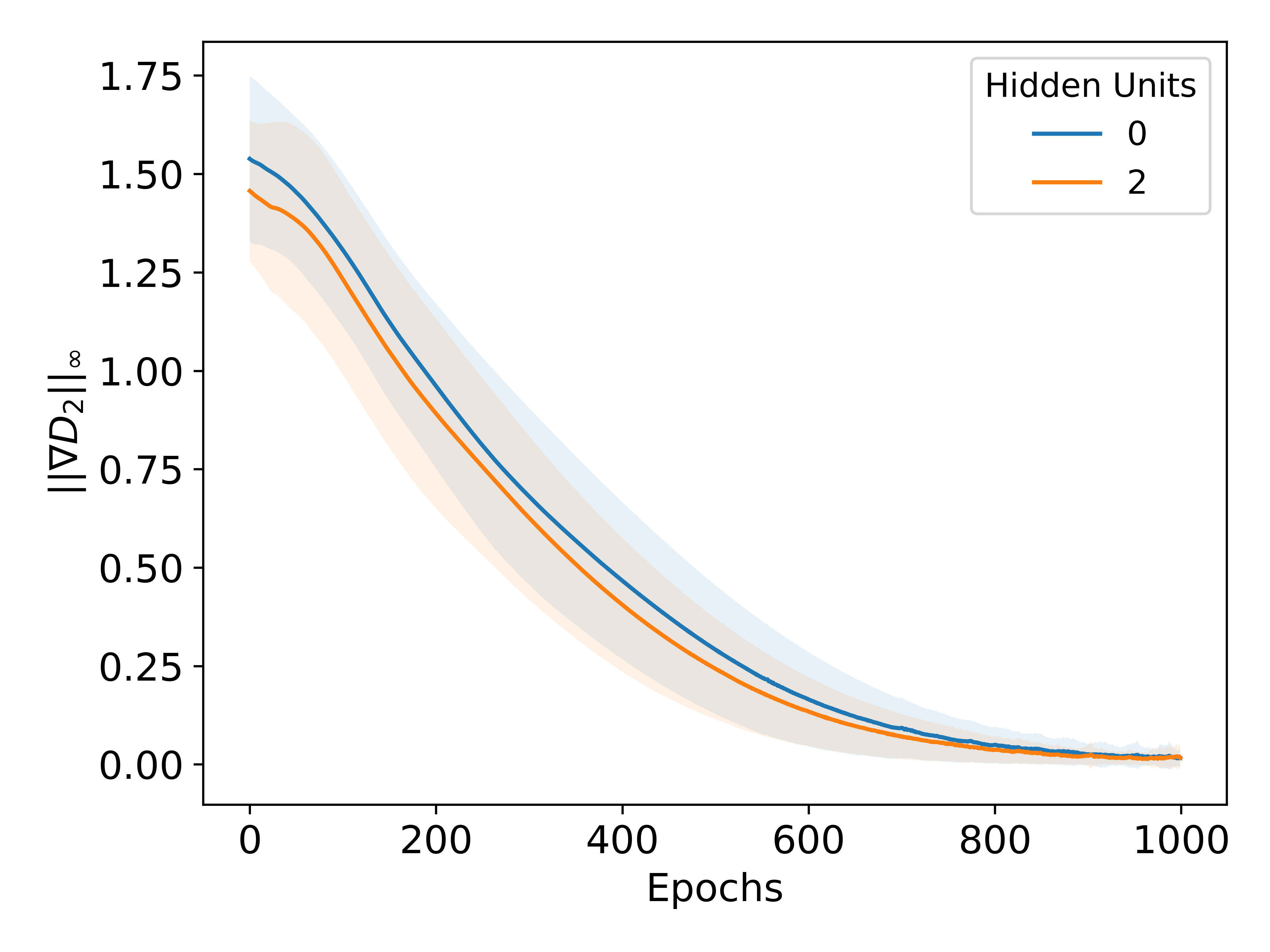}%
  }
  \caption{We monitored the $\infty$-norm of the gradients as we trained Quantum Boltzmann Machines with four visible units. The target states are random thermal states using a three-local Hamiltonian with $\tau=10$. The solid lines represent the average epoch value and the width of the shaded area two standard deviations over an ensemble of 50 different target states. (a) The effect of L2 regularization on the $\infty$-norm of the gradients of a Quantum Boltzmann Machine with two hidden units. (b) Evolution of the $\infty$-norm of the gradients our model in relation to the number of hidden units.}
  
  \label{bm_l2_grad_plot}
 \end{figure}
\FloatBarrier

\subsection{Quantum Boltzmann Machine: High-Temperature State}
\label{bm_high_temp}
Using the same experimental formulation outlined in Section~\ref{sec:ham}, we explored our ability to learn an ensemble of higher temperature states using QBMs. For this class of states, we only managed to achieve a $66\%$ average fidelity after $200$ epochs for a model with four visible and zero or two hidden units, with a starting fidelity of $61\%$ as shown in Figure \ref{bm_high_temp_plot}. Although we have the ability to learn these class of states, we do not see a significant increase in fidelity. This can be the QBMs reaching their approximation capacity.  

\begin{figure}[htbp]
  \centering%
  \subfloat[Loss]{%
    \includegraphics[width=0.48\textwidth]{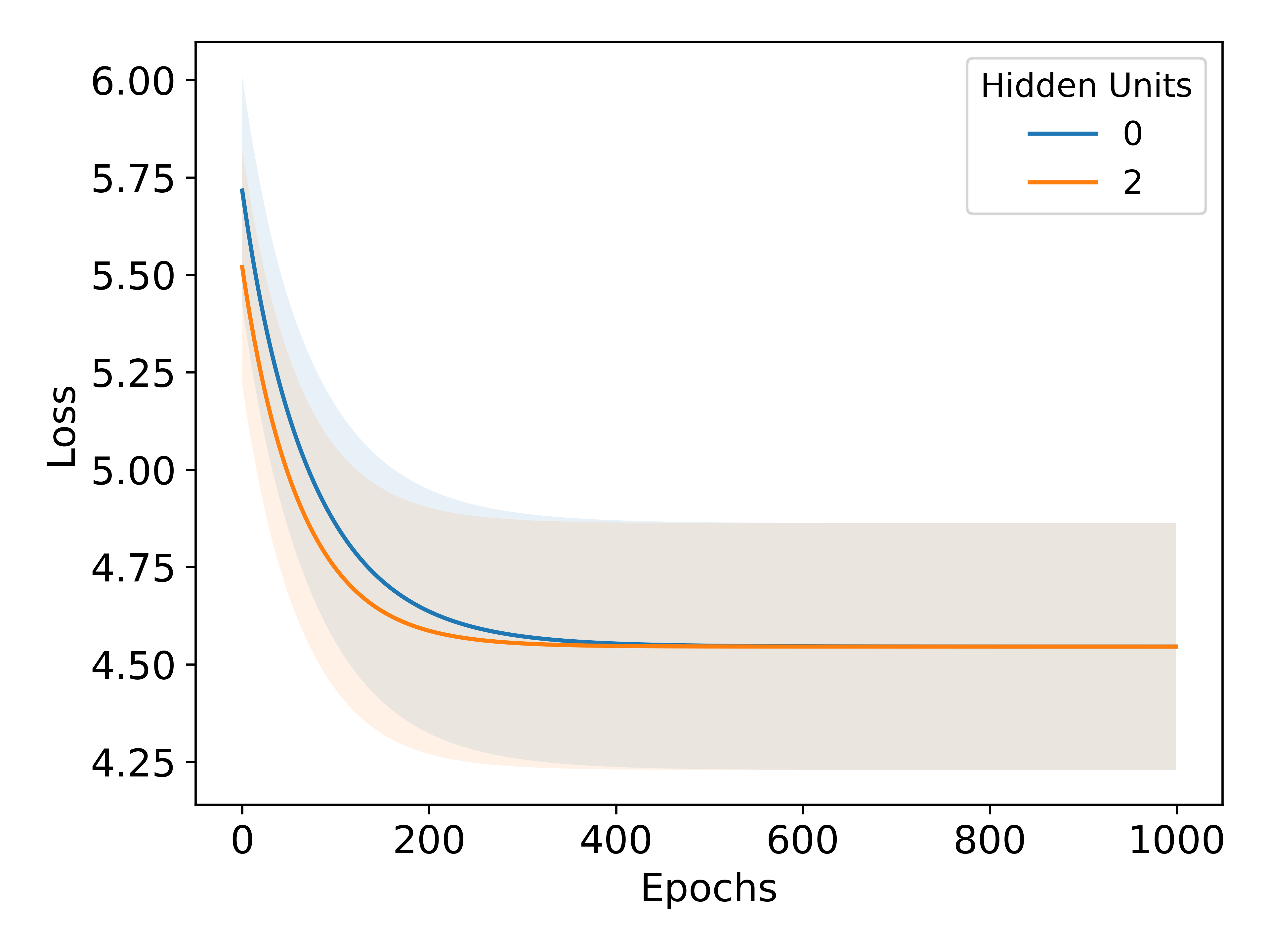}%
  }%
    \subfloat[Fidelity]{%
    \includegraphics[width=0.48\textwidth]{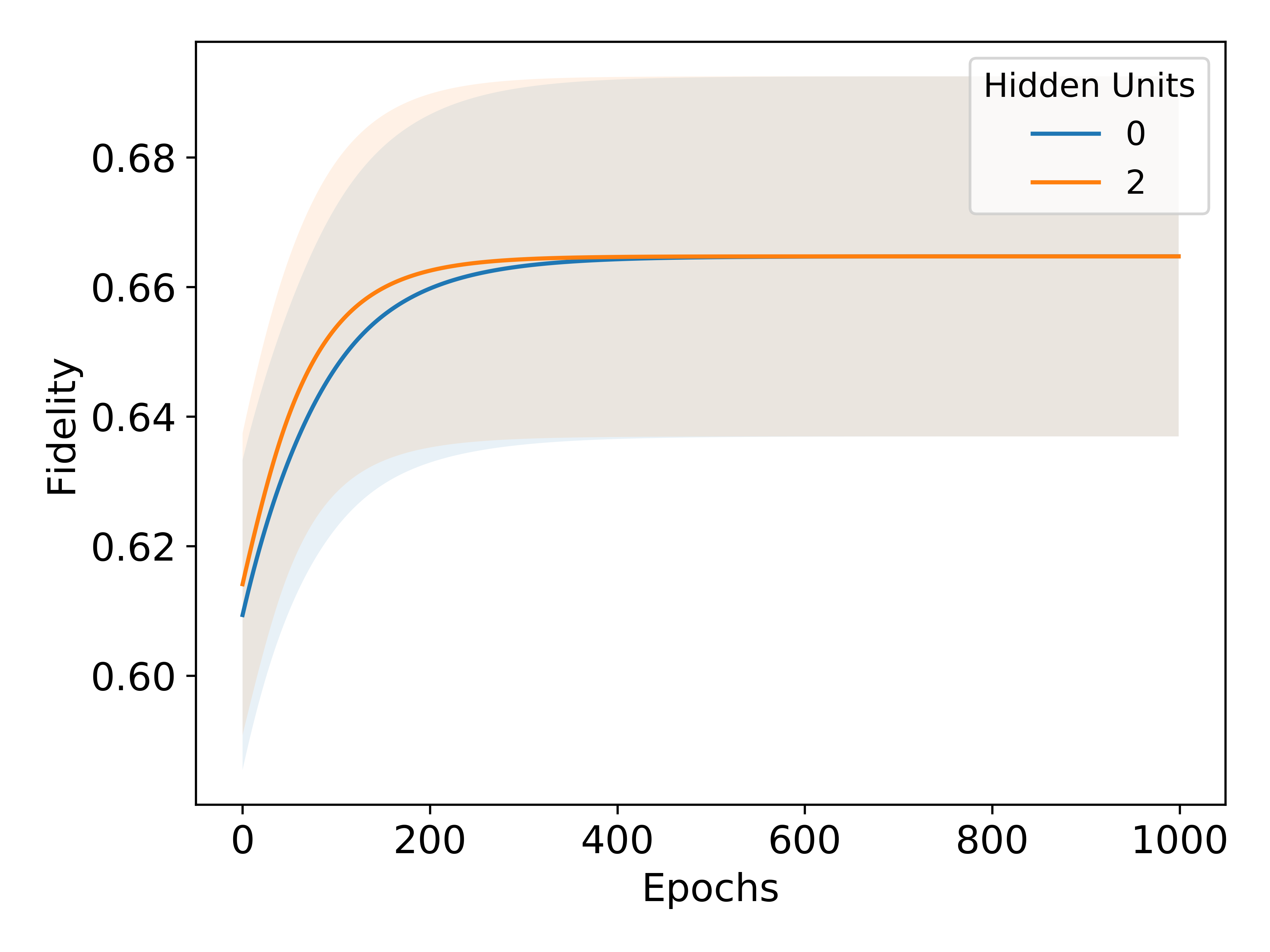}%
  }
  \caption{We trained the Quantum Boltzmann Machine with four visible units and an increasing number of hidden units. The target states are random thermal states using a three-local Hamiltonian with $\tau=5$. The solid lines represent the average epoch value and the width of the shaded area two standard deviations over an ensemble of 50 different target states. (a) Training loss (i.e. R\'enyi Divergence) and (b) Fidelity between the target states and our model.}
  \label{bm_high_temp_plot}
\end{figure}
\FloatBarrier

\end{document}